\DeclareMathOperator*{\argmax}{arg\,max}
\let\hbar\relax
\newtheorem{lemma}{Lemma}
\newtheorem{proposition}{Proposition}
\newtheorem{corollary}{Corollary}
\newtheorem{example}{Example}
\newtheorem{definition}{Definition}
\newtheorem{result}{Result}
\begin{document}

\author{Marcos R. Fernandes$^{ (1)}$}
\address[1]{University of São Paulo, Department of Economics, e-mail: \textsf{mrf.ross@gmail.com}.}
% \email{}
\dedicatory{This version: February 2023}
%\dedicatory{This version: November, 2018 \\ Latest version (online): \href{https://drive.google.com/file/d/1p2c7FgLlmct0W23JMWKwCXJh5XieJAtI/view?usp=sharing}{click here}}
%\\ Online version: click here

%================================================================
\begin{abstract}
In this study, I present a theoretical social learning model to investigate how confirmation bias affects opinions when agents exchange information over a social network. Hence, besides exchanging opinions with friends, agents observe a public sequence of potentially ambiguous signals and interpret it according to a rule that includes confirmation bias. First, this study shows that regardless of level of ambiguity both for people or networked society, only two types of opinions can be formed, and both are biased. However, one opinion type is less biased than the other depending on the state of the world. The size of both biases depends on the ambiguity level and relative magnitude of the state and confirmation biases. Hence, long-run learning is not attained even when people impartially interpret ambiguity. Finally, analytically confirming the probability of emergence of the less-biased consensus when people are connected and have different priors is difficult. Hence, I used simulations to analyze its determinants and found three main results: i) some network topologies are more conducive to consensus efficiency, ii) some degree of partisanship enhances consensus efficiency even under confirmation bias and iii) open-mindedness (i.e. when partisans agree to exchange opinions with opposing partisans) might inhibit efficiency in some cases.
\\

\textit{JEL Classification:}  C11, D83, D85

\vspace{2mm}
\textit{Keywords:} Social Networks, Social Learning, Misinformation, Confirmation Bias.

\end{abstract}
%================================================================
\title{Confirmation bias in social networks}

\thanks{I thank Sandro Brusco, Pradeep Dubey, Ben Golub, Laura Karpuska, Ehud Lehrer, Ting Liu, Mihai Manea, Alejandro Melo, Ron Peretz, Max Silva, Michael Sinkey, Eilon Solan, Troy Tassier, Yair Tauman, and two anonymous referees for helpful comments and suggestions. I also thank all participants of the 29th International Conference on Game Theory (Stony Brook), 23rd Coalition Theory Network workshop (Maastricht), 4th Annual Conference on Network Science and Economics (Nashville), 44th Eastern Economic Association conference (Boston) and the seminar participants at UFABC, Fordham University, Stony Brook Center for Behavioral Political Economy, INSPER and University of São Paulo. I gratefully acknowledge the Department of Economics of Stony Brook University (Juan Carlos Conesa), Stony Brook Graduate Students Organization and the 4th Network Science in Economics conference series (Myrna Wooders) for financial support.}
%\date{\today}

\maketitle

%================================================================
\section{Introduction}
\label{sec:intro}

People form opinions on various economic, political, social, and health issues based on information from both the media and people they trust (e.g. friends, coworkers, family, experts, etc). This information acquisition process usually occurs when the issue discussed has no clear-cut \textit{right/wrong} or \textit{true/false} distinction or when the available information cannot be easily understood. Consulting people's opinions, in this case, is an appealing and easy way to gather information. For many people, social networks then become primary tool to stay informed. Thus, understanding how beliefs depend on how agents perceive and process information is vital. In this study, I examine how opinions are affected by confirmation bias in a networked environment.

In psychology, confirmation bias denotes the interpretation of evidence in ways consistent with existing beliefs (\cite{Nickerson1998}, \cite{molden2008}). This can be done in different ways, like restricting attention to favored hypothesis, disregarding evidence that could falsify the current worldview or overvaluing positive confirmatory instances. In all cases, people restrict attention to a single hypothesis and fail to carefully consider alternatives.

In social psychology, people interpret evidence when they are ambiguous (i.e. when evidence is conflicting). People may misinterpret scientists and experts after ambiguous announcements. \cite{simonovic2022psychological} highlighted that when WHO declared the COVID-19 outbreak a global pandemic in 2020, experts did not precisely understand the extent and nature of the health risks or how disease transmission can be prevented. Hence, WHO provided \emph{conflicting recommendations} to the public on whether wearing a mask was necessary. Other medical authorities also provided conflicting recommendations to the public regarding medicines and vaccines' efficacy. Conflicting evidences may have even contributed to people making their own assessment about the problem.

While friends may help people to aggregate information in some cases, in other cases, people may expose themselves to others who that rely on their own worldview to derive information from ambiguous evidence. In these cases, efficient aggregation of information is not guaranteed, and I investigate how opinions are influenced by people's biases.

To analyze this phenomenon, I consider a society where agents are interested to learn the underlying state $\theta\in \Theta = [0,1]$. For instance, the underlying state $\theta$ might represent the efficacy of a new vaccine (e.g. from 0 to 1). All agents have prior beliefs about the vaccine's efficacy and observe a sequence of public signals, one at each date $t$. Public signals may be (i) informative or (ii) ambiguous. Informative signals are binary variables indicated as $1$ if state on the right side of the 0-1 spectrum are more likely (i.e. if vaccine's efficacy is high) and $0$ if the states on the left side of the 0-1 spectrum are more likely (i.e. if vaccine's efficacy is low). Hence, as signals realization does not convey full information on the underlying state, agents can only learn the true state (vaccine's efficacy) asymptotically. This is in the spirit of ongoing learning, where information accumulates through experience. In the case of ambiguous signals, agents are allowed to interpret these signals using a fairly general randomization rule proposed by \cite{fhj2018} that accounts for confirmation bias. Hence, the interpretation of the ambiguous signal received at time $t$ is influenced, to a greater or lesser extent, by the likelihood of 0 and 1 at time $t-1$ (see more details below). This captures situations wherein people feel impelled to explain ambiguous evidence about a particular issue.

As in \cite{jmst2012}, besides learning from public signals, agents exchange information through a social network. At the beginning of every period $t$, the public signal is realized. Thus, each agent first \textit{interprets} signals (if ambiguous) using the randomization rule, \textit{stores} the signal and \textit{computes} the Bayesian posterior (opinion and precision). Every agent then sets their \textit{final} opinions and precisions to be a linear combination of the Bayesian posterior opinions and precisions computed with the interpreted signal and opinions and precisions of friends (e.g., formal definition of neighbors in subsection \ref{subsec:netstr}) they met in the period before. Social connectivity among agents remains fixed over time and strong connectivity is assumed (i.e., all agents are exposed to all other agents either through a directed or undirected path in the social networks). 

Hence, despite the level of ambiguity and both in the case of a single individual or a connected society, only two types of opinions can emerge, and both are biased: left- and right-biased opinions. However, one type of opinion is less biased than the other depending on the underlying state. Less-biased opinion is only guaranteed to emerge under a favorable combination of sufficiently low ambiguity and sufficiently pronounced states. If this condition holds, I show that the less-biased opinion is attained with probability 1. Moreover, long-run learning is not attained even if people are impartial when they interpret ambiguous signals (i.e., when interpreting evidence uniformly at random instead of using their own opinions). Those results contrast with those by \cite{Rabin1999} and \cite{fhj2018}, who suggest that long-run learning occurs with a positive probability and that impartiality helps in learning the state. Furthermore, both the network effect presented here and signals realization, reinforce the interpreting dispute (\textit{tug-of-war}) as people may have their own interpretation biases reinforced or attenuated by other agents.

Finally, confirming the probability of emergence of the less-biased consensus analytically is difficult, and I use Monte Carlo simulations to show its determinants. The presence of partisan agents (i.e., agents with skewed initial priors) in societies suffering from confirmatory bias have two main effects. (i) When the degree of partisanship is low, partisanship helps to counter the realization of initial misleading signals (e.g., realization of a 0 when $\theta\geq0.5)$. Thus, low partisanship increases the odds of reaching less-biased consensus. (ii) When the degree of partisanship is high, partisans exacerbate misinterpretation of signals. Thus, high partisanship reduces the odds of reaching less-biased consensus. Moreover, I also show that open-mindedness of partisan agents (i.e., when partisans agree to exchange opinions with partisans with polar opposite beliefs) might reduce the odds of reaching less-biased consensus in some network structures.
 
While this work does not generalize theoretical results for other conjugate families and numerical results for other network structures, both methods and cases explored are sufficiently general to capture important aspects of real-world networks. In every period, public signals realized and observed by all agents may represent information reported by sources including media outlets and international organizations. The level of ambiguity of the informational content reported by them, measured by a parameter $\mu \in (0,1)$, represents the fraction of instances where a signal simultaneously conveys two conflicting meanings and agents feel impelled to interpret them. Parameters of the signal interpretation function dictate the interpretation behavior of every agent.

This work is structured as follows. Section 2 provides a brief literature review and highlights contributions. Section 3 describes a framework for updating beliefs when agents communicate over social networks with ambiguous signals and present main theoretical results. Section 4 describes a simulation exercise when priors heterogeneity (partisanship) is assumed. Section 5 concludes the study. Moreover, six appendices are available. Appendices A and B contain primitives of the Beta-Bernoulli conjugate family employed in this work. Appendix C contains proofs of auxiliary results, while Appendix D presents proofs of main results. Appendices E and F show simulation statistics and present regression robustness.

%============================================================
\section{Literature review and contribution}
\label{sec:litrev}

Considerable empirical evidence on social psychology supports the idea that confirmation bias is extensive and appears in many ways. Most studies in the field confirm the human tendency of casting doubt on information that conflicts with preexisting beliefs and confirming preexisting beliefs when exposed to ambiguous information (see \cite{Nickerson1998}). However, this selectivity in the acquisition and use of evidence occurs without intending to treat evidence in a biased way. \cite{Molden2004,molden2008} note that both \textit{vagueness} (when evidence is weak) and \textit{ambiguity} (when evidence is conflicting and open to interpretations) induce interpretation. Conversely, \cite{Furnham1995} and \cite{furnham2013} review literature on the subject and report evidence that the way people perceive and process information about ambiguous situations is related to their degree of ambiguity tolerance (i.e., individual differences in cognitive reaction to stimuli considered ambiguous). Therefore, ambiguity tolerance refers to underlying psychological differences that impel people to process, interpret, and react differently to ambiguous information.\footnote{More recently, scholars have considered the concept of tolerance of ambiguity as a reflection of the contemporary definition of ambiguity proposed by \cite{ellsberg1961}. For a good coverage of the classic literature on ambiguity aversion, see \cite{Gilboa1989}, \cite{Gilboa1993}, and \cite{Epstein2007}.}

Thus, confirmation bias may oppose standard Bayesian updating processes as agents scrutinize signals in line with their worldviews. Some examples of decision-making models that account for Bayesian updating deviation are \cite{Hellman1970}, \cite{Rabin1999}, \cite{Wilson2014}, \cite{fhj2018}, \cite{sikder2020minimalistic} and \cite{buechel2022misinformation}. 

Studies by \cite{Rabin1999}, \cite{fhj2018}, \cite{sikder2020minimalistic} and \cite{buechel2022misinformation} are the closest references to this work, in both spirit and results. \cite{Rabin1999}, showed that signals believed as less likely are misinterpreted with an exogenous probability. \cite{fhj2018} stated that ambiguous signals, in its simplest version with binary states, are produced with a certain probability, and agents interpret those before conducting the Bayesian update. Interpreting these signals requires agents to use three methods that differ in intensity with which agents conform their interpretation with their current worldview. However, both works do not consider network communication.

\cite{sikder2020minimalistic} employ a slightly modified version of \cite{Rabin1999} to a networked environment (mostly focused on regular networks), where agents synchronously share the full set of signals with their neighbors. However, biased agents reject information incongruent with their preexisting beliefs, reduce the weight they place on other agents, and place the remaining weight on an external positively oriented "ghost" node, creating a polarization of unbiased agents in the steady state. In this study, I assume a general (connected) network structure among agents and allow them to set their final beliefs to be a linear combination of the Bayesian posterior and opinions of their neighbors as in \cite{jmst2012}, regardless of their biases and signals received. A key difference relative to \cite{sikder2020minimalistic} is that my modeling strategy allows me to discuss the relative importance of the learning parameters, network structure, and connections heterophily (open-mindedness of partisans) in determining the probability of reaching the less-biased consensus.

\cite{buechel2022misinformation} allow different types of signals to have different transmission capacities (i.e., asymmetric decay factor applied to positive and negative signals) when signals are shared in different networks, and show that for a society to aggregate information efficiently, different asymmetries must be balanced and that an agent's ratio of centralities between the two networks must be moderate compared to the ratio of centrality concentration in the two networks. In my model, sharing asymmetries and misinformation is not considered even when partisan agents remain equally balanced and central. However, this work shares an interesting feature with my model relative to how equality of centralities is critical for reducing misinformation.

% Differently from all works above, I move away from the binary state space case and allow states to be continuously distributed over the unit interval according to a Beta distribution. Binary signals are drawn from a Bernoulli distribution and ambiguous signals appear with some exogenous positive probability. All agents interpret ambiguous signals as in \cite{fhj2018}, but I assume that individuals communicate through a network where confirmation biases parameters differ in intensity across agents.

% Thus, there are five important features of my modeling strategy that distinguish my results from the ones presented by the works mentioned above: (i) there is always consensus, but long-run learning is not attained even when ambiguity level is low, (ii) only two consensuses may emerge, and one is less biased than the other, (iii) an impartial interpretation behavior (e.g. interpreting ambiguous signals as 1 and 0 uniformly at random) is not sufficient to overcome confirmatory bias, (iv) long-run equilibrium is characterized as a function of the learning parameters (including the confirmation biases) and the network structure, and (v) some network structures, in expected terms, increase the odds of reaching the less biased consensus in the presence of partisanship (initial heterogeneous priors).

This work is also related to the literature of \textit{bounded confidence} in networks. Overall, this literature focuses on models of social learning wherein agents overvalue the opinion of friends with similar beliefs. \cite{HegKrau2002}, \cite{HegKrau2005}, \cite{Dandekar2013}, \cite{Mao2018}, and \cite{gallo2020social} provide examples of this phenomenon. While bounded confidence involves the tendency to \textit{conform} with the majority or leading people, confirmation bias is a failure in the Bayesian updating process. From this perspective, modeling confirmatory bias as either a bounded confidence or a failure in the Bayesian update has different consequences. On the one hand, bounded confidence presumes that the connections between agents are broken (or temporarily interrupted) according to opinions distance. Hence, nontrivial changes are implied in the network topology. In this literature, long-run polarization naturally occurs under bounded confidence. Polarization, hence, is a natural product of the initial heterophily of opinions in the system and eventual deletion of links. On the other hand, modeling confirmatory bias as a Bayesian update failure is inconsequential to the network topology and under the strong connectivity assumption leads to a bias (misinformation) that can be analytically studied.

Finally, numerous works on \textit{social learning} assumed bounded and full rationality. Bayesian social learning literature (fully rational agents) mainly focuses on  formulating stylized games with incomplete information and characterizing its equilibria. Specifically, rather than considering complex and repeated interactions, most works focus on environments where agents are myopic or interact only once (\cite{Banerjee1992}, \cite{balagoyal1998}, \cite{Bala2001}, \cite{bf2004}, \cite{adlo2011}).\footnote{For an overview of recent research on learning in social networks, see \cite{Acemoglu2011,golub2017learning,grabisch2020survey}.}

Non-bayesian learning (bounded rational agents) literature focuses on studying generalizations of the seminal \cite{DeGroot1974} model. \cite{DeMarzo2003} show that consensus result does not rely on the social weighting matrix being a stationary matrix. \cite{aceozdpar2010} consider a random meeting (Poisson) model and characterize how the presence of forceful agents (i.e., agents who influence others disproportionately and hardly revise their beliefs) prevents information aggregation. Conversely, \cite{Golub2010} show that convergence holds if (and only if) the influence of the most influential agent vanishes as society grows unboundedly. \cite{jmst2012} is the first study to consider the possibility of constant arrival of informative signals in every period in networked environments. In their study, the update rule that sets the final belief as a linear combination of the Bayesian posterior and the  neighbors' opinions is an efficient alternative to the complicated task of implementing Bayesian update in networks. Finally, similar to this work in modeling strategy, \cite{AzzFer2018} investigate how the structure of social networks and the presence of fake news affect the degree of polarization and misinformation. In their study, i) their model considers the presence of \textit{bots} whose sole purpose is to deceive other agents, and that ii) connectivity among all agents evolves stochastically. Those two features combined are main drivers of misinformation and polarization cycles. Herein, the main source of bias derives from confirmation bias and connectivity among agents is assumed fixed. Thus, this study focuses on understanding how misinformation depends on both network structure and how agents interpret ambiguous signals.

%============================================================
\section{The model}
\label{sec:model}

\subsection*{Notation:} All vectors are considered column vectors, unless stated otherwise. Given a vector $v \in \mathbb{R}^{n}$, I denote by $v_i$ its $i$-th entry. When $v_i \geq 0$ for all entries, I write $v\geq0$. Moreover, I define $v^{\top}$ as the transpose of the vector $x$. Hence, the inner (scalar) product of two vectors $x,y \in \mathbb{R}^{n}$ is denoted by $x^{\top}y$. I denote by $\mathbf{1}$ the vector with all entries equal to 1. A matrix $W$ is considered to have size $m \times n$ whenever $W$ has exactly $m$ rows and $n$ columns. Moreover, whenever $m = n$, $W$ is called a square matrix of size $n$. The identity matrix of size $n$ is denoted by $\mathbb{I}$. For a matrix $W$, $W_{ij}$ denotes the entry in the $i$-th row and $j$-th column. The notation $W_{ij}^{k}$ is used to denote the entry in the $i$-th row and $j$-th column of the matrix $W^{k}$, i.e. the matrix $W$ raised to the power $k$. Finally, a vector $v$ is said to be a stochastic vector when $v\geq0$ and $\sum_{i}v_i = 1$. A square matrix $W$ is said to be a (row) stochastic matrix when each row of $W$ is a stochastic vector.

% ========================================================

\subsection{Network structure}
\label{subsec:netstr}
Connectivity among agents in a network is described by a {directed graph} $G = \left(N, g\right)$, where $N = \{1,2,\dots,n\}$ is the set of agents, fixed over time, and $g$ is a binary $n\, \times\, n$ adjacency (or incidence) matrix, also fixed over time. Each element $g_{ij}$ in the directed-graph represents the connection between agents $i$ and $j$. More precisely, $g_{ij} = 1$ if individual $i$ is paying attention to (i.e. receiving information from) individual $j$, and 0 if otherwise. As the graph is directed, some agents pay attention to others who are not necessarily reciprocating (i.e., $g_{ij} \neq g_{ji}$). The {out-neighborhood} of any agent $i$ is the set of agents that $i$ is receiving information from, and is denoted by $N_{i}^{out} = \{j \, | \, g_{ij} =1\}$. Similarly, the {in-neighborhood} of any agent $i$ is denoted by $N_{i}^{in} = \{j \, | \, g_{ji} =1\}$, represents the set of agents that are receiving information from $i$. In undirected networks, $N_{i}^{in} = N_{i}^{out} = d_i$, where $d_i$ is the number of neighbors agent $i$ has, also known as degree centrality of agent $i$. Thus, the term $\hat{g}_{ij}=\frac{g_{ij}}{|N_{i}^{out}|} \in [0,1]$ represents the weight that agent $i$ gives to information received from their out-neighbor $j$. A network is considered \emph{regular} if every node has the same degree of centrality, and that a network is \emph{complete} if every node is connected with all other nodes. Finally, a directed path in $G$ from agent $i$ to agent $j$ is defined as a sequence of agents beginning with $i$ and ending with $j$ such that each agent is a neighbor of the next agent in the sequence. A social network is \textit{strongly connected} if a directed path from each agent to any other agent exists.

% =====================
\subsection{Signals, initial beliefs and opinions.}
\label{subsec:belsig}
Let $\Theta = [0,1]$ to denote the set of possible states of the world. For instance, one may find useful to interpret $\Theta$ as the effectiveness of a new vaccine, such that a state close to 0 means that the vaccine has low efficacy, whereas a state close to 1 means that vaccine has high efficacy. 

Conditional on the state of the world $\theta$, every agent observes a sequence of public signals $s_t$, one at each date $t \in \{1,2,\dots\}$. Public signals lie in the set $S = \{1, 0, a\}$. Considering the example of the vaccine's efficacy given above, a signal $1$ is evidence that the new vaccine can prevent people from severe illness, a signal $0$ is evidence of no efficacy, and a signal $a$ is \textit{ambiguous} and open to idiosyncratic interpretation (Section~\ref{subsec:signals} explains how agents deal with those signals). Signals are independent over time, conditional on the state. The probability that a signal is ambiguous is $\mu \in (0,1)$. Hence, the signal conveys informational aspects that could lead one to interpret as either $1$ or $0$. With the remaining probability ($1-\mu$), the information provided by the signal is unambiguous. In any state $\theta \in \Theta$, the probability that an unambiguous signal is 1 is $\theta \in [0,1]$ and 0 with probability $1-\theta$. The signal structure is depicted in the Figure~\ref{fig:signals}.

\begin{figure}[H]
\centering
\includegraphics[page=2,scale=0.6, trim = 11cm 5cm 11cm 5cm, clip]{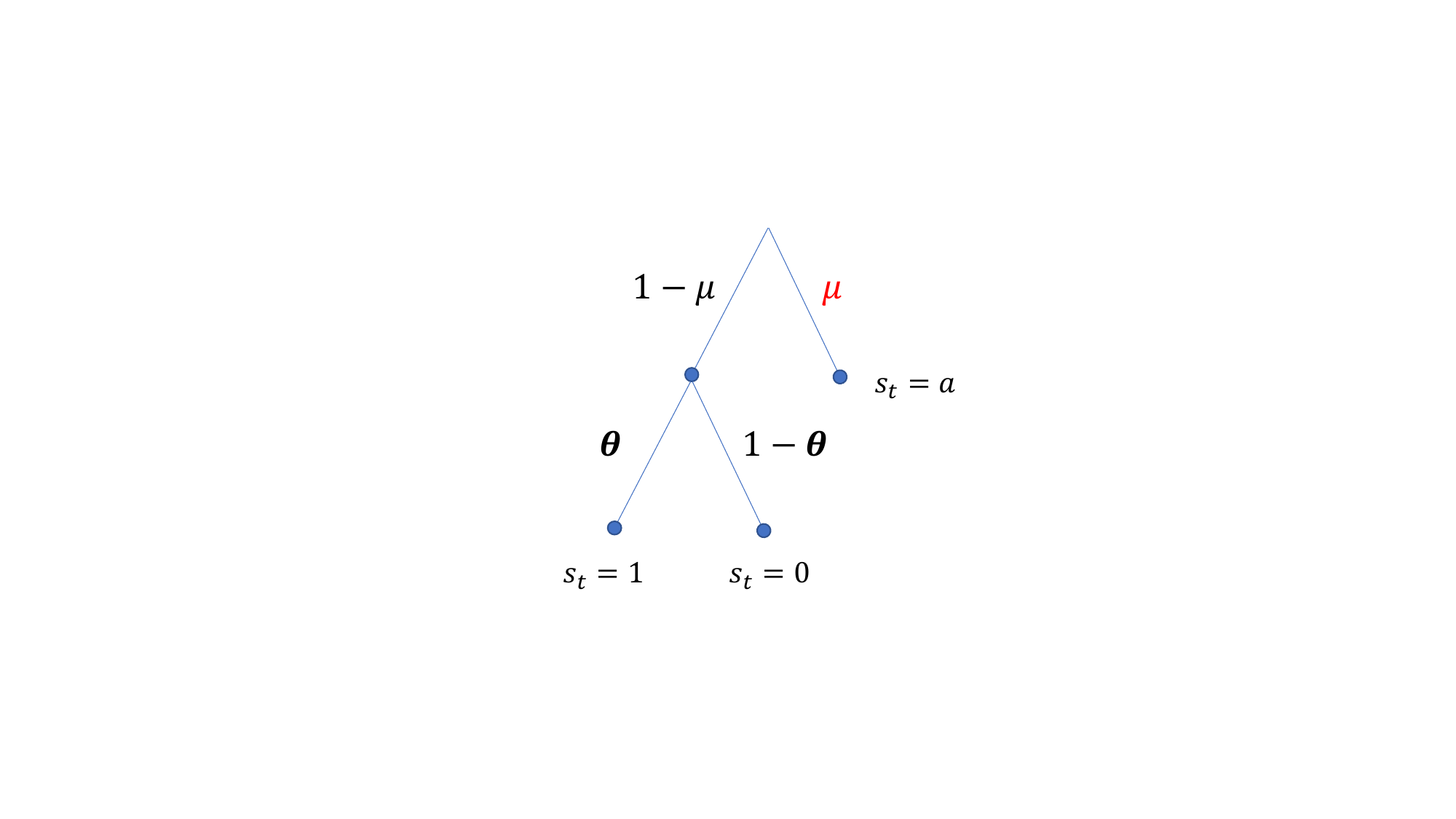}
\caption{Signals structure}
\label{fig:signals}
\end{figure}

Each agent $i$ in this society is assumed to start with an \textit{initial belief} about an underlying state $f_{i,0}(\theta) \in \Delta\Theta$, represented by a Beta probability distribution over the set $\Theta$ with shape parameters $\alpha_{i,0}, \beta_{i,0} \geq 1$ and defined as follows:

\begin{equation}
f_{i,0}\left(\theta\right) = 
\begin{cases}
\dfrac{\Gamma\left(\alpha_{i,0}+\beta_{i,0}\right)}{\Gamma\left(\alpha_{i,0}\right)\Gamma\left(\beta_{i,0}\right)} \, \theta^{\alpha_{i,0}-1} (1-\theta)^{\beta_{i,0}-1} & \text{, for } \, 0<\theta<1 \\
0 & \text{, otherwise,}
\end{cases}
\label{eq:betapdf}
\end{equation} where $\Gamma(\cdot) $ is a Gamma function and the ratio of Gamma functions in the expression above is a normalization constant that ensures that the total probability integrates to 1.

Given prior beliefs and signals, \textit{opinion} of agent $i$ at time $t$ is denoted by 
\[
y_{i,t}=\mathbb{E}\left[\theta | s_{i}^{t}\right]=\frac{\alpha_{i,t}}{\alpha_{i,t}+\beta_{i,t}},
\] where $s_{i}^{t}$ is the history of signals received and interpreted by agent $i$ up until time $t$.\footnote{Appendix A discusses the primitives of the Beta distribution and the Beta-Bernoulli conjugate family. For tractability, the opinion is intended as a real number that summarizes the entire belief. Hence, one can understand the opinion of an agent as the Bayesian estimator of $\theta$ that minimizes the mean squared error. One could also assume that the opinion of any agent $i$ at time $t$ could also be the Bayesian estimator of $\theta$ which minimizes the absolute error. As the mean, mode, and median of the Beta distribution are asymptotically equivalent, the functional form is irrelevant for the results.} The parameters' update rule will be described with more details in section 3.4.

% }

\subsection{Interpretation of ambiguous signals}
\label{subsec:signals}

Although ambiguous signals are uninformative about the state and should be disregarded from a pure Bayesian perspective, agents are constrained to interpret ambiguous signals. This constraint captures the idea that, in some instances, people react to ambiguous pieces of information. They fail to perceive the lack of informational content of signals and end up using their prior worldview to derive meaning from them. 

For the interpretation of ambiguous signals, I use a randomization rule proposed in \cite{fhj2018}, adapted here for some technical idiosyncrasies. Hence, with probability $\gamma_{i} \in [\frac{1}{2},1]$ agent $i$ conforms with his posterior at time $t-1$ and with probability $1-\gamma_{i}$ goes against it. Essentially, with probability
\begin{equation}
\label{eq:randomization}
\psi_{i,t} = \gamma_{i} \, \mathbbm{1}\{y_{i,t-1}\geq0.5\} + (1-\gamma_{i}) \, \mathbbm{1}\{y_{i,t-1}<0.5\}
\end{equation} agent $i$ interprets the ambiguous signals as 1 and with the remaining probability $(1-\psi_{i,t})$ interprets the ambiguous signals as 0 at time $t$.\footnote{From Appendix B, note that, since mean and mode of the Beta distribution are very close for different choices of $(\alpha,\beta)$ and are asymptotically equivalent, using $y_{i,t-1}$ (the mean and mode) to interpret public signals in Equation \eqref{eq:randomization} is neutral to all results.}

Therefore, parameter $\gamma_{i}$ represents the intensity of the confirmatory bias of an individual $i$. I only assume $\gamma_{i}$ to be independent of opinion $y_{i,t}$ for any $i\in N$, history of opinions of all agents, and of all other parameters in this model. From this randomization rule, three cases of interest are available.
\begin{definition}
An individual $i \in N$ 
\begin{enumerate}

\item is \textbf{impartial} if $\gamma_{i}=\frac{1}{2}$,

\item has \textbf{confirmatory tendency} if $\frac{1}{2} < \gamma_{i} < 1$,

\item is \textbf{fully biased}  (biased for short)  if $\gamma_{i}=1$.

\end{enumerate}
\end{definition}

Hence, the signal interpretation functions, $s_{t}^{(0)}$ and $s_{t}^{(1)}$, for each individual at any point in time can be generally defined as follows:
\begin{align}
s_{i,t}^{(0)} &= \mathbbm{1}\{s_{t}=0\} +  \mathbbm{1}\{s_{t}=a\}\mathbbm{1}\{u_{t} > \psi_{i,t}\} \label{eq:signalinterpret0} \\
s_{i,t}^{(1)} &= \mathbbm{1}\{s_{t}=1\} +  \mathbbm{1}\{s_{t}=a\}\mathbbm{1}\{u_{t} \leq \psi_{i,t}\}, \label{eq:signalinterpret1}
\end{align} where $\psi_{i,t}$ is as defined in Equation \eqref{eq:randomization}, $s_{t}$ is the publicly observed signal and $u_{t}$ is the realization of a continuous $U\left[0,1\right]$ random variable at time $t$ simply used to break the tie. Draws $\{u_{t}\}$ are independent across time and also independent of all other random variables in this model. Hence, the signal interpretation functions are basically transforming observed signals $\{s_{t}\}_{t=1}^{\infty}$ into binary interpretations. When the realized public signal is $s_{t} = 1$ ($s_{t} = 0$), all agents undoubtedly interpret it as 1 (as 0) and set $s_{t}^{(0)} = 0$ and $s_{t}^{(1)}=1$ (set $s_{t}^{(0)} = 1$ and $s_{t}^{(1)}=0$). However, when the realized public signal is ambiguous (i.e., $s_{t} = a$), agents use their prior information (summarized by $y_{i,t-1}$) to categorize the signal as either 0 or 1, as per Equation \eqref{eq:randomization}. Figure~\ref{fig:signals-interp} shows a more detailed description of the signals interpretation scheme. Appendix A shows details on the signals likelihood function.

\begin{figure}[H]
\centering
\includegraphics[page=1,scale=0.53, trim = 8cm 4cm 5cm 3cm, clip]{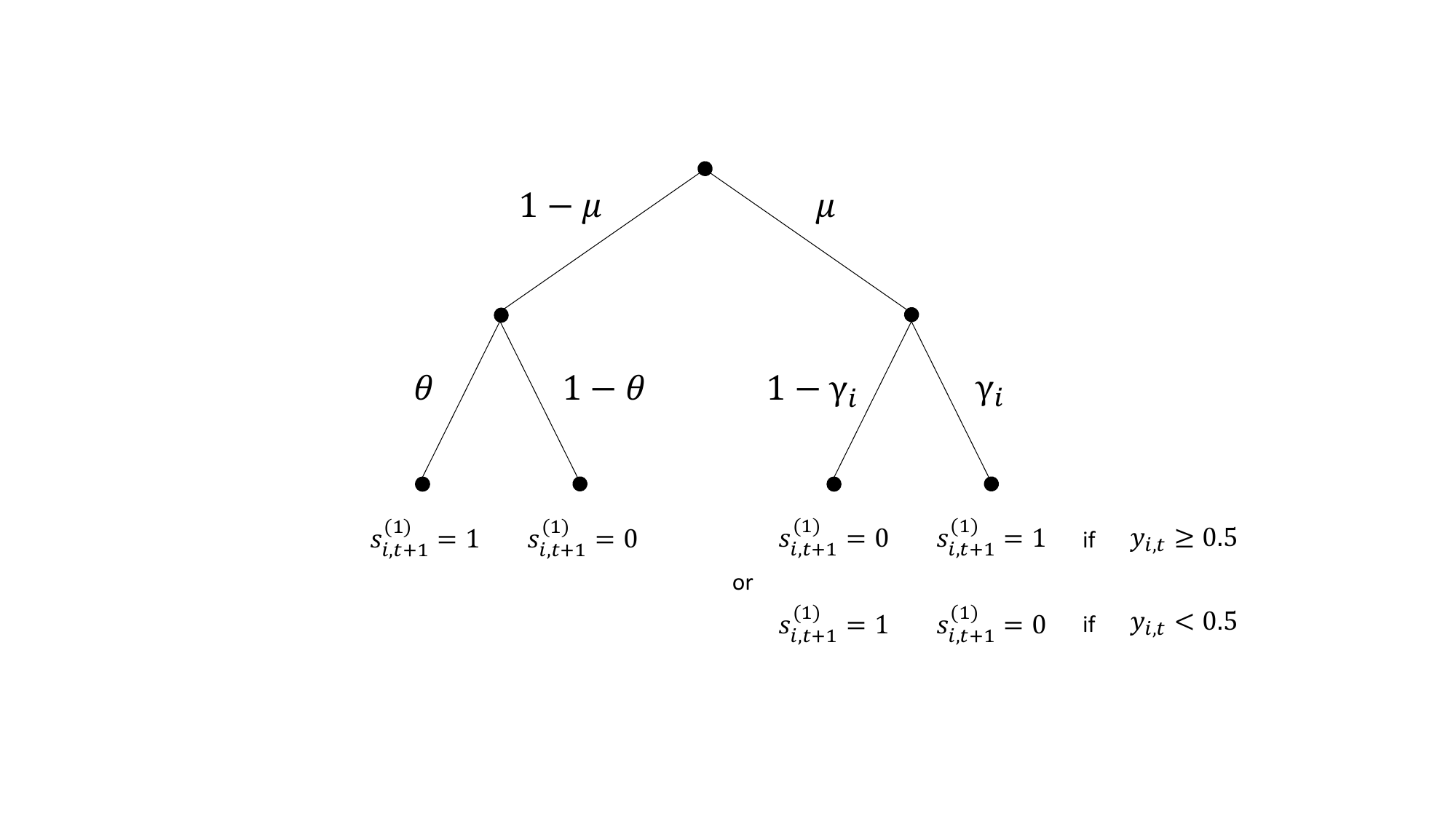}
\caption{Signals interpretation by agent $i$ upon receiving a public signal $s_{t+1}$}
\label{fig:signals-interp}
\end{figure}

%---------------------------------------------------------------------------------------
\subsection{Belief evolution}
\label{subsec:belief}
Agents are assumed to update their beliefs based on public signals $s_{t}\in S=\{1,0,a\}$ and on the influence of friends in their social network. 

Hence, at the beginning of period $t$, a public signal is realized and signal $s_{t}$ is observed by agent $i$. After observing the public signal $s_{t}$, agent $i$ computes his posterior in a standard Bayesian fashion. Following \cite{jmst2012}, I assume that the updated parameters $\alpha$ and $\beta$ will be a convex combination between the parameters $\alpha$ and $\beta$ of his Bayesian posterior and the weighted average of his neighbors' parameters.\footnote{One may alternatively interpret agents to share opinions (mean) and precisions (variance) with each other rather than sharing distribution parameters. Those are equivalent modeling strategies, and we only need to use the relationships $y = \frac{\alpha}{\alpha+\beta}$ and $\sigma^2 = \frac{\alpha\beta}{(\alpha+\beta)^2(\alpha+\beta +1)}$ to  fully determine $\alpha$ and $\beta$. Algebraic manipulation yields $\alpha = - \frac{y (\sigma^2 + y^2 - y)}{\sigma^2}$ and $\beta = \frac{(\sigma^2 + y^2 - y) (y - 1)}{\sigma^2}$.}

In mathematical terms, the update rule is as follows
\begin{align}
\alpha_{i,t+1} &= b \left[\alpha_{i,t} + s_{i,t+1}^{(1)}\right] + (1 - b) \sum_{j} \hat{g}_{ij}\alpha_{j,t} \label{eq:update1} \\
\beta_{i,t+1} &= b \left[\beta_{i,t} + s_{i,t+1}^{(0)}\right] + (1 - b) \sum_{j} \hat{g}_{ij}\beta_{j,t}, \label{eq:update2}
\end{align} where $b \in [0,1]$.

Notice that when $b=1$, agents fully rely on the signals and behave like standard Bayesian agents. As $b$ approaches zero, agents are more influenced by the network as more weight is given to their neighbors' opinions. Moreover, let $\alpha_t = \left(\alpha_{1,t}, \alpha_{2,t}, \dots, \alpha_{n,t} \right)^{\top}$ and $\beta_t = \left(\beta_{1,t}, \beta_{2,t}, \dots, \beta_{n,t} \right)^{\top}$ denote the column vectors of length $n$ of agents beliefs parameters at time $t$, $\mathbb{I}$ be an identity matrix of dimension $n$ and $B = \text{diag}(b, b,\dots, b)$ be the diagonal Bayesian (or self-reliance) matrix. We can rewrite Equation \eqref{eq:update1} as follows
\begin{align}
\alpha_{t+1} &= B(\alpha_{t} + s_{t+1}^{(1)}) + (\mathbb{I}-B)\hat{g}\alpha_t \nonumber \\
				 &= \left(B + (\mathbb{I}-B)\hat{g}\right)\alpha_{t} + Bs_{t+1}^{(1)} \nonumber\\
				 &= W \alpha_{t} + Bs_{t+1}^{(1)}, 
\label{eq:alphaup}
\end{align}
and equation \eqref{eq:update2} as follows
\begin{equation}
\beta_{t+1} = W \beta_{t} + Bs_{t+1}^{(0)},
\label{eq:betaup}
\end{equation} where, $W = B + (\mathbb{I}-B)\hat{g}$ is a homogeneous row-stochastic matrix. Notice that as the graph $G$ induced by the adjacency matrix $g$ is assumed to be strongly connected, the graph induced by $W$ is trivially strongly connected as well.

%==========================================
\section{Theoretical results}
\label{sec:theoresults}

\subsection{Single individual case}
Before illustrating the network effects over the opinions when agents are exposed to ambiguous signals, we first focus on explaining what happens in the case of a single individual. In this regard, the following result shows that only two types of opinions may emerge when an agent interprets ambiguity under confirmatory bias.

\begin{proposition}[characterization]
\label{prop:singleagent}
If a single individual $i$ randomizes interpretation of ambiguous signals according to Equation~\eqref{eq:randomization}, while disregarding  neighbors' opinions ($b=1$), then their opinion converges to either $y_{r} = (1-\mu)\theta + \mu \gamma_{i}$ or $y_{l} = (1-\mu)\theta + \mu (1-\gamma_{i})$ almost surely, regardless of initial belief $(\alpha_{i,0},\beta_{i,0})$.
\end{proposition}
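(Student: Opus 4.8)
The plan is to track the evolution of the single agent's opinion $y_{i,t} = \alpha_{i,t}/(\alpha_{i,t}+\beta_{i,t})$ under the update rule \eqref{eq:update1}--\eqref{eq:update2} with $b=1$, which reduces to the pure Bayesian recursion $\alpha_{i,t+1} = \alpha_{i,t} + s_{i,t+1}^{(1)}$ and $\beta_{i,t+1} = \beta_{i,t} + s_{i,t+1}^{(0)}$. First I would observe that $\alpha_{i,t}+\beta_{i,t} = \alpha_{i,0}+\beta_{i,0}+t$ grows linearly (exactly one of $s^{(0)}_t,s^{(1)}_t$ is $1$ each period), so $y_{i,t} = \frac{\alpha_{i,0} + \sum_{\tau=1}^{t} s_{i,\tau}^{(1)}}{\alpha_{i,0}+\beta_{i,0}+t}$, and hence the limiting behaviour of $y_{i,t}$ is governed entirely by the Cesàro average $\frac1t\sum_{\tau=1}^{t} s_{i,\tau}^{(1)}$; the initial belief washes out. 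The key structural fact is that $s_{i,\tau}^{(1)} = \mathbbm{1}\{s_\tau = 1\} + \mathbbm{1}\{s_\tau = a\}\mathbbm{1}\{u_\tau \le \psi_{i,\tau}\}$, where $\psi_{i,\tau}$ is a function only of the sign of $y_{i,\tau-1} - 0.5$ via \eqref{eq:randomization}, taking the value $\gamma_i$ when $y_{i,\tau-1}\ge 0.5$ and $1-\gamma_i$ when $y_{i,\tau-1}<0.5$.

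The substantive step is a self-consistency / absorption argument on the indicator $\mathbbm{1}\{y_{i,t}\ge 0.5\}$. Conditional on the event that, from some time on, $\psi_{i,\tau} \equiv \gamma_i$ (i.e. the agent stays in the "right-leaning" regime), the signals $s_{i,\tau}^{(1)}$ are conditionally independent Bernoulli draws with success probability $(1-\mu)\theta + \mu\gamma_i = y_r$, so by the strong law of large numbers $y_{i,t}\to y_r$ a.s.; symmetrically, staying in the "left-leaning" regime forces $y_{i,t}\to y_l$. So I must show two things: (a) the two candidate limits are consistent with their own regimes — i.e. $y_r \ge 0.5$ and $y_l < 0.5$ cannot both fail, and in fact whichever regime the opinion settles into is self-reinforcing — and (b) the opinion does eventually settle into one regime permanently, i.e. the sign of $y_{i,t}-0.5$ changes only finitely often almost surely. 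For (a), note $y_r > y_l$ always (since $\gamma_i \ge 1/2$), and one checks that if $y_r < 0.5$ then also $(1-\mu)\theta + \mu\cdot\frac12 \le y_r<0.5$, which pushes the running average below $0.5$ and is consistent with the left regime, and vice versa; the boundary cases $y_r = 0.5$ or $y_l=0.5$ need a separate (measure-zero-type) argument or a tie-breaking remark. The cleanest route for (b) is to argue that the running average $M_t := \frac1t\sum_{\tau\le t}s_{i,\tau}^{(1)}$ is, by a standard argument for such "reinforced" processes, eventually trapped on one side of the threshold: once $M_t$ is bounded away from $0.5$ the regime is locked and SLLN finishes; and the process cannot oscillate around $0.5$ forever because each regime has its average concentrate near $y_r$ or $y_l$, both $\ne 0.5$ generically, yielding a contradiction with infinitely many crossings.

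I expect the main obstacle to be step (b): making rigorous the claim that the regime stabilizes almost surely. The difficulty is that $\{s_{i,\tau}^{(1)}\}$ is not i.i.d. — its conditional law depends on the past through $\mathbbm{1}\{y_{i,\tau-1}\ge0.5\}$ — so one cannot invoke SLLN directly on the whole sequence. The way I would handle this is to decompose the sample space according to the (random) last crossing time, or to use a coupling/stopping-time argument: define $T$ as the last time $\tau$ with $\mathrm{sign}(y_{i,\tau}-0.5)\ne \mathrm{sign}(y_{i,\tau-1}-0.5)$; on $\{T<\infty\}$ the post-$T$ increments are i.i.d.\ Bernoulli and SLLN applies; and show $\mathbb{P}(T=\infty)=0$ by showing that on $\{T=\infty\}$ the empirical average would have to accumulate mass near both $y_r$ and $y_l$ — impossible since a single real sequence has a unique Cesàro limit when it exists, and here it exists because excursions in each regime drive the partial average monotonically toward that regime's mean. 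The remaining genuinely delicate point is the knife-edge parameter configuration $y_r = \tfrac12$ or $y_l = \tfrac12$, which I would dispose of either by appealing to the continuous tie-breaker $u_t$ (the event of perpetual exact balance has probability zero) or by noting the statement only claims convergence to one of the two values, which remains true even there.
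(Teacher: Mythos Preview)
Your setup --- writing $y_{i,t}$ as the ratio of $\alpha_{i,0}+\sum_{\tau\le t}s_{i,\tau}^{(1)}$ to $\alpha_{i,0}+\beta_{i,0}+t$, observing that the initial parameters wash out, and reducing everything to which value $\psi_{i,\tau}\in\{\gamma_i,1-\gamma_i\}$ eventually prevails --- is exactly how the paper proceeds. The divergence is entirely in your step (b). The paper does not argue via a last-crossing time; it instead invokes a separate convergence lemma (Lemma~\ref{lem:convergence}) asserting that $\{y_{i,t}\}$ converges almost surely, by appealing to the equivalence of the Beta--Bernoulli update with a P\'olya urn and applying martingale convergence. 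With almost-sure convergence of $y_{i,t}$ to some $y_{i,\infty}$ already in hand, the indicator $\mathbbm{1}\{y_{i,t-1}\ge 0.5\}$ stabilizes at $\mathbbm{1}\{y_{i,\infty}\ge 0.5\}\in\{0,1\}$ (off the null event $\{y_{i,\infty}=0.5\}$), and then the Ces\`aro computation you outline goes through to yield $y_r$ or $y_l$. The paper also uses an auxiliary identity (Lemma~\ref{lem:expect-indfun}) to move the expectation inside the inner indicator when evaluating $\mathbb{E}[\mathbbm{1}\{u_t\le\psi_{i,t}\}]$.

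Your crossing-time route is plausible in spirit but the argument you sketch for $\mathbb{P}(T=\infty)=0$ has a genuine gap. The assertion that on $\{T=\infty\}$ the empirical average ``would have to accumulate mass near both $y_r$ and $y_l$'' does not follow: infinitely many sign changes of $y_{i,t}-0.5$ are perfectly compatible with the running average oscillating in a shrinking neighbourhood of $0.5$ forever, with excursions in each regime whose relative durations adjust so that neither $y_r$ nor $y_l$ is approached. Ruling this out requires a quantitative drift or variance estimate showing that an excursion in one regime, once entered, persists for a positive fraction of time with high probability --- and that is essentially as hard as proving convergence outright. The paper's martingale shortcut sidesteps this entirely. (You may fairly note, though, that the paper's appeal to P\'olya-urn exchangeability is itself brief: here $\psi_{i,t}$ is a step function of the current proportion, not the proportion itself, so the equivalence is not literal and the convergence claim is doing real work.)
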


Both left and right biased opinions can be considered tail events and, therefore, may emerge with some positive probability, in the spirit of any classic zero-one law. Moreover, both opinions are biased as any limiting opinion is a weighted average that places weight $\mu$ on the confirmatory bias parameter $\gamma_{i}$ and weight $(1-\mu)$ on the true state $\theta$. Thus, both the fraction of ambiguous signals $\mu$ and confirmatory bias $\gamma_i$ are key misinformation drivers.

Another way to view this is by rewriting the expressions $y_l$ and $y_r$ as the true underlying state plus its biases
\[
 y_l = \theta + \underbrace{\mu \big((1-\gamma_i) - \theta\big)}_{\text{bias}}, \text{ and }
 \]
 \[
 y_r = \theta + \underbrace{\mu \big(\gamma_i - \theta\big)}_{\text{bias}}.
\]

 Considering the vaccine efficacy example, say the underlying efficacy is $\theta = 0.8$ (i.e., people who got the vaccine were at 80\% lower risk of contracting the disease), the confirmatory bias is $\gamma_i = 0.6$ and the fraction of ambiguous signals is $0.3$. As per the $y_r$ expression, the bias is $0.3 \times (0.6 - 0.8) = -0.06$, meaning that individual $i$ believes the efficacy is $0.8 - 0.06 = 0.74$. As per the $y_l$ expression, the bias is $0.3 \times \big((1-0.6) - 0.8\big) = -0.12$, meaning that individual $i$ believes the efficacy is $0.8 - 0.12 = 0.68$. Hence, both opinions are biased and both $\mu$ and $\gamma_i$ are important misinformation drivers.  From this perspective, in the absence of ambiguity ($\mu=0$), there would be no bias and no misinformation. Conversely, if $\mu > 0$, then misinformation could be fully mitigated if $\gamma_i = \theta$ (i.e. agents randomizing interpretation between 0 and 1 according to the true proportion $\theta$) if $\theta > 0.5$ or if $\gamma_i = 1-\theta$ if $\theta < 0.5$.

Based on this example, a first result of interest stemming from Proposition~\ref{prop:singleagent} is that, for any individual with confirmatory tendency, one opinion type is less biased than the other depending on the state $\theta$. This is generalized as follows.

\begin{corollary}[asymmetric bias]
\label{cor:efficiency}
For any individual with confirmatory tendency and for any ambiguity level, $y_{r}$ ($y_{l}$) is less biased than $y_{l}$ ($y_{r}$) if $\theta > \frac{1}{2}$ $\left(\theta < \frac{1}{2}\right)$. Conversely, both $y_{r}$ and $y_{l}$ are equally biased when $\theta = \frac{1}{2}$.
\end{corollary}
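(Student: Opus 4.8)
The plan is to work directly from the closed-form expressions for the two limiting opinions provided by Proposition~\ref{prop:singleagent}, namely
\[
y_r - \theta = \mu(\gamma_i - \theta), \qquad y_l - \theta = \mu\big((1-\gamma_i) - \theta\big),
\]
and simply compare the two absolute biases $|y_r - \theta|$ and $|y_l - \theta|$. First I would note that the common factor $\mu > 0$ is irrelevant to the comparison, so it suffices to compare $|\gamma_i - \theta|$ with $|(1-\gamma_i) - \theta|$; this already explains why the statement holds ``for any ambiguity level.'' The cleanest route is to compare the squares, which removes the absolute values: $|(1-\gamma_i)-\theta|^2 - |\gamma_i - \theta|^2 = \big((1-\gamma_i-\theta) - (\gamma_i - \theta)\big)\big((1-\gamma_i-\theta) + (\gamma_i-\theta)\big) = (1 - 2\gamma_i)(1 - 2\theta)$.

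From this factorization the corollary is immediate. Since the individual has confirmatory tendency, $\tfrac12 < \gamma_i < 1$, so $1 - 2\gamma_i < 0$. Hence the sign of $|y_l - \theta|^2 - |y_r - \theta|^2$ is the opposite of the sign of $(1 - 2\theta)$: if $\theta > \tfrac12$ then $1 - 2\theta < 0$, so $|y_l - \theta|^2 - |y_r - \theta|^2 > 0$, i.e. $y_r$ is strictly less biased than $y_l$; if $\theta < \tfrac12$ the inequality reverses and $y_l$ is strictly less biased; and if $\theta = \tfrac12$ the difference is exactly $0$, so the two opinions are equally biased. I would also observe that the strictness uses $\gamma_i \neq \tfrac12$ (confirmatory tendency, as opposed to impartiality), which is exactly the hypothesis invoked.

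There is essentially no hard step here: the result is an elementary consequence of the characterization in Proposition~\ref{prop:singleagent} together with the difference-of-squares identity. The only thing worth being careful about is making the logical structure of the case split explicit and flagging where strictness versus equality comes from, so that the ``conversely'' clause for $\theta = \tfrac12$ is clearly justified. I would write the proof in three short lines: state the reduction to comparing $|\gamma_i-\theta|$ and $|1-\gamma_i-\theta|$, exhibit the factorization $(1-2\gamma_i)(1-2\theta)$, and read off the three cases.
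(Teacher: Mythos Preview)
Your argument is correct and, if anything, cleaner than the paper's. Both proofs start from the same expressions $y_r-\theta=\mu(\gamma_i-\theta)$ and $y_l-\theta=\mu(1-\gamma_i-\theta)$, but the paper proceeds by a three-region case split on the location of $\theta$ relative to $1-\gamma_i$ and $\gamma_i$ (both biases positive, mixed signs, both negative) and then compares magnitudes separately in each region. Your difference-of-squares factorization $|y_l-\theta|^2-|y_r-\theta|^2=\mu^2(1-2\gamma_i)(1-2\theta)$ bypasses that case analysis entirely and reads off all three conclusions at once, with the role of the confirmatory-tendency hypothesis $\gamma_i>\tfrac12$ made completely transparent. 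The paper's approach buys a slightly more detailed picture of when each bias is positive or negative, but for the statement as formulated your route is shorter and loses nothing.
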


Generally, these two opinions are not equally distant from $\theta$ (see Example~\ref{ex:asymmetry-efficiency}). This is because the bias of each one depends on the relative size of $\theta$ and $\gamma_{i}$. As we are restricting attention to the case in which the agent has confirmatory tendency (i.e., $\gamma_{i}\geq\frac{1}{2}$), it is the case that agents make less mistakes when they are in the correct side of the spectrum. Therefore, ambiguity has to be low enough to not mislead agents and the state has to be high (or low) enough to nudge agents' opinions to the correct side.

\begin{example}
\label{ex:asymmetry-efficiency}
Suppose that a biased individual ($\gamma_{i}=1$) faces $20\%$ of ambiguous signals ($\mu=0.20$) and consider three particular values for the underlying state $\theta$, say: low ($\theta_{L} = 0.1$), medium ($\theta_{M}=0.5$) and high ($\theta_{H}=0.9$).\footnote{The arbitrary values chosen for $\theta$ in the Example~\ref{ex:asymmetry-efficiency} only mean to illustrate the results of Proposition 1 and Corollary 1 for a broad range of $\theta$. For all purposes, $\theta \in \Theta = [0,1].$} In this case, Proposition~\ref{prop:singleagent} and Corollary \ref{cor:efficiency} show that under state
\[
\theta_{L},  \begin{cases}
    y_r = (1-0.2)\times0.1 + 0.2\times1 = 0.28  \text{ is formed with probability $0$, and}\\
     y_l = (1-0.2)\times0.1 + 0.2\times(1-1) = 0.08 \text{ is formed with probability $1$.}
  \end{cases}
\]  Under state
\[
\theta_{M},  \begin{cases}
    y_r = (1-0.2)\times0.5 + 0.2\times1 = 0.60  \text{ is formed with some probability $p\in(0,1)$, and}\\
     y_l = (1-0.2)\times0.5 + 0.2\times(1-1) = 0.40 \text{ is formed with probability $1-p$.}
  \end{cases}
\] Finally, under state
\[
\theta_{H},  \begin{cases}
    y_r = (1-0.2)\times0.9 + 0.2\times1 = 0.92  \text{ is formed with probability $1$, and}\\
     y_l = (1-0.2)\times0.9 + 0.2\times(1-1) = 0.72 \text{ is formed with probability $0$.}
  \end{cases}
\]
\end{example}

Although $y_l$ and $y_r$ are not equidistant from $\theta$, the distance between $y_l$ and $y_r$ does not depend on $\theta$ and is equal to $\mu(2\gamma_{i}-1)$. Hence, for any given $\mu$, the distance between opinions $y_l$ and $y_r$ is maximal when $\gamma_i=1$. In this case, the distance becomes $\mu$. In the Example~\ref{ex:asymmetry-efficiency}, note that as $\gamma_i=1$, opinions distance is always $\mu=0.2$, regardless of state $\theta$.

Moreover, depending on the pair ($\theta$, $\mu$), we can show which opinion will be reached. Hence, we just need to determine which combinations of $\theta$ and $\mu$ are sufficient to allow both types of opinion to fall in the same side of the 0-1 spectrum (i.e. both $y_l$ and $y_r$ above 0.5 or both $y_l$ and $y_r$ below 0.5) and which combinations lead opinions to diverge in location (i.e. $y_r \geq 0.5$ and $y_l < 0.5$). These conditions lead to different regions of space $\Theta \times  M = [0,1]^2$ (unit square): region $L$, characterized by both low state $\theta$ and low ambiguity level $\mu$; region $R$, characterized by both high state and low ambiguity;  whereas region $\mathcal{W}$ is the complement of the union of $L$ and $R$. In mathematical terms, those partitions are characterized by the following:
\begin{align*}
R &= \left\{ (\theta,\mu) | \frac{1}{2} < \theta \leq 1 \text{ and } 0 \leq \mu < \frac{\theta - 0.5}{\gamma_{i} + \theta - 1} \right\}, \\
L &= \left\{(\theta,\mu) | 0 \leq \theta < \frac{1}{2} \text{ and } 0\leq \mu <\frac{\theta-0.5}{\theta-\gamma_{i}}\right\}, \\
\mathcal{W} &= [0,1]^{2}\setminus\{R\cup L\}. 
\end{align*}

We can state the following result regarding the general conditions for the emergence of each opinion type. 

\begin{proposition}[opinion-type emergence]
\label{prop:t1ct2c_areas}
For any individual with confirmatory tendency, if $(\theta,\mu) \in R$, then the limiting opinion is the right-biased one with probability 1. If $(\theta,\mu) \in L$, then the limiting opinion is the left-biased one with probability 1. If $(\theta,\mu) \in \mathcal{W}$, then the limiting opinion is a random variable whose possible values are $y_l$ and $y_r$.
\end{proposition}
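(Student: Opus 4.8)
The plan is to combine the convergence statement of Proposition~\ref{prop:singleagent} with (i) an elementary translation of the definitions of $R$, $L$, $\mathcal{W}$ into statements about the position of $y_l$ and $y_r$ relative to $\tfrac12$, and (ii) a self‑consistency argument ruling out the ``wrong'' limit in $R$ and $L$. First I would record the single‑agent dynamics: with $b=1$, \eqref{eq:update1}--\eqref{eq:update2} give $\alpha_{i,t}=\alpha_{i,0}+\sum_{\tau\le t}s^{(1)}_{i,\tau}$, $\beta_{i,t}=\beta_{i,0}+\sum_{\tau\le t}s^{(0)}_{i,\tau}$, and since $s^{(1)}_{i,\tau}+s^{(0)}_{i,\tau}=1$ we get $\alpha_{i,t}+\beta_{i,t}=\alpha_{i,0}+\beta_{i,0}+t$, hence
\[
y_{i,t}=\frac{\alpha_{i,0}+\sum_{\tau=1}^{t}s^{(1)}_{i,\tau}}{\alpha_{i,0}+\beta_{i,0}+t}.
\]
I would also introduce the two ``frozen‑interpretation'' sequences $\bar s^{\,+}_\tau=\mathbbm 1\{s_\tau=1\}+\mathbbm 1\{s_\tau=a\}\mathbbm 1\{u_\tau\le\gamma_i\}$ and $\bar s^{\,-}_\tau=\mathbbm 1\{s_\tau=1\}+\mathbbm 1\{s_\tau=a\}\mathbbm 1\{u_\tau\le 1-\gamma_i\}$, which are i.i.d.\ and, by the strong law of large numbers, satisfy $\tfrac1t\sum_{\tau\le t}\bar s^{\,+}_\tau\to(1-\mu)\theta+\mu\gamma_i=y_r$ and $\tfrac1t\sum_{\tau\le t}\bar s^{\,-}_\tau\to y_l$ a.s.; by \eqref{eq:randomization}, $s^{(1)}_{i,\tau}=\bar s^{\,+}_\tau$ whenever $y_{i,\tau-1}\ge\tfrac12$ and $s^{(1)}_{i,\tau}=\bar s^{\,-}_\tau$ whenever $y_{i,\tau-1}<\tfrac12$.

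Second, I would pin down the geometry of the partition. For an agent with confirmatory tendency $\gamma_i>\tfrac12$, so $y_r-y_l=\mu(2\gamma_i-1)>0$, and $y_l$ (resp.\ $y_r$) is a convex combination of $\theta$ with $1-\gamma_i\le\tfrac12$ (resp.\ with $\gamma_i\ge\tfrac12$); hence $y_l>\tfrac12$ forces $\theta>\tfrac12$ and $y_r<\tfrac12$ forces $\theta<\tfrac12$. A one‑line rearrangement of $y_l>\tfrac12$ reproduces exactly the inequality defining $R$, and a rearrangement of $y_r<\tfrac12$ reproduces exactly the inequality defining $L$. Therefore $R=\{(\theta,\mu):y_l>\tfrac12\}$, $L=\{(\theta,\mu):y_r<\tfrac12\}$, and $\mathcal{W}=R^{c}\cap L^{c}=\{(\theta,\mu):y_l\le\tfrac12\le y_r\}$. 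In particular, on $R$ both limits lie strictly above $\tfrac12$, on $L$ both lie strictly below $\tfrac12$, and on $\mathcal{W}$ one has $y_l\le\tfrac12\le y_r$.

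Third comes the dynamical step for $R$ and $L$. Let $y_\infty:=\lim_t y_{i,t}$, which exists a.s.\ and lies in $\{y_l,y_r\}$ by Proposition~\ref{prop:singleagent}. Suppose $(\theta,\mu)\in R$. On the event $\{y_\infty=y_l\}$, since $y_l>\tfrac12$ there is a (random) time $T$ with $y_{i,t}>\tfrac12$ for all $t\ge T$, hence $s^{(1)}_{i,\tau}=\bar s^{\,+}_\tau$ for all $\tau>T$; then $\tfrac1t\sum_{\tau\le t}s^{(1)}_{i,\tau}=\tfrac1t\sum_{\tau\le T}s^{(1)}_{i,\tau}+\tfrac1t\sum_{T<\tau\le t}\bar s^{\,+}_\tau\to y_r$ a.s., so $y_{i,t}\to y_r\ne y_l$, a contradiction. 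Thus $P(y_\infty=y_l)=0$ and $y_\infty=y_r$ a.s. The case $(\theta,\mu)\in L$ is symmetric, exchanging $\bar s^{\,+}$ and $\bar s^{\,-}$ and the roles of $y_r$ and $y_l$.

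Finally, for $(\theta,\mu)\in\mathcal{W}$ the inclusion $y_\infty\in\{y_l,y_r\}$ is immediate from Proposition~\ref{prop:singleagent}, and neither self‑consistency argument applies because each of $y_l\le\tfrac12$ and $y_r\ge\tfrac12$ is compatible with the corresponding interpretation regime being ``frozen'' in the limit. To show $y_\infty$ is genuinely non‑degenerate on the interior $\{y_l<\tfrac12<y_r\}$, I would invoke the same lock‑in argument used in the proof of Proposition~\ref{prop:singleagent}: once $y_{i,t}$ is bounded away from $\tfrac12$ on the right — an event of positive probability after finitely many favorable signals — the conditional numerator increments are i.i.d.\ Bernoulli$(y_r)$ with $y_r>\tfrac12$, so the walk $\alpha_{i,t}-\tfrac12(\alpha_{i,0}+\beta_{i,0}+t)$ has strictly positive drift and, with positive probability, never returns to $0$; this gives $P(y_\infty=y_r)>0$, and symmetrically $P(y_\infty=y_l)>0$, so $0<P(y_\infty=y_r)<1$. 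I expect this last step to be the main obstacle: the region algebra and the $R$/$L$ contradictions are short, but turning the heuristic ``tug‑of‑war'' into a rigorous positive‑probability statement — and handling parameter pairs on the boundary of $\mathcal{W}$, where $y_l=\tfrac12$ or $y_r=\tfrac12$ and the interpretation regime can flip‑flop in the limit — requires the random‑walk/supermartingale estimate underlying Proposition~\ref{prop:singleagent}, which I would reuse rather than reprove.
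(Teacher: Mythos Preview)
Your proposal is correct and follows essentially the same self-consistency logic as the paper: identify $R$ (resp.\ $L$) as the parameter region where both $y_l$ and $y_r$ lie strictly above (resp.\ below) $\tfrac12$, then rule out the inconsistent limit via the dichotomy in Proposition~\ref{prop:singleagent}. The paper's argument is much terser --- it simply writes down the two inequality systems $S_1,S_2$ and cites equation~\eqref{eq:singleagentproof} --- and it does not attempt the positive-probability argument on $\mathcal{W}$ that you sketch in your final step; its proof of the $\mathcal{W}$ claim is just the support statement $y_\infty\in\{y_l,y_r\}$ inherited from Proposition~\ref{prop:singleagent}.
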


This result holds regardless of initial beliefs and observed sequence of signals. For three cases of confirmation bias, Figure~\ref{fig:regions_t1ct2c} depicts the idea of Proposition~\ref{prop:t1ct2c_areas}. In case 1, when the agent is roughly impartial, case 2 when the agent has an intermediary level of confirmatory bias, and case 3 when agent is biased.

\vspace{2mm}
% FIGURE ---------------------------------------------------------------------------------
\begin{figure}[H]
  \begin{subfigure}[b]{0.32\textwidth}
    \includegraphics[width=\textwidth]{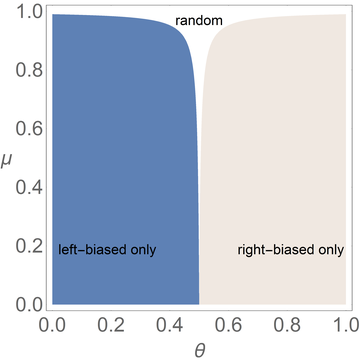}
    \caption{case 1: $\gamma_{i}= 0.505$}
    \label{fig:almostimpartial}
  \end{subfigure}
  \begin{subfigure}[b]{0.32\textwidth}
    \includegraphics[width=\textwidth]{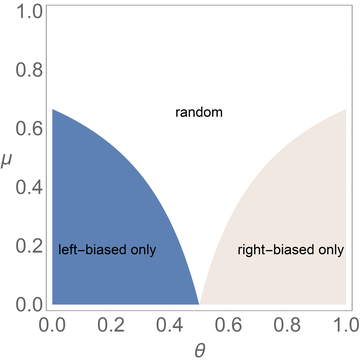}
    \caption{case 2: $\gamma_{i}= 0.750$}
    \label{fig:intermcase}
  \end{subfigure}
  \begin{subfigure}[b]{0.32\textwidth}
    \includegraphics[width=\textwidth]{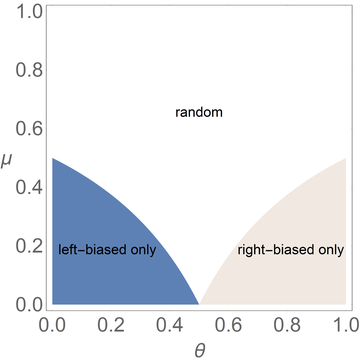}
    \caption{case 3: $\gamma_{i}= 1.000$}
    \label{fig:fullyconfirmatory}
  \end{subfigure}
  \caption{Parameter space and emergence of different types of consensus}
  \label{fig:regions_t1ct2c}
\end{figure}
% -----------------------------------------------------------------------------------------

Lightly shaded areas on the right represent the set of parameters $\mu$ (vertical axis) and $\theta$ (horizontal axis) ensuring the emergence of a right-biased opinion. Conversely, darkly shaded areas represent the set of parameters ensuring the emergence of left-biased opinion. In both areas, for a given level of confirmatory bias, the left (right)-biased opinion emerge with probability 1 if there is both low frequency of ambiguous signals and low (high) state (i.e., below (above) 0.5). On the other hand, the white area represents the combinations of $\theta$ and $\mu$ such that the opinion type becomes a random variable whose possible values are $y_l$ and $y_r$ (i.e., both opinion types may emerge with positive probability). Considering that when $(\theta,\mu) \in \mathcal{W}$ any of the two opinions may be formed, we define the probability $p $ with which an individual reaches the less-biased opinion following the results from Proposition~\ref{prop:singleagent} and Corollary~\ref{cor:efficiency}.

\begin{definition}
\label{def:problessbias}
For any given initial belief $(\alpha_{i,0},\beta_{i,0})$, ambiguity level $\mu \in (0,1)$ and confirmation bias $\gamma_{i} \geq \frac{1}{2}$, the probability of less-biased opinion forming is

$$
p = 
\begin{cases}
    P\displaystyle\left(\lim_{t \to \infty} y_{i,t} = y_l\right), &  \text{ when $\theta < 0.5$, or} \\
    P\displaystyle\left(\lim_{t \to \infty} y_{i,t} = y_r\right), & \text{ when $\theta > 0.5$.}
\end{cases}
$$
\end{definition}

Another interesting case stemming from Proposition~\ref{prop:singleagent} is the one in which bias cannot be overcome even when an agent is impartial.

\begin{corollary}[bias from impartiality]
\label{cor:impartial}
If an individual is impartial, then his limiting opinion is $(1-\mu) \, \theta + \mu \, \frac{1}{2}$ almost surely, regardless of his initial prior and the sequence of observed signals.
\end{corollary}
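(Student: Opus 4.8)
The plan is to obtain the statement immediately from Proposition~\ref{prop:singleagent} and then to record the self-contained strong-law argument that makes transparent why impartiality nonetheless leaves a residual bias. Setting $\gamma_i=\tfrac12$ in the two candidate limits of Proposition~\ref{prop:singleagent} gives $y_r=(1-\mu)\theta+\mu\cdot\tfrac12$ and $y_l=(1-\mu)\theta+\mu(1-\tfrac12)=(1-\mu)\theta+\mu\cdot\tfrac12$, so the two candidates coincide. Since Proposition~\ref{prop:singleagent} guarantees that $y_{i,t}$ converges almost surely to one of $y_l,y_r$, and these are now the same number, $y_{i,t}\to(1-\mu)\theta+\mu\tfrac12$ almost surely, for every initial pair $(\alpha_{i,0},\beta_{i,0})$.

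For a direct proof, first observe that impartiality collapses the randomization rule: plugging $\gamma_i=\tfrac12$ into Equation~\eqref{eq:randomization} gives $\psi_{i,t}=\tfrac12$ for every $t$, independently of $y_{i,t-1}$. Hence, in Equations~\eqref{eq:signalinterpret0}--\eqref{eq:signalinterpret1}, the interpreted signal $s^{(1)}_{t}$ is the indicator of the disjoint events ``the public signal is the informative $1$'' (probability $(1-\mu)\theta$) and ``the signal is ambiguous and $u_t\le\tfrac12$'' (probability $\mu\cdot\tfrac12$), and these probabilities do not depend on the agent's current belief. Because the public signals are i.i.d.\ over time conditional on $\theta$ and the tie-breaking draws $\{u_t\}$ are i.i.d.\ and independent of everything else, $\{s^{(1)}_t\}_{t\ge1}$ is i.i.d.\ Bernoulli with success probability $q:=(1-\mu)\theta+\mu\tfrac12$, with $s^{(0)}_t=1-s^{(1)}_t$.

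Next, since the agent disregards neighbors ($b=1$), iterating Equations~\eqref{eq:update1}--\eqref{eq:update2} yields $\alpha_{i,t}=\alpha_{i,0}+\sum_{k=1}^{t}s^{(1)}_{k}$ and $\beta_{i,t}=\beta_{i,0}+\sum_{k=1}^{t}s^{(0)}_{k}$, so $\alpha_{i,t}+\beta_{i,t}=\alpha_{i,0}+\beta_{i,0}+t$ and
\[
y_{i,t}=\frac{\alpha_{i,t}}{\alpha_{i,t}+\beta_{i,t}}=\frac{\alpha_{i,0}+\sum_{k=1}^{t}s^{(1)}_{k}}{\alpha_{i,0}+\beta_{i,0}+t}.
\]
Applying the strong law of large numbers, $\tfrac1t\sum_{k=1}^{t}s^{(1)}_{k}\to q$ almost surely, and since the fixed initial mass $\alpha_{i,0}+\beta_{i,0}$ is negligible relative to $t$, we get $y_{i,t}\to q=(1-\mu)\theta+\mu\tfrac12$ almost surely; this holds for every initial prior and along almost every signal path, which is the intended reading of ``regardless the sequence of observed signals.''

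There is essentially no obstacle beyond bookkeeping; the one point worth stressing is the structural reason the argument works, and why it fails under confirmatory tendency. Only when $\gamma_i=\tfrac12$ does the interpretation of ambiguous signals decouple from the opinion process, turning $\{s^{(1)}_t\}$ into a genuine i.i.d.\ sequence to which the SLLN applies with a deterministic limit. For $\gamma_i>\tfrac12$ the interpreted signals depend on $y_{i,t-1}$ and are no longer i.i.d., which is exactly what produces the tail-event dichotomy of Proposition~\ref{prop:singleagent}.
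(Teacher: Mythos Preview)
Your proposal is correct. The first argument---setting $\gamma_i=\tfrac12$ so that $y_r$ and $y_l$ coincide, and invoking Proposition~\ref{prop:singleagent}---is essentially the paper's own proof: it too observes that impartiality forces $\psi_{i,t}=\tfrac12$ for all $t$, computes $\mathbb{E}[\mathbbm{1}\{u_t\le\tfrac12\}]=\tfrac12$, and then appeals to the formula obtained at the end of the proof of Proposition~\ref{prop:singleagent}.

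Your second, self-contained argument is a genuine addition. By noting that $\psi_{i,t}\equiv\tfrac12$ decouples the interpreted signals from the opinion path, you make $\{s^{(1)}_t\}$ an i.i.d.\ Bernoulli($q$) sequence and apply the strong law directly to $y_{i,t}=(\alpha_{i,0}+\sum_{k\le t}s^{(1)}_k)/(\alpha_{i,0}+\beta_{i,0}+t)$. The paper does not spell this out; it leans on the Proposition~\ref{prop:singleagent} machinery instead. Your direct route is more elementary and makes the mechanism transparent---in particular, your closing remark that the SLLN with a \emph{deterministic} limit is available only at $\gamma_i=\tfrac12$, and that for $\gamma_i>\tfrac12$ the dependence of $s^{(1)}_t$ on $y_{i,t-1}$ is precisely what generates the two-point tail dichotomy, is a clarifying observation absent from the paper's proof.
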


Impartiality does not overcome bias because it forces agents to set a disproportionate probability mass in the center of the spectrum $(0,1)$. Hence, impartiality makes agents excessively centrist instead of making them neutral toward possible states. This is a direct consequence of the Beta-Bernoulli conjugate family employed here that would not occur in a binary state space (i.e., $\Theta = \{0,1\}$).

Moreover, under impartiality, for any mass of ambiguity $\mu > 0$, if true state is located in the left side of the 0-1 spectrum ($\theta <\frac{1}{2}$), then opinion has a positive bias and lies in $\left(\theta, \frac{1}{2}\right)$. Conversely, if $\theta > \frac{1}{2}$, then opinion has a negative bias and lies in $\left(\frac{1}{2},\theta\right)$. The only instance when an individual learns the state is when $\theta=\frac{1}{2}$, a zero mass event if $\theta$ was drawn randomly from the interval $[0,1]$. The results presented so far both extend the intuition and contrast with Propositions 4 and 5 in \cite{Rabin1999} and with Propositions 2 and 3 in \cite{fhj2018}. This extends the intuition to the case in which the state is continuously distributed over the interval 0-1 and contrasts because impartiality can no longer help an individual overcome bias, as per the result above. 

Finally, at the other extreme, one could ask under what conditions an individual would reach an extreme opinion (i.e., either opinion 0 (extreme left) or opinion 1 (extreme right)). The next result shows that those cases can only be sustained under two extreme conditions: (i) the fraction of ambiguous signals is maximal ($\mu=1$) and, (ii) individual is biased ($\gamma_{i} =1$). 

\begin{corollary}[extreme opinions]
\label{cor:extremism}
For any $\theta \in \Theta$ and any initial belief ($\alpha_{i,0}, \beta_{i,0}$), any individual $i\in N$ will form an extreme opinion (either 0 or 1) if they are biased ($\gamma=1$) and the mass of ambiguity is maximal ($\mu=1$).
\end{corollary}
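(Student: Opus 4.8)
The plan is to specialize the single‑individual belief dynamics \eqref{eq:update1}--\eqref{eq:update2} to the two extreme parameters and observe that they collapse to a deterministic recursion whose limit is immediate. Put $b=1$ (so the network term vanishes), $\mu=1$ (so the realized public signal is $s_t=a$ for every $t$), and $\gamma_i=1$; then the randomization probability in \eqref{eq:randomization} becomes $\psi_{i,t}=\mathbbm{1}\{y_{i,t-1}\geq\tfrac12\}\in\{0,1\}$. Since $u_t$ is a continuous $U[0,1]$ variable, $u_t\in(0,1)$ almost surely, so $\mathbbm{1}\{u_t\le\psi_{i,t}\}=\psi_{i,t}$ and $\mathbbm{1}\{u_t>\psi_{i,t}\}=1-\psi_{i,t}$ almost surely; substituting into \eqref{eq:signalinterpret0}--\eqref{eq:signalinterpret1} gives the interpreted signals $s^{(1)}_{i,t}=\mathbbm{1}\{y_{i,t-1}\geq\tfrac12\}$ and $s^{(0)}_{i,t}=\mathbbm{1}\{y_{i,t-1}<\tfrac12\}$ almost surely. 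In particular $\theta$ has disappeared from the recursion, so the argument will be uniform over $\theta\in\Theta$.

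Next I would run an elementary induction on $(\alpha_{i,t},\beta_{i,t})$, using $y_{i,t}\geq\tfrac12\iff\alpha_{i,t}\geq\beta_{i,t}$. If $\alpha_{i,t}\geq\beta_{i,t}$ then \eqref{eq:update1}--\eqref{eq:update2} with $b=1$ give $\alpha_{i,t+1}=\alpha_{i,t}+1$ and $\beta_{i,t+1}=\beta_{i,t}$, so $\alpha_{i,t+1}>\beta_{i,t+1}$ and the hypothesis propagates. Hence, if $\alpha_{i,0}\geq\beta_{i,0}$ (equivalently $y_{i,0}\geq\tfrac12$), then $\alpha_{i,t}=\alpha_{i,0}+t$ and $\beta_{i,t}=\beta_{i,0}$ for all $t$, so
\[
y_{i,t}=\frac{\alpha_{i,0}+t}{\alpha_{i,0}+\beta_{i,0}+t}\longrightarrow 1\qquad(t\to\infty).
\]
By the symmetric computation (roles of $\alpha$ and $\beta$ interchanged), if $\alpha_{i,0}<\beta_{i,0}$ then $y_{i,t}=\alpha_{i,0}/(\alpha_{i,0}+\beta_{i,0}+t)\to 0$. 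Either way the limit lies in $\{0,1\}$, i.e. is an extreme opinion, and the conclusion holds for every prior $(\alpha_{i,0},\beta_{i,0})$ and every $\theta$. Equivalently, one may evaluate $y_r=(1-\mu)\theta+\mu\gamma_i$ and $y_l=(1-\mu)\theta+\mu(1-\gamma_i)$ from Proposition~\ref{prop:singleagent} at $\mu=\gamma_i=1$ to obtain $y_r=1$ and $y_l=0$; but since that proposition was stated for $\mu\in(0,1)$, the direct computation above is the clean route.

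There is essentially no hard step: once $\mu=1$ and $\gamma_i=1$ the model loses all randomness and reduces to a monotone deterministic recursion. The only two points needing a line of care are (i) invoking the continuity of $u_t$ to discard the measure‑zero tie $u_t=\psi_{i,t}$, so that the interpreted signal is a genuine deterministic function of the current opinion, and (ii) the knife‑edge prior $y_{i,0}=\tfrac12$, which under the convention in \eqref{eq:randomization} falls in the ``$\geq\tfrac12$'' branch and therefore drives the opinion to $1$. It is also worth noting explicitly that which extreme is reached is pinned down by the sign of $\alpha_{i,0}-\beta_{i,0}$, so ``forms an extreme opinion'' is the sharpest statement available uniformly in the prior.
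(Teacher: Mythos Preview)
Your proof is correct and establishes exactly what the corollary literally states: if $\mu=\gamma_i=1$, then the limiting opinion is extreme. Your argument is self-contained---you specialize the recursion, observe that all randomness disappears, and run a one-line induction on the sign of $\alpha_{i,t}-\beta_{i,t}$---and you rightly flag that invoking Proposition~\ref{prop:singleagent} directly is delicate because that result is stated only for $\mu\in(0,1)$.

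The paper's own proof, however, goes in the opposite logical direction: it \emph{assumes} an extreme opinion is formed and then deduces that $\mu=\gamma_i=1$ is forced, i.e.\ it argues necessity rather than sufficiency. (The prose preceding the corollary reads ``those cases can only be sustained under two extreme conditions,'' which is a necessity claim, even though the corollary itself is phrased as an ``if''.) Concretely, the paper sets $(1-\mu)\theta+\mu\gamma_i=1$ and $(1-\mu)\theta+\mu(1-\gamma_i)=0$, subtracts to obtain $\mu(2\gamma_i-1)=1$, and concludes $\mu=\gamma_i=1$ from the parameter bounds.

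So the two arguments are not alternative routes to the same implication: they prove converse implications. Your approach matches the corollary as written and has the further advantage of not leaning on Proposition~\ref{prop:singleagent} at the boundary value $\mu=1$; the paper's approach matches the surrounding text and relies on the closed-form limits $y_r,y_l$ from that proposition. Taken together the two arguments give an ``if and only if''.
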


\subsection{Networked society}

Given the intuition of the single agent case described, one may ask what happens if agents also learned from their friends, besides learning from signals. This case imposes an extra challenge as the interpretation of ambiguous not only depends on the initial realization of signals but also on the influence of friends that potentially interpret ambiguity in different ways. ``\textit{Tug-of-war}'' played between left and right biases has one extra driver: the network externalities. 

Before discussing the implications of a network structure, we define the concept of consensus (Definition \ref{def:consensus}) and illustrate the social influence of agents, in terms of ergodicity of a Markov chain (Lemma \ref{lem:ergochain}), derived from the reliance weight matrix $W$. Appendix \ref{append:aux-defs-lemmas} contains the proof of the Lemma.

\begin{definition}[consensus]
\label{def:consensus}
Society reaches a consensus \textit{almost surely} for any initial beliefs if there is a $y$ such that, for every $\epsilon > 0$ and $i \in N$, 
\begin{equation*}
P\left(\lim\limits_{t\to\infty} |y_{i,t} - y| < \epsilon\right) = 1.
\end{equation*}
\end{definition}

\begin{lemma}[strong connectivity]
\label{lem:ergochain}
The $t$-th power of matrix $W$, $W^{t}$, converges to a unique row-stochastic matrix with unit rank (all rows the same) as $t$ tends to infinity, i.e.
\[
\lim_{t\to\infty} W^{t} = W^{\infty} = {\bf{1}}\pi^{\top} = \Pi,
\] where the invariant distribution $\pi$ is the normalized left eigenvector of the matrix $W$ associated to the unit eigenvalue, i.e. $\pi^{\top} W = \pi^{\top}$ and $\sum_{i}\pi_{i} = 1$.
\end{lemma}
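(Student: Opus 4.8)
The statement to prove is Lemma~\ref{lem:ergochain}: the $t$-th power of $W = B + (\mathbb{I}-B)\hat g$ converges to a rank-one row-stochastic matrix $\mathbf{1}\pi^\top$, with $\pi$ the normalized left Perron eigenvector. This is a classical Perron--Frobenius / ergodic Markov chain fact, so the proof is an invocation of standard theory once the right structural properties of $W$ are checked.

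\textbf{Proof proposal.} The plan is to verify that $W$ is a primitive (irreducible and aperiodic) row-stochastic matrix and then appeal to the Perron--Frobenius theorem together with the convergence theorem for primitive stochastic matrices. First I would record that $W$ is row-stochastic: each row of $\hat g$ sums to $1$ by the normalization $\hat g_{ij} = g_{ij}/|N_i^{out}|$ (assuming every agent has at least one out-neighbor, which strong connectivity guarantees for $n\ge 2$), and $B = b\,\mathbb{I}$ with $b\in(0,1)$, so each row of $W = b\,\mathbb{I} + (1-b)\hat g$ sums to $b + (1-b) = 1$; all entries are nonnegative since $b\ge 0$ and $\hat g\ge 0$. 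Hence $W$ has $\mathbf{1}$ as a right eigenvector with eigenvalue $1$, and every eigenvalue has modulus at most $1$.

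Next I would establish irreducibility and aperiodicity. Irreducibility: the directed graph induced by the positive entries of $W$ contains every edge of $G$ (since $W_{ij}\ge (1-b)\hat g_{ij} > 0$ whenever $g_{ij}=1$), and $G$ is strongly connected by assumption, so $W$ is irreducible. Aperiodicity: because $b>0$, every diagonal entry $W_{ii}\ge b > 0$, so every state has a self-loop and the chain is aperiodic (period $1$). An irreducible aperiodic nonnegative matrix is primitive, i.e. $W^m > 0$ entrywise for some $m$. Then Perron--Frobenius applies: the spectral radius $1$ is a simple eigenvalue strictly dominating all others in modulus, with a strictly positive left eigenvector $\pi$, which I normalize so that $\sum_i \pi_i = 1$; by definition $\pi^\top W = \pi^\top$.

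Finally I would invoke the standard convergence result for primitive stochastic matrices: $\lim_{t\to\infty} W^t = \mathbf{1}\pi^\top =: \Pi$. One clean way to see this is to decompose $W = \mathbf{1}\pi^\top + Q$ where $Q$ collects the contribution of the non-Perron part of the spectrum; using $\pi^\top \mathbf{1} = 1$ one checks $(\mathbf{1}\pi^\top) Q = Q(\mathbf{1}\pi^\top) = 0$ and $(\mathbf{1}\pi^\top)^t = \mathbf{1}\pi^\top$, so $W^t = \mathbf{1}\pi^\top + Q^t$, and since the spectral radius of $Q$ equals the modulus of the second-largest eigenvalue of $W$, which is strictly less than $1$ by primitivity, $Q^t\to 0$. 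Hence $W^t\to\mathbf{1}\pi^\top$, a row-stochastic matrix of unit rank with all rows equal to $\pi^\top$. I do not expect any genuine obstacle here; the only points requiring a line of care are the observation that $b>0$ forces positive diagonal (giving aperiodicity for free, so no parity argument is needed) and the degenerate case $n=1$, where $W=[1]$ and the claim is immediate. Since the paper explicitly says the proof is omitted, a short paragraph citing Perron--Frobenius and the primitive-matrix convergence theorem suffices.
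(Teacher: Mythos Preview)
Your proof is correct and close in spirit to the paper's own argument (despite the sentence saying the proof is omitted, a proof does appear immediately after the lemma). Both are spectral, but the mechanics differ. The paper writes a full eigendecomposition $W = v^{-1}\Lambda v$, raises it to the $t$-th power, and reads off $W^{t}_{ij} = d_j + \sum_{r\ge 2}\lambda_r^{t} v^{-1}_{ir} v_{rj}$, letting the subdominant terms die; this tacitly assumes $W$ is diagonalizable and never explains why $|\lambda_2|<1$. Your route instead checks the structural hypotheses---row-stochasticity, irreducibility from strong connectivity of $G$, and aperiodicity from the positive diagonal $W_{ii}\ge b>0$---to conclude primitivity, then applies Perron--Frobenius and the projector splitting $W=\mathbf{1}\pi^{\top}+Q$ with $\rho(Q)<1$. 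What you gain is robustness (no diagonalizability needed) and a self-contained justification of the spectral gap; what the paper's version makes slightly more visible is the explicit $o(|\lambda_2|^{t})$ rate, which your decomposition also yields once you note $\|Q^{t}\|=O(|\lambda_2|^{t})$.
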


A first case of interest is the limiting case in which individuals exclusively pay attention to friends. This represents the situation wherein agents disregard signals completely and are pure conformists. The consensus reached is slightly different from the classic DeGroot case, as the limiting opinion is not exactly a weighted average of the initial opinions, although is still very close to it. The discrepancy has to do with the fact that agents are exchanging opinions and precisions (parameters $\alpha$ and $\beta$ ). This is stated as follows.

\begin{proposition}[DeGroot consensus]
\label{prop:b-zero}
If the social network $G = (N, g)$ is strongly connected, and agents disregard all public signals ($b=0$), then society reaches consensus 
\[
\bar{y} = \frac{\sum_{j}\Pi_{ij}\alpha_{j,0}}{\sum_{j}\Pi_{ij}\left(\alpha_{j,0}+\beta_{j,0}\right)}, 
\]
for any $i \in N$ and where $\Pi$ is the invariant distribution matrix.
\end{proposition}

Section~\ref{sec:simulationanalysis} highlights the implications of this result wherein we explore the effects of priors heterogeneity on the probability of attaining the less-biased consensus. Next, we show that assuming strong connectivity, consensus is reached in this dynamic system and has a similar functional form of the individual limiting opinion in Proposition~\ref{prop:singleagent}.

\begin{proposition}[Network externality]
\label{prop:ntw-society}
With network externalities ($0<b<1$), sequences $\{y_{i,t}\}_{t=1}^{\infty}$ generated by the update rule converge almost surely to either right-biased consensus $\bar{y}_r = (1-\mu)\theta + \mu \bar{\gamma}$ or left-biased consensus $\bar{y}_{l} = (1-\mu)\theta + \mu (1-\bar{\gamma})$ for all $i \in N$ and where $\Pi$ is the invariant distribution matrix, and $\bar{\gamma}=\sum_{j}\Pi_{ij}\gamma_{j}$ for any $i \in N$.
\end{proposition}

Therefore, in a networked society, consensus is a \emph{weighted average} between the true state $\theta$ (with weight $1-\mu$) and the weighted average of confirmatory biases $\bar{\gamma}$ (with weight $\mu$). Society aggregates information efficientrly and no bias exists if $\mu=0$ or $\bar{\gamma} = \theta$ when $\theta>0.5$ or if $\bar{\gamma} = 1-\theta$ when $\theta<0.5$. Additionally, parameter $b$ impacts the vector of social influence through the invariant distribution $\pi$ of the matrix $W$ (see Lemma~\ref{lem:ergochain}) and therefore does impact consensus. Note that when $b=1$, social connection is lost, and polarization emerges with each individual opinion being a function of individual confirmation bias as in Proposition~\ref{prop:singleagent}. 

Moreover, the above results show that consensus type in this dynamic system is also a tail event (i.e., right-biased consensus will either almost surely emerges as the stable equilibrium or almost surely not emerge). If this does not emerge as an equilibrium of this system, then the left-biased consensus has truly emerged as the equilibrium. Figures ~\ref{fig:sampleopline} and ~\ref{fig:sampleopcircle} show the typical opinion sample paths (different simulations) of any agents' opinions in the line and wheel networks, respectively, and convergence to different consensus types (horizontal lines).

% FIGURE ----------------------------------------------------------------------------
\begin{figure}[h]
        \centering
        \begin{subfigure}[b]{0.475\textwidth}
            \centering
            \begin{tikzpicture}
  [scale=1.5,auto=left,every node/.style={circle,fill=gray!30,draw=black,line width=0.8pt,scale=0.8}]
  \node (n1) at (1,2)  {1};
  \node (n2) at (2,2)  {2};
  \node (n3) at (3,2)  {3};
  
      \foreach \from/\to in {n1/n2, n2/n3}
    \draw[line width=0.22mm] (\from) -- (\to);
\end{tikzpicture}
            \vspace{10 mm}
            \caption{line network} 
            \label{fig:linenet}
        \end{subfigure}
        \hfill
        \begin{subfigure}[b]{0.475\textwidth}  
            \centering 
            \begin{tikzpicture}
  [scale=1.5,auto=left,every node/.style={circle,fill=gray!30,draw=black,line width=0.8pt,scale=0.8}]
  \node (n1) at (1,1)  {1};
  \node (n2) at (2,2.3)  {2};
  \node (n3) at (3,1)  {3};
  
      \foreach \from/\to in {n1/n2,n1/n3,n2/n3}
    \draw [line width=0.22mm] (\from) -- (\to);
\end{tikzpicture}
            \caption{wheel network}    
            \label{fig:circlenet}
        \end{subfigure}
        \vskip\baselineskip
        \begin{subfigure}[b]{0.475\textwidth}   
            \centering 
            \includegraphics[width=.8\textwidth]{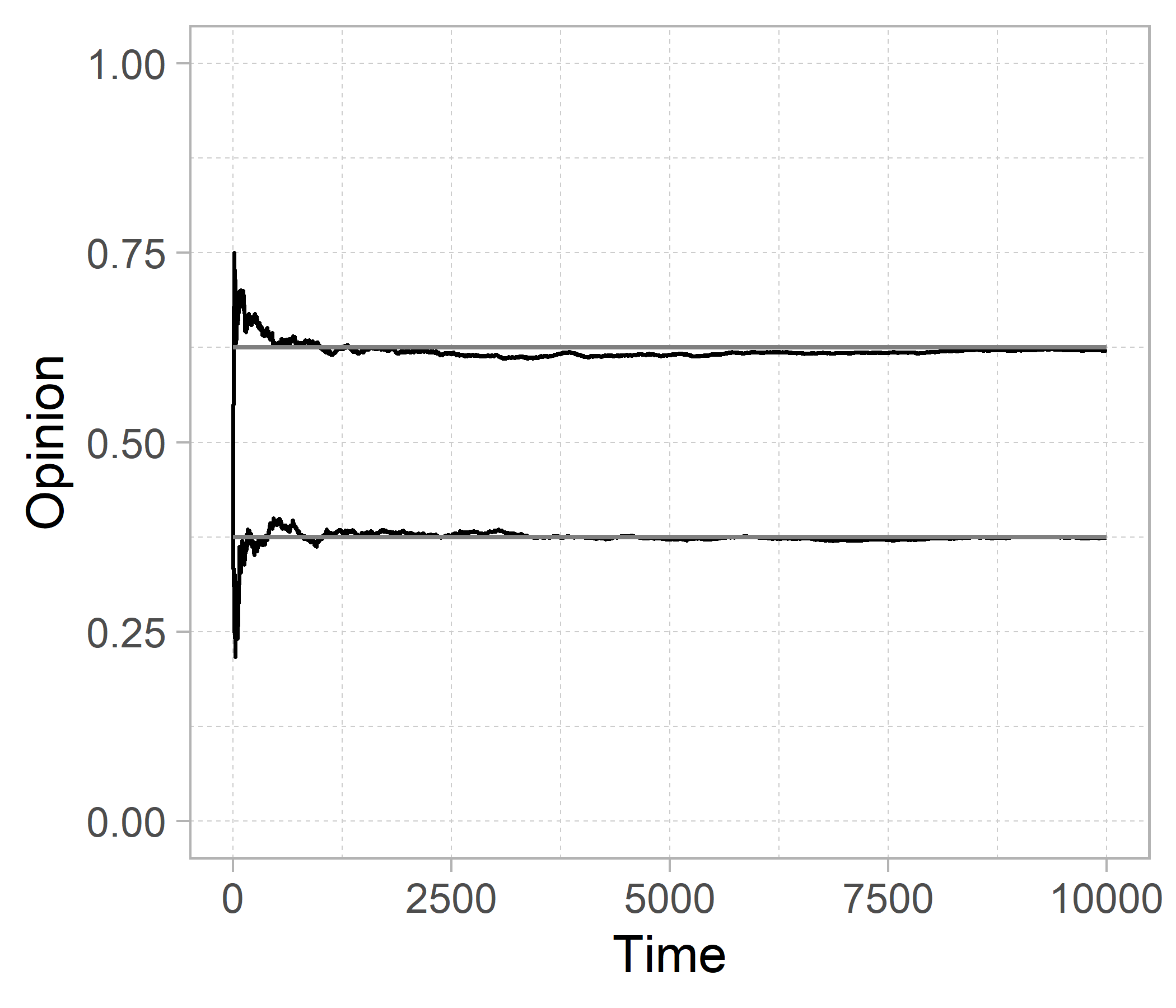}
            \caption{two simulated opinion paths (black) in a line network and theoretical consensuses (gray)}   
            \label{fig:sampleopline}
        \end{subfigure}
        \quad
        \begin{subfigure}[b]{0.475\textwidth}   
            \centering 
            \includegraphics[width=.8\textwidth]{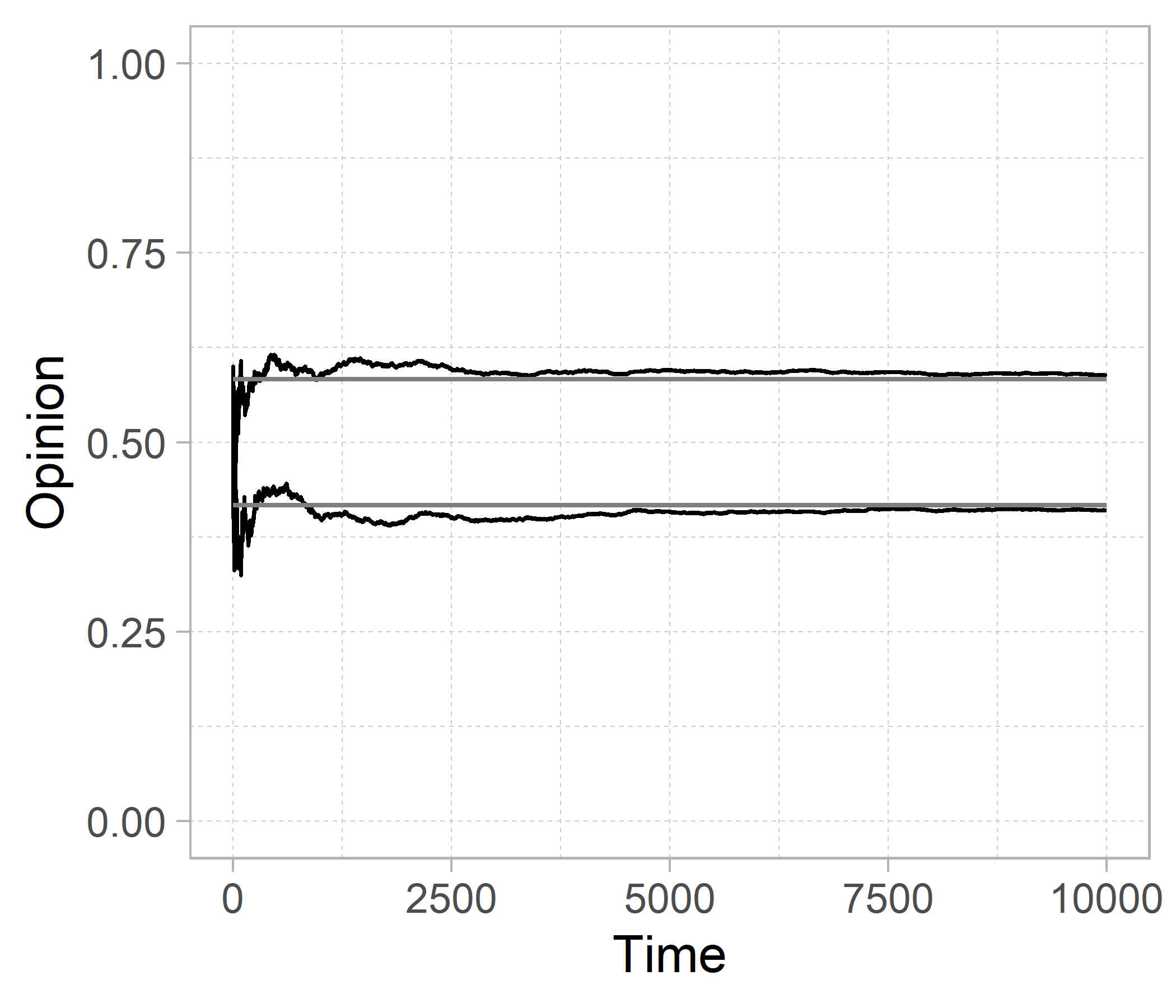}
            \caption{two simulated opinion paths (black) in a wheel network and theoretical consensuses (gray)}   
            \label{fig:sampleopcircle}
        \end{subfigure}
        \caption{four simulations with parameters $T=10,000$, $n=3$,  $\mu=\theta=b=0.5$, $\alpha_{i,0} = \beta_{i,0} = 1$ for all $i \in N$ (so $y_{i,0}= 0.5$ for any $i$) and $\gamma=\left(\gamma_{1},\gamma_{2},\gamma_{3}\right) = (0.8,1,0.2)$.} 
        \label{fig:t1ct2cinlinecircle}
\end{figure}
% ---------------------------------------------------------------------------------------

% \begin{proposition}[No network effect under common prior]
% \label{prop:no-net-effect}
% If biased individuals have common prior, then network structure plays no role in determining the probability of emergence of the less-biased consensus.
% \end{proposition}

In terms of social efficiency, the next Section numerically characterizes the probability of emergence of efficient consensus. As this exercise is not trivial, we rely on simulations of the learning process described in Section~\eqref{sec:model} for selected classic network topologies and for different sets of parameters of interest and a Probit regression model to explore the variability of simulated data.

%============================================================
\section{Monte Carlo simulations: determining consensus type in classic networks}
\label{sec:simulationanalysis}

Determining the value of $p$ (Definition~\ref{def:problessbias}) analytically for a networked society is a challenging task due to several recursions in the opinion formation process. Initial prior distributions may be disproportionately skewed, with some agents being more partisan than others, leading to a propensity for interpreting ambiguous signals differently. In addition, the heterogeneity of priors can induce heterogeneity in the centrality of agents, with partisan agents potentially exerting disproportionate influence over others and amplifying interpreting conflicts in the network. Furthermore, agents may vary in the intensity of their confirmatory bias, with agents holding polar opposite biases being either directly connected or not. This variability can affect how much the heterogeneity of priors influences the interpreting conflict. Analytically computing $p$ thus becomes particularly complex when different partisan agents have different confirmatory biases. These challenges are highlighted in the examples, which demonstrate how signal interpretations depend not only on the stream of public signals observed by agents but also on other agents' beliefs and their location in the network.

% ==============================================================
\subsection{Initial beliefs}
\label{subsec:initprior}
This exercise reduces the dimension of initial beliefs into a single parameter $\tau \in \mathbb{R}_{\geq 0}$ that comprises both \emph{common} and \emph{heterogeneous} priors.
\label{subsec:priors}
% ==============================================================

% ---------------------------------------------------------------
\paragraph{\textbf{(a) Heterogeneous priors.}} Refer to the situation in which there are three types of agents at time $t=0$ with different initial prior distributions: centrists ($\mathcal{C}_{0}$), leftists ($\mathcal{L}_{0}$) and rightists ($\mathcal{R}_{0}$). To distinguish the agents, consider two parameters that intend to measure the degree of \textit{partisanship} of such agents $\tau_{l},\tau_{r} \in \mathbb{N_{+}}$. Hence, such groups are defined as follows: centrists, $\mathcal{C}_{0} = \left\{i \in N \, | \, \alpha_{i,0} = 1 \text{ and } \beta_{i,0} = 1 \right\}$, left-partisan, $\mathcal{L}_{0} = \left\{i \in N \, | \, \alpha_{i,0} = 1 \text{ and } \beta_{i,0} = 1+\tau_{l}\right\}$, and right-partisan, $\mathcal{R}_{0} = \left\{i \in N \, | \, \alpha_{i,0} = 1 + \tau_{r} \text{ and } \beta_{i,0} = 1\right\}$. Notice that the definition implies that initial opinions and precisions $y_{i,0} = \alpha_{i,0}(\alpha_{i,0}+\beta_{i,0})^{-1}$  and $\sigma_{i,0}^{-2} = \left(\alpha_{i,0}\beta_{i,0}\right)^{-1} \left(\alpha_{i,0}+\beta_{i,0}\right)^{2}\left(\alpha_{i,0}+\beta_{i,0}+1\right)$, respectively,
\[
y_{i,0} = \begin{cases}
    \dfrac{1}{2+\tau_{l}} , & \text{if } i \in \mathcal{L}_{0}\\[10pt]
     \dfrac{1}{2} , & \text{if } i \in \mathcal{C}_{0} \\[10pt]
      \dfrac{1+\tau_{r}}{2+\tau_{r}} , & \text{if } i \in \mathcal{R}_{0}
  \end{cases}
\]  

  and
  
\[
\sigma_{i,0}^{-2} = \begin{cases}
    \dfrac{6+5\tau_{l}+\tau_{l}^{2}}{1+\tau_{l}} , & \text{if } i \in \mathcal{L}_{0}\\[10pt]
     12 , & \text{if } i \in \mathcal{C}_{0}\\[10pt]
     \dfrac{6+5\tau_{r}+\tau_{r}^{2}}{1+\tau_{r}} , & \text{if } i \in \mathcal{R}_{0}.
  \end{cases}
\]  

\vspace{5mm}
Notice that $\lim_{\tau\to\infty} y_{i,0}$ is $0$, $\frac{1}{2}$ and $1$, whereas $\lim_{\tau\to\infty} \sigma_{i,0}^{-2}$ is $+\infty$, 12 and $+\infty$ for left-partisan, centrists and right-partisan, respectively.

% ---------------------------------------------------------------
\paragraph{\textbf{(b) Common prior.}} Refer to the situation in which priors parameters are identical across agents (i.e., $\alpha_{i,0} = \alpha\in\mathbb{R_{+}}$ and $\beta_{i,0} = \beta\in\mathbb{R_{+}}$ for all $i\in N$). Particularly, when $\alpha = \beta = 1$, all agents hold a uniform common prior over the unit interval. For any other value, say $\alpha = \beta = k > 1$, agents hold a symmetric bell-shaped common prior over the unit interval, centered at $0.5$. Moreover, as $k\to\infty$, the bell-shaped priors collapse to the point $0.5$ (i.e., the precision of the prior diverges) and all opinions are $y_{i,0} = 0.5$. These cases represent the situation wherein agents begin as \textit{centrists}, and the subsequent asymmetry of interpretation stems from the signals realizations.\footnote{To be more precise, interpretation neutrality does not exist as there is a non-neutral tie-break rule in Equation~\eqref{eq:randomization}. Thus, if the realization of the first public signal is $a$, then this signal will be interpreted as 1 by all agents, as per the tie-break rule. This is without loss of generality for the results presented in this work. The tie-break rule could have been defined in a way that the initial interpretation would be $0$ and intuition and conclusions would remain the same. Finally, if we established no tie-break rules, a more intricate update rule would be needed to maintain the prior when facing an ambiguous signal and opinions were exactly 0.5. Hence agents would keep opinions unchanged until some non-ambiguous realization occurs. In this case, the results would not also change as the neutral tie-break would promote some of the states, and the nature of the problem remains unchanged.} Conversely, when $\alpha_{i,0} = \alpha$ and $\beta_{i,0} = \beta$ for all $i\in N$ and $\alpha > \beta$ $\left(\beta > \alpha\right)$, the society holds a rightist (leftist) common prior (i.e., $y_{i,0} = y > 0.5$ ($y_{i,0} = y < 0.5$) and as $\frac{\alpha}{\beta}\to\infty$ ($\to 0$)), the bell-shaped priors collapse to the point $1$ ($0$) (i.e., the precision of the prior diverges and all opinions become extreme).

% ===============================================================
\subsection{Monte Carlo simulation.} To compute the empirical frequency of the emergence of the less-biased consensus ($\hat{p}$) for states and ambiguity level in $\mathcal{W}$, we simulate the learning process in Section~\eqref{sec:model} in selected classic networks ($G$) (Figure~\ref{fig:nets}). The number of simulations is described by $S \in \mathbb{N_{+}}$ and interaction time is $t\in\mathbb{N_{+}}$. 
\label{subsec:montecarlo}
% ===============================================================

% FIGURE --------------------------------------------------------
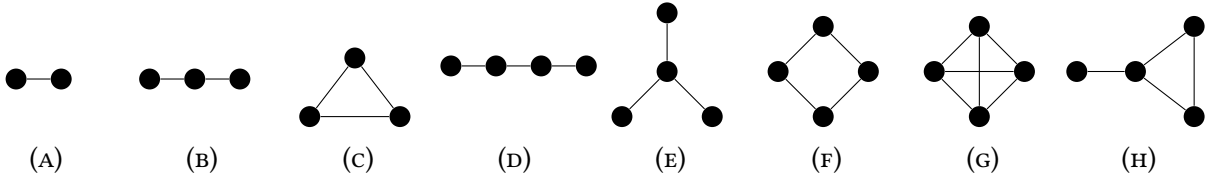
\begin{figure}[H]
  \begin{subfigure}[b]{0.12\textwidth}
  \centering
    \begin{tikzpicture}
  [scale=0.6,auto=left,every node/.style={circle ,fill=black!100,scale=0.7}]
  \node (n1) at (1,2)  {};
  \node (n2) at (2,2)  {};

    \foreach \from/\to in {n1/n2}
    \draw (\from) -- (\to);
\end{tikzpicture}
    \vspace{5mm}
    \caption{}
    \label{fig:1}
  \end{subfigure}
  \begin{subfigure}[b]{0.12\textwidth}
  \centering
    \begin{tikzpicture}
  [scale=0.6,auto=left,every node/.style={circle,fill=black!100,scale=0.7}]
  \node (n1) at (1,2)  {};
  \node (n2) at (2,2)  {};
  \node (n3) at (3,2)  {};
  
      \foreach \from/\to in {n1/n2, n2/n3}
    \draw (\from) -- (\to);
\end{tikzpicture}
        \vspace{5mm}
    \caption{}
    \label{fig:2}
  \end{subfigure}
  \begin{subfigure}[b]{0.12\textwidth}
  \centering
    \begin{tikzpicture}
  [scale=0.6,auto=left,every node/.style={circle,fill=black!100,scale=0.7}]
  \node (n1) at (1,1)  {};
  \node (n2) at (2,2.3)  {};
  \node (n3) at (3,1)  {};
  
      \foreach \from/\to in {n1/n2,n1/n3,n2/n3}
    \draw (\from) -- (\to);
\end{tikzpicture}
    \caption{}
    \label{fig:3}
  \end{subfigure}
  \begin{subfigure}[b]{0.12\textwidth}
  \centering
    \begin{tikzpicture}
  [scale=0.6,auto=left,every node/.style={circle,fill=black!100,scale=0.7}]
  \node (n1) at (1,2)  {};
  \node (n2) at (2,2)  {};
  \node (n3) at (3,2)  {};
  \node (n4) at (4,2)  {};
  
      \foreach \from/\to in {n1/n2,n2/n3,n3/n4}
    \draw (\from) -- (\to);
\end{tikzpicture}
    \caption{}
    \label{fig:4}
  \end{subfigure}
  \begin{subfigure}[b]{0.12\textwidth}
  \centering
    \begin{tikzpicture}
  [scale=0.6,auto=left,every node/.style={circle,fill=black!100,scale=0.7}]
  \node (n1) at (1,1)  {};
  \node (n2) at (2,2)  {};
  \node (n3) at (2,3.3)  {};
  \node (n4) at (3,1)  {};

  \foreach \from/\to in {n1/n2,n2/n3,n2/n4}
    \draw (\from) -- (\to);
\end{tikzpicture}
    \caption{}
    \label{fig:5}
  \end{subfigure}
  \begin{subfigure}[b]{0.12\textwidth}
  \centering
   \begin{tikzpicture}
  [scale=0.6,auto=left,every node/.style={circle,fill=black!100,scale=0.7}]
  \node (n1) at (1,2)  {};
  \node (n2) at (2,1)  {};
  \node (n3) at (3,2)  {};
  \node (n4) at (2,3)  {};

  \foreach \from/\to in {n1/n2,n2/n3,n3/n4,n4/n1}
    \draw (\from) -- (\to);
\end{tikzpicture}
    \caption{}
    \label{fig:6}
  \end{subfigure}
  \begin{subfigure}[b]{0.12\textwidth}
  \centering
   \begin{tikzpicture}
  [scale=0.6,auto=left,every node/.style={circle,fill=black!100,scale=0.7}]
  \node (n1) at (1,2)  {};
  \node (n2) at (2,1)  {};
  \node (n3) at (3,2)  {};
  \node (n4) at (2,3)  {};

  \foreach \from/\to in {n1/n2,n2/n3,n3/n4,n4/n1,n1/n3,n2/n4}
    \draw (\from) -- (\to);
\end{tikzpicture}
    \caption{}
    \label{fig:7}
  \end{subfigure}
  \begin{subfigure}[b]{0.12\textwidth}
  \centering
   \begin{tikzpicture}
  [scale=0.6,auto=left,every node/.style={circle,fill=black!100,scale=0.7}]
  \node (n1) at (1,2)  {};
  \node (n2) at (2.3,2)  {};
  \node (n3) at (3.6,1)  {};
  \node (n4) at (3.6,3)  {};

  \foreach \from/\to in {n1/n2,n2/n3,n2/n4,n3/n4}
    \draw (\from) -- (\to);
\end{tikzpicture}
    \caption{}
    \label{fig:8}
  \end{subfigure}
  \caption{Classic networks -- $G$}
  \label{fig:nets}
\end{figure}

% ---------------------------------------------------------------
Each simulation allows some parameters to vary (see details below), so changes in $\hat{p}$ can be captured owing to changes in such parameters. However, in each simulation, parameter choices are identical for all networks. Hence, we can properly isolate effects on $\hat{p}$ owing to parameter variability. For any given simulation $S$, the realization of the public signals is also identical across networks. Therefore, given the choice of parameters in each simulation $S$, the simulated frequency of the less-biased consensus in a network $G$ follows Definition~\ref{def:problessbias}, and is computed as follows:
\begin{align}
\hat{p}_{G} = \frac{1}{S} \sum_{S} \mathbbm{1}\bigg\{&\left|\lim_{t\to\infty} y_{i,t}^{S,G} - \left((1-\mu_{S})\theta_{S} + \mu_{S}\gamma_{S}\right)\right| <\epsilon \text{ and } \theta_{S} > 0.5 \text{ , or } \nonumber\\
&\left|\lim_{t\to\infty} y_{i,t}^{S,G} - \left( (1-\mu_{S})\theta_{S} + \mu_{S}(1-\gamma_{S})\right)\right| <\epsilon \text{ and } \theta_{S} < 0.5\bigg\},
\end{align} for a small $\epsilon>0$.

\paragraph{\textbf{Partisanship effect.}} This exercise mainly aims to understand how degree of partisanship ($\tau$) affects the probability of less-biased consensus emerging. Hence, 
 both partisans are placed in the available nodes uniformly at random in each simulation. Thus, despite being in equal number, their level of centrality may differ in each simulation. Regarding the degree of partisanship, when $\tau_{l}=\tau_{r}=\tau = 0$, agents have a common uninformative prior (uniform distribution over the unit interval), and no partisan agents are found. Conversely, when $\tau_{l}=\tau_{r}= \tau > 0$, we have heterogeneous priors in which the degree of partisanship of both partisans is equally balanced. 

Moreover, to avoid an extra layer of heterogeneity, we allow all agents to be biased (i.e., $\gamma_i = \gamma_{S} = \gamma = 1$ for all $i$ and $S$). The benefit of fixing $\gamma=1$ across agents and simulations is that we know how each node is interpreting ambiguous signals and, hence, this allows us to study the effect of partisans centrality. Thus, this simulation assumes the following configuration:

\begin{itemize}
\setlength\itemsep{0.6em}

\item \textbf{Fixed parameters} (for all simulations) are as follows:

\begin{itemize}
\setlength\itemsep{0.2em}
\item \textbf{Learning}: $\gamma_{i} = \gamma = 1$ for all $i$,

\item \textbf{Bayesian}: $b = 0.5$,

\item \textbf{Duration}: $t = 700$,

\item \textbf{Priors}: $(\alpha_{i,0},\beta_{i,0})$ = $(1,1)$ for all $i \in N$, $|\mathcal{L}_{0}|=|\mathcal{R}_{0}|=1$,

\item \textbf{Information}: $\mu = 0.6$.

\end{itemize}

\item \textbf{Variable parameters} (for each simulation $S$) are

\begin{itemize}
\setlength\itemsep{0.2em}
\item \textbf{Information}: $\theta_{S} \in \{0.2, 0.8\} $ and 

\item \textbf{Initial prior}: $\tau_{l} = \tau_{r} = \tau_{S}$ such that $\tau_{S} \in \{0,1,10,30\}$,

\item \textbf{Partisans location}: partisans are placed uniformly at random in the nodes of each network.

\end{itemize}
\end{itemize}

Table~\ref{tab:sumstatsim} presents the summary statistics of the simulated data. Besides reporting the simulated $\hat{p}$ for each network, the table shows other variables that present variability across simulations $S$, as follows: 
\begin{enumerate}
    \item \emph{degree centrality} of both partisans in networks whose degree centrality variance is positive (i.e., networks (B), (D), (E) and (H)),
    \item \emph{open-mindedness} dummy variable valued at 1 when partisans with opposite beliefs are connected, and 0 if otherwise\footnote{ For any given network induced by some adjacency matrix $g$, a partisan agent $i\in N$ is considered open-minded if for some other partisan agent $j\in N$ with opposite belief, we have that $j\in N_{i}^{out}(g)$. Conversely, $i$ is narrow-minded if $j\notin N_{i}^{out}(g)$. Open-mindedness is defined quite similarly to \textit{heterophily} already established in social and economic networks literature. Both reflect the tendency of different people to connect with each other.},
    \item a dummy variable named \emph{first impression} (in reference to the work of \cite{Rabin1999}) that takes on value 1 when the realization of the very first public signal nudges the society toward the true state $\theta$ in that particular simulation and network\footnote{In mathematical terms, the first impression in simulation $S$ and network $G$ is defined as 
\[\text{FI}^{S,G} = \mathbbm{1}\{s_{1}^{S,G}=a \text{ or } s_{1}^{S,G}=1 \text{, if } \theta_{S}\geq0.5\} + \mathbbm{1}\{s_{1}^{S,G}=0  \text{, if } \theta_{S}<0.5\}.
\]},
    \item a dummy variable for when $\theta_S = 0.8$,
    \item and dummy variables for all levels of partisanship $\tau$.
\end{enumerate}

Besides analyzing the statistical relation established between $\hat{p}$ and different levels of $\tau$, Table~\ref{tab:probit-bench} presents and discusses the results of a Probit regression model wherein the dependent variable is a dummy variable that takes on the value 1 when less-biased consensus is formed in simulation $S$ after $t$ periods of agents interaction and the independent variables are (i) a dummy variable called \emph{partisan centrality advantage} (PCA) that takes on value 1 if the rightist (leftist) is more central than the leftist (rightist) when $\theta\geq0.5$ ($\theta<0.5$), (ii) open-mindedness dummy (OM), (iii) the first impression dummy, (iv) the dummy variable for when $\theta_S = 0.8$, and (v) all dummy variables for all different levels of partisanship $\tau$. The exercise aims at exploring the variability of these variables to assess their relative importance in terms of increasing the odds of the less biased consensus to be reached.   

%--------------------------------------

% Table created by stargazer v.5.2.2 by Marek Hlavac, Harvard University. E-mail: hlavac at fas.harvard.edu
% Date and time: Sun, Jul 17, 2022 - 3:13:50 PM
\begin{table}[ht] 
\centering 
\footnotesize 
\begin{tabular}{@{\extracolsep{5pt}}lcccccccc} 
\\[-1.8ex]\hline 
\hline \\[-1.8ex] 
Statistic & \multicolumn{1}{c}{N} & \multicolumn{1}{c}{Mean} & \multicolumn{1}{c}{St. Dev.} & \multicolumn{1}{c}{Min} & \multicolumn{1}{c}{Pctl(25)} & \multicolumn{1}{c}{Median} & \multicolumn{1}{c}{Pctl(75)} & \multicolumn{1}{c}{Max} \\ 
\hline \\[-1.8ex] 
$\hat{p}_{(A)}$ & 21,040 & 0.809 & 0.393 & 0 & 1 & 1 & 1 & 1 \\ 
$\hat{p}_{(B)}$ & 21,040 & 0.679 & 0.467 & 0 & 0 & 1 & 1 & 1 \\ 
$\hat{p}_{(C)}$ & 21,040 & 0.810 & 0.392 & 0 & 1 & 1 & 1 & 1 \\ 
$\hat{p}_{(D)}$ & 21,040 & 0.735 & 0.441 & 0 & 0 & 1 & 1 & 1 \\ 
$\hat{p}_{(E)}$ & 21,040 & 0.698 & 0.459 & 0 & 0 & 1 & 1 & 1 \\ 
$\hat{p}_{(F)}$ & 21,040 & 0.826 & 0.379 & 0 & 1 & 1 & 1 & 1 \\ 
$\hat{p}_{(G)}$ & 21,040 & 0.787 & 0.409 & 0 & 1 & 1 & 1 & 1 \\ 
$\hat{p}_{(H)}$ & 21,040 & 0.700 & 0.458 & 0 & 0 & 1 & 1 & 1 \\ 
$\mathcal{R}_{0}$ degree in (B) & 21,040 & 1.332 & 0.471 & 1 & 1 & 1 & 2 & 2 \\ 
$\mathcal{R}_{0}$ degree in (D) & 21,040 & 1.503 & 0.500 & 1 & 1 & 2 & 2 & 2 \\ 
$\mathcal{R}_{0}$ degree in (E) & 21,040 & 1.516 & 0.875 & 1 & 1 & 1 & 3 & 3 \\ 
$\mathcal{R}_{0}$ degree in (H) & 21,040 & 2.010 & 0.704 & 1 & 2 & 2 & 3 & 3 \\ 
$\mathcal{L}_{0}$ degree in (B) & 21,040 & 1.333 & 0.471 & 1 & 1 & 1 & 2 & 2 \\ 
$\mathcal{L}_{0}$ degree in (D) & 21,040 & 1.503 & 0.500 & 1 & 1 & 2 & 2 & 2 \\ 
$\mathcal{L}_{0}$ degree in (E) & 21,040 & 1.501 & 0.866 & 1 & 1 & 1 & 3 & 3 \\ 
$\mathcal{L}_{0}$ degree in (H) & 21,040 & 1.996 & 0.707 & 1 & 1 & 2 & 2 & 3 \\ 
Open mind (OM) in (B) & 21,040 & 0.666 & 0.472 & 0 & 0 & 1 & 1 & 1 \\ 
Open mind (OM) in (D) & 21,040 & 0.501 & 0.500 & 0 & 0 & 1 & 1 & 1 \\ 
Open mind (OM) in (E) & 21,040 & 0.508 & 0.500 & 0 & 0 & 1 & 1 & 1 \\ 
Open mind (OM) in (H) & 21,040 & 0.667 & 0.471 & 0 & 0 & 1 & 1 & 1 \\ 
First impression (FI) & 21,040 & 0.619 & 0.486 & 0 & 0 & 1 & 1 & 1 \\ 
$\mathbbm{1}\{\theta=0.8\}$ & 21,040 & 0.500 & 0.500 & 0 & 0 & 0.5 & 1 & 1 \\ 
$\mathbbm{1}\{\tau=0\}$ & 21,040 & 0.250 & 0.433 & 0 & 0 & 0 & 0.2 & 1 \\ 
$\mathbbm{1}\{\tau=1\}$ & 21,040 & 0.250 & 0.433 & 0 & 0 & 0 & 0.2 & 1 \\ 
$\mathbbm{1}\{\tau=10\}$ & 21,040 & 0.250 & 0.433 & 0 & 0 & 0 & 0.2 & 1 \\ 
$\mathbbm{1}\{\tau=30\}$ & 21,040 & 0.250 & 0.433 & 0 & 0 & 0 & 0.2 & 1 \\ 
\hline \\[-1.8ex] 
\end{tabular} 
\caption{Summary statistics - simulated $\hat{p}$ and parameters} 
\label{tab:sumstatsim} 
\end{table} 

%--------------------------------------

% ==================================================================================

% Table created by stargazer v.5.2.2 by Marek Hlavac, Harvard University. E-mail: hlavac at fas.harvard.edu
% Date and time: Sat, Jul 16, 2022 - 3:50:01 PM
% Requires LaTeX packages: dcolumn 
\begin{table}[ht] \centering 
\scriptsize 
\begin{tabular}{@{\extracolsep{-15pt}}lD{.}{.}{-2} D{.}{.}{-2} D{.}{.}{-2} D{.}{.}{-2} D{.}{.}{-2} D{.}{.}{-2} D{.}{.}{-2} D{.}{.}{-2} } 
\\[-1.8ex]\hline 
\hline \\[-1.8ex] 
 & \multicolumn{8}{c}{Dep. Variable: probability of emergence of less biased consensus ($\hat{p}_G$)} \\ 
\cline{2-9} 
 & \multicolumn{1}{c}{(A)} & \multicolumn{1}{c}{(B)} & \multicolumn{1}{c}{(C)} & \multicolumn{1}{c}{(D)} & \multicolumn{1}{c}{(E)} & \multicolumn{1}{c}{(F)} & \multicolumn{1}{c}{(G)} & \multicolumn{1}{c}{(H)} \\ 
\hline \\[-1.8ex] 
 Partisan centrality advantage (PCA) &  & 1.86^{***} &  & 0.87^{***} & 1.91^{***} &  &  & 1.62^{***} \\ 
  &  & (0.04) &  & (0.04) & (0.05) &  &  & (0.05) \\ 
  Open mind (OM) &  & -1.06^{***} &  & 0.003 & -0.96^{***} & 0.50^{***} &  & 0.45^{***} \\ 
  &  & (0.04) &  & (0.03) & (0.04) & (0.03) &  & (0.04) \\ 
  First impression (FI) & 1.64^{***} & 2.13^{***} & 1.93^{***} & 1.13^{***} & 2.24^{***} & 2.32^{***} & 2.16^{***} & 1.36^{***} \\ 
  & (0.03) & (0.05) & (0.04) & (0.04) & (0.05) & (0.06) & (0.04) & (0.05) \\ 
  PCA $\times$ FI &  & 0.83^{***} &  & 0.91^{***} & 0.77^{***} &  &  & 0.61^{***} \\ 
  &  & (0.08) &  & (0.08) & (0.10) &  &  & (0.07) \\ 
  PCA $\times$ OM &  &  &  & -0.05 &  &  &  & -0.39^{***} \\ 
  &  &  &  & (0.06) &  &  &  & (0.06) \\ 
  OM $\times$ FI &  & -1.10^{***} &  & -0.29^{***} & -1.18^{***} & -0.61^{***} &  & -0.10^{*} \\ 
  &  & (0.06) &  & (0.04) & (0.06) & (0.06) &  & (0.05) \\ 
  $\mathbbm{1}\{\tau=1\}$ & 0.54^{***} & 0.48^{***} & 0.56^{***} & 0.58^{***} & 0.34^{***} & 0.52^{***} & 0.33^{***} & 0.42^{***} \\ 
  & (0.03) & (0.03) & (0.03) & (0.03) & (0.03) & (0.03) & (0.03) & (0.03) \\ 
  $\mathbbm{1}\{\tau=10\}$ & 0.54^{***} & -0.41^{***} & 0.66^{***} & 0.13^{***} & -0.37^{***} & 0.87^{***} & 0.38^{***} & -0.08^{*} \\ 
  & (0.03) & (0.03) & (0.03) & (0.03) & (0.03) & (0.04) & (0.03) & (0.03) \\ 
  $\mathbbm{1}\{\tau=30\}$ & 0.54^{***} & -0.41^{***} & 0.66^{***} & -0.16^{***} & -0.38^{***} & 0.98^{***} & 0.39^{***} & -0.65^{***} \\ 
  & (0.03) & (0.03) & (0.03) & (0.03) & (0.03) & (0.04) & (0.03) & (0.03) \\ 
  $\mathbbm{1}\{\theta=0.8\}$ & -0.33^{***} & -0.02 & -0.26^{***} & -0.05 & 0.04 & -0.15^{***} & -0.19^{***} & -0.06^{*} \\ 
  & (0.03) & (0.03) & (0.04) & (0.03) & (0.03) & (0.04) & (0.04) & (0.03) \\ 
  Constant & -0.15^{***} & 0.03 & -0.31^{***} & -0.27^{***} & -0.08^{**} & -0.68^{***} & -0.30^{***} & -0.85^{***} \\ 
  & (0.02) & (0.03) & (0.02) & (0.03) & (0.03) & (0.03) & (0.03) & (0.04) \\ 
 \hline \\[-1.8ex] 
Observations & \multicolumn{1}{c}{21,040} & \multicolumn{1}{c}{21,040} & \multicolumn{1}{c}{21,040} & \multicolumn{1}{c}{21,040} & \multicolumn{1}{c}{21,040} & \multicolumn{1}{c}{21,040} & \multicolumn{1}{c}{21,040} & \multicolumn{1}{c}{21,040} \\ 
Log Likelihood & \multicolumn{1}{c}{-7,873.93} & \multicolumn{1}{c}{-7,648.32} & \multicolumn{1}{c}{-7,053.26} & \multicolumn{1}{c}{-9,546.68} & \multicolumn{1}{c}{-7,350.89} & \multicolumn{1}{c}{-6,348.13} & \multicolumn{1}{c}{-6,941.64} & \multicolumn{1}{c}{-8,660.61} \\ 
Akaike Inf. Crit. & \multicolumn{1}{c}{15,759.90} & \multicolumn{1}{c}{15,316.60} & \multicolumn{1}{c}{14,118.50} & \multicolumn{1}{c}{19,115.40} & \multicolumn{1}{c}{14,721.80} & \multicolumn{1}{c}{12,712.30} & \multicolumn{1}{c}{13,895.30} & \multicolumn{1}{c}{17,343.20} \\ 
\hline 
\hline \\[-1.8ex] 
\textit{Note:}  & \multicolumn{8}{r}{$^{*}$p$<$0.05; $^{**}$p$<$0.01; $^{***}$p$<$0.001} \\ 
\end{tabular} 
  \caption{Regression results: Probit} 
  \label{tab:probit-bench} 
\end{table} 

% ==================================================================================

Moreover, as the main goal is to understand the effect of partisanship on $\hat{p}$, Table~\ref{tab:simulation} shows $\hat{p}$ for different levels of $\tau$.

% ==================================================================================
\begin{table}[ht]
\footnotesize
\centering
\begin{tabular}{>{\centering\arraybackslash}m{1cm} >{\centering\arraybackslash}m{2.3cm} >{\centering\arraybackslash}m{3cm} >{\centering\arraybackslash}m{1.2cm} >{\centering\arraybackslash}m{1.2cm}
>{\centering\arraybackslash}m{1.2cm}
>{\centering\arraybackslash}m{1.2cm}
>{\centering\arraybackslash}m{1.2cm}} \toprule
    \textbf{Size} & \textbf{Network Topology} & \textbf{Type} & \textbf{Label} & $\underset{(\tau=0)}{\hat{p}}$ & $\underset{(\tau=1)}{\hat{p}}$ & $\underset{(\tau=10)}{\hat{p}}$ & $\underset{(\tau=30)}{\hat{p}}$ \\ \midrule
     \multirow{ 1}{*}{$n=1$} & \begin{tikzpicture}
  [scale=0.6,auto=left,every node/.style={circle ,fill=black!100,scale=0.7}]
  \node (n1) at (1,2)  {};

%    \foreach \from/\to in {n1/n2}
%    \draw (\from) -- (\to);
\end{tikzpicture} &  single agent & (SA) & $0.702$ & - & - & -  \\ \midrule
     \multirow{ 1}{*}{$n=2$} & \begin{tikzpicture}
  [scale=0.6,auto=left,every node/.style={circle ,fill=black!100,scale=0.7}]
  \node (n1) at (1,2)  {};
  \node (n2) at (2,2)  {};

    \foreach \from/\to in {n1/n2}
    \draw (\from) -- (\to);
\end{tikzpicture} &  line (complete) & (A) & $0.702$ & $0.844$ & $0.844$ & $0.844$ \\ \midrule
     \multirow{ 2}{*}{$n=3$} & \begin{tikzpicture}
  [scale=0.6,auto=left,every node/.style={circle,fill=black!100,scale=0.7}]
  \node (n1) at (1,2)  {};
  \node (n2) at (2,2)  {};
  \node (n3) at (3,2)  {};
  
      \foreach \from/\to in {n1/n2, n2/n3}
    \draw (\from) -- (\to);
\end{tikzpicture}  & line & (B)  &$0.702$ & $0.802$ & $0.603$ & $0.610$  \\ 
     & \begin{tikzpicture}
  [scale=0.6,auto=left,every node/.style={circle,fill=black!100,scale=0.7}]
  \node (n1) at (1,1)  {};
  \node (n2) at (2,2.3)  {};
  \node (n3) at (3,1)  {};
  
      \foreach \from/\to in {n1/n2,n1/n3,n2/n3}
    \draw (\from) -- (\to);
\end{tikzpicture} &  wheel (complete) & (C)  &$0.702$ & $0.834$ & $0.852$ & $0.852$\\ \midrule
     & \begin{tikzpicture}
  [scale=0.6,auto=left,every node/.style={circle,fill=black!100,scale=0.7}]
  \node (n1) at (1,2)  {};
  \node (n2) at (2,2)  {};
  \node (n3) at (3,2)  {};
  \node (n4) at (4,2)  {};
  
      \foreach \from/\to in {n1/n2,n2/n3,n3/n4}
    \draw (\from) -- (\to);
\end{tikzpicture} & line & (D)  &$0.702$ & $0.841$ & $0.739$ & $0.659$ \\
     & \begin{tikzpicture}
  [scale=0.6,auto=left,every node/.style={circle,fill=black!100,scale=0.7}]
  \node (n1) at (1,1)  {};
  \node (n2) at (2,2)  {};
  \node (n3) at (2,3.3)  {};
  \node (n4) at (3,1)  {};

  \foreach \from/\to in {n1/n2,n2/n3,n2/n4}
    \draw (\from) -- (\to);
\end{tikzpicture} & star & (E)   &$0.721$ & $0.787$ & $0.643$ & $0.641$\\
     $n=4$ & \begin{tikzpicture}
  [scale=0.6,auto=left,every node/.style={circle,fill=black!100,scale=0.7}]
  \node (n1) at (1,2)  {};
  \node (n2) at (2,1)  {};
  \node (n3) at (3,2)  {};
  \node (n4) at (2,3)  {};

  \foreach \from/\to in {n1/n2,n2/n3,n3/n4,n4/n1}
    \draw (\from) -- (\to);
\end{tikzpicture} &  wheel & (F)  &$0.702$ & $0.818$ & $0.884$ & $0.899$ \\
     & \begin{tikzpicture}
  [scale=0.6,auto=left,every node/.style={circle,fill=black!100,scale=0.7}]
  \node (n1) at (1,2)  {};
  \node (n2) at (2,1)  {};
  \node (n3) at (3,2)  {};
  \node (n4) at (2,3)  {};

  \foreach \from/\to in {n1/n2,n2/n3,n3/n4,n4/n1,n1/n3,n2/n4}
    \draw (\from) -- (\to);
\end{tikzpicture} & complete\ & (G)  &$0.727$ & $0.801$ & $0.810$ & $0.812$ \\
     & \begin{tikzpicture}
  [scale=0.6,auto=left,every node/.style={circle,fill=black!100,scale=0.7}]
  \node (n1) at (1,2)  {};
  \node (n2) at (2.3,2)  {};
  \node (n3) at (3.6,1)  {};
  \node (n4) at (3.6,3)  {};

  \foreach \from/\to in {n1/n2,n2/n3,n2/n4,n3/n4}
    \draw (\from) -- (\to);
\end{tikzpicture} & paw & (H)  &$0.720$ & $0.814$ & $0.706$ & $0.558$ \\ \midrule
     $S$&  &  &   & 21,040 & 21,040 & 21,040 & 21,040 \\ \bottomrule 
\end{tabular}
\caption{Simulated frequency of the emergence of less biased consensus $\hat{p}$.}
\label{tab:simulation}
\end{table}
% ==================================================================================

Based on statistical and regression analyses of simulation data, we present results that hold for classic network structures presented above and for the case one draws a single pair in $\mathcal{W}$ uniformly at random (i.e., under no knowledge of $\theta$ or $\mu$). Although simulation results are not generalized to a broader range of network topologies, the structures analyzed are sufficiently general and have similar characteristics of real-world networks. Hence, as per the data from simulations with common prior ($\tau=0$, no partisanship), we can see that network structure has limited effect over $\hat{p}$. This evidence is stated as the following result.

\begin{result}[topology neutrality]
If society is biased ($\gamma = 1$) and have common prior (i.e., $\tau=0$), then network topology has no significant impact on $\hat{p}$.
\end{result}

The intuition of this result relies on the fact that as signals are public and all agents share the same bias intensity $\gamma_{i} = 1$, no interpretation diversity exists regardless of signals realization. If agents begin observing signal 1, then all agents will become more rightists and network externalities cannot countervail this effect anyhow. The same argument applies to all other signals, including the ambiguous one. Hence, this is identical to the case of a single individual learning from signals. Moreover, based on the data from simulations with a common prior ($\tau=0$, no partisanship) and low priors heterogeneity ($\tau=1$, low partisanship), partisanship seems to have a non-negative effect on consensus efficiency.

\begin{result}[low partisanship effect]
\label{res:lowpart}
In expected terms, a biased society with low degree of partisanship ($\tau=1$) can reach the less-biased consensus as the same biased society with no partisanship at all ($\tau=0$).
\end{result}

This can be seen in two ways: (i) there is a statistically significant difference between proportions in Table~\ref{tab:simulation} under $\tau =0$ and $\tau =1$, and (ii) coefficients of the dummy variable $\mathbbm{1}\{\tau=1\}$ are all positive and significant.\footnote{Note that the coefficients are with respect to the omitted dummy variable $\mathbbm{1}\{\tau=0\}$. The omission is needed so there is no perfect colinearization.}

Partisanship acts to counter the effect of initial ambiguous signals. Under no partisan influence, agents' interpretations depend exclusively on the signals. The realization of the initial signal is crucial to determine what bias opinions will have and, hence, it is determinant to consensus efficiency. Conversely, when some partisan agents are present, priors parameters $\alpha$'s and $\beta$'s are shifted up, by right- and left-partisans, respectively, which makes opinions more robust to initial signal realization. However, some "optimal" level of partisanship exists as high partisanship, for most topologies, has a nonmonotonic effect over the probability of emergence of the less-biased consensus. This result is generalized as follows.

\begin{result}[high partisanship effect]
\label{res:highpart}
In expected terms, a biased society with low partisanship ($\tau=1$) can reach the efficient consensus as the same biased society with high partisanship ($\tau=30$). Exceptions include the wheel and complete networks (i.e. (C), (F) and (G)) in which $\hat{p}$ is non-decreasing with partisanship.
\end{result}

This can be seen in two ways: (i) there is a statistically significant difference between the proportions in Table~\ref{tab:simulation} under $\tau = 1$ and $\tau = 30$, and (ii) the coefficients of the dummy variable $\mathbbm{1}\{\tau=30\}$ are higher than the coefficients of the dummy variable $\mathbbm{1}\{\tau=1\}$ for the referred networks.

Moreover, if an imbalance in partisanship exists, then partisan agents can unbalance opinions in the same way realization of the first signals do. More explicitly, a partisan agent with high degree of partisanship will almost never interpret ambiguous evidence in a way that disagrees with his beliefs and a similar effect applies to his neighbors. However, partisan agents might be more or less connected and even connected to each other. 

Naturally, in networks (A), (C), and (G), partisan agents are invariably open-minded as those networks are complete (i.e., all agents are connected with every other agent in the network, regardless of their types). Hence, analyzing the effect of OM in networks (B), (D), (E), (F), and (H) as those agents are not always connected. Table~\ref{tab:open-narrow} reports simulated probability $\hat{p}$ in those cases, and the next result is stated immediately.

% TABLE OPEN-NARROW =============================================================
 % latex table generated in R 4.0.2 by xtable 1.8-4 package
% Sat Jul 16 15:47:51 2022
\begin{table}[ht]
\centering
\begingroup\footnotesize
\begin{tabular}{llcccc}
  \toprule
Network & Partisans & $\underset{(\tau=0)}{\hat{p}}$ & $\underset{(\tau=1)}{\hat{p}}$ & $\underset{(\tau=10)}{\hat{p}}$ & $\underset{(\tau=30)}{\hat{p}}$ \\ 
  \midrule
(B) & pooled & 0.702 & 0.802 & 0.603 & 0.61 \\ 
   & open-minded & 0.706 & 0.806 & 0.493 & 0.502 \\ 
   & norrow-minded & 0.693 & 0.794 & 0.821 & 0.826 \\ 
   \midrule
(D) & pooled & 0.702 & 0.841 & 0.739 & 0.659 \\ 
   & open-minded & 0.693 & 0.846 & 0.668 & 0.649 \\ 
   & norrow-minded & 0.711 & 0.837 & 0.812 & 0.668 \\ 
   \midrule
(E) & pooled & 0.721 & 0.787 & 0.643 & 0.641 \\ 
   & open-minded & 0.724 & 0.797 & 0.507 & 0.498 \\ 
   & norrow-minded & 0.719 & 0.776 & 0.782 & 0.784 \\ 
   \midrule
(F) & pooled & 0.702 & 0.818 & 0.884 & 0.899 \\ 
   & open-minded & 0.713 & 0.829 & 0.918 & 0.936 \\ 
   & norrow-minded & 0.681 & 0.796 & 0.812 & 0.825 \\ 
   \midrule
(H) & pooled & 0.72 & 0.814 & 0.706 & 0.558 \\ 
   & open-minded & 0.719 & 0.82 & 0.711 & 0.579 \\ 
   & norrow-minded & 0.723 & 0.803 & 0.695 & 0.517 \\ 
   \bottomrule
\end{tabular}
\endgroup
\caption{Open and norrow-minded partisans - Result~\ref{res:open-narrow}} 
\label{tab:open-narrow}
\end{table}
 
% ===============================================================================

\begin{result}[open-minded partisans]
\label{res:open-narrow}
In expected terms, for biased agents, open-mindedness of partisans increases odds of less-biased consensus formation in networks (F) and (H). Conversely, narrow-mindedness of partisans increases the odds of the less-biased consensus formation in networks (B) and (E).
\end{result}

This can be seen in two ways: (i) a statistically significant (positive) difference between the proportions in Table~\ref{tab:open-narrow} under the open-minded (OM) and narrow-minded (NM) cases. That is, OM increases the odds relative to the NM case for networks (F) and (H), whereas NM increases the odds with respect to the OM case in networks (B) and (E). (ii) Coefficients of the dummy variable OM in the Probit regression are only positive for networks (F) and (H) and negative for the (B) and (E) networks.

The intuition relies on Results \ref{res:lowpart} and \ref{res:highpart}. In networks (B) and (E), OM would imply that one partisan is disproportionately more influential than the other (i.e., there would be a partisan centrality imbalance). In expected terms, the benefit of having a central partisan that induces the underlying true state is offset by the costs of having the opposite situation of the polar opposite partisan inducing misinformation. In network (F), OM means that partisans will moderate quickly as they are connected directly to each other but can still shield society from initial misleading signals, as discussed. Finally, in network (H), OM avoids centrality imbalance.

Another case of interest is the one in which agents are connected through a line. 

\begin{result}[line networks]
\label{res:linenets}
In expected terms, for biased agents connected through any sufficiently long line network ($n\geq3$), high partisanship ($\tau>1$) reduces the odds of reaching the less biased consensus. Moreover, for any given level of partisanship $\tau>0$, a longer line (higher $n$) increases the odds of reaching the less biased consensus. 
\end{result}

This can be seen in two ways: (i) there is a statistically significant difference between proportions in Table~\ref{tab:simulation} in both networks (B) and (D) for the proportions when $\tau > 1$ compared to the ones under $\tau \leq 1$, and (ii) the coefficients of the dummy variable $\mathbbm{1}\{\tau=30\}$ are negative for the referred networks.

A final result is related to the higher odds of reaching the less biased consensus when nodes are equally central and partisanship is high.

\begin{result}[regular networks]
\label{res:regularnets}
In expected terms, for biased agents connected through any regular network, higher levels of partisanship increase the odds of reaching the less biased consensus. 
\end{result}

Networks (A), (C), (F), and (G) are all regular. Table~\ref{tab:simulation} shows that for any of these networks, $\hat{p}$ increases as $\tau$ increases. Moreover, coefficients of the dummy variables for the levels $\tau=1$ to $\tau=30$ are in increasing order. This is because no partisan agent outweighs the opposing partisan easily in terms of influence. Thus, partisanship moderates the interpreting dispute by keeping the society close to the center of the 0-1 spectrum long enough, so informative signals accumulate and nudge the society toward the less-biased consensus.  

% \clearpage
% \input{probit_bench.tex}

%============================================================
\section{Conclusions}
\label{sec:conclusions}

Confirmation bias is one of the most notorious cognitive biases documented and, as it is a systematic deviation from rationality, have a significant influence in the process of belief formation. In this sense, as social networks appear as a primary tool for many people to get informed and debate their worldviews, one could expect confirmatory bias to have some influence on the opinion formation. However, to date, how such phenomenon influences opinions in a networked environment has not been understood. To explore this topic, we consider a social learning model in which a fraction of signals external to the social network is ambiguous and open to idiosyncratic interpretation. The interpretation of these signals is affected by people's confirmatory biases. Moreover, we also allow agents to be influenced by their friends and set their beliefs to be a linear combination of the (biased) Bayesian posterior and the (also biased) friends' posteriors. 

My model shows that biased agents connected through social networks can only reach two types of consensus and both are biased, one to the left and the other to the right. However, one consensus type is less-biased than the other depending on the state. Moreover, I demonstrate that long-run learning is not attained even if agents are impartial when interpreting ambiguous signals. Those results contradict \cite{Rabin1999} and \cite{fhj2018} in which long-run learning takes place with a positive probability, and impartiality helps learning the state. Furthermore, the network effect presented, together with signal realizations, reinforces the interpreting ``\textit{tug-of-war}'' as agents might have their own biases confirmed (or mitigated) by other agents. 

Finally, as deriving the probability of emergence of the less-biased consensus is challenging, we relied on Monte Carlo simulations to show its determinants. We show that the presence of partisan agents in societies who suffer from confirmatory bias have two main effects on the expected consensus efficiency: (i) it helps countervail the misinterpretation of initial signals when there degree of partisanship is low and for that it increases expected efficiency; and (ii) exacerbates misinterpretation of signals when the degree of partisanship is high, reducing expected consensus efficiency. Moreover, we also show that open-mindedness of partisan agents, i.e., when partisans agree to exchange opinions with partisans with polar opposite beliefs, might reduce expected consensus efficiency in some social topologies. 

These results suggest that policies designed to mitigate partisanship and confirmatory bias effects in social networks have to consider also the positive and negative network externalities induced by them in different settings.

%============================================================

%============================================================
% REFERENCES / BIBLIOGRAPHY
%============================================================
\clearpage
\nocite*
\bibliographystyle{ecta}
\bibliography{CBSN_bib_v02}

%============================================================
% APPENDIX
%============================================================
\clearpage
\appendix

\section{Beta-Bernoulli model and likelihood function of interpreted signals}
\label{sec:betalikeli}

At any time $t$, the belief of agent $i$ is represented by the Beta probability distribution with parameters $\alpha_{i,t}$ and $\beta_{i,t}$
\begin{equation}
f_{i,t}\left(\theta\right) = 
\begin{cases}
\dfrac{\Gamma\left(\alpha_{i,t}+\beta_{i,t}\right)}{\Gamma\left(\alpha_{i,t}\right)\Gamma\left(\beta_{i,t}\right)} \, \theta^{\alpha_{i,t}-1} (1-\theta)^{\beta_{i,t}-1} & \text{, for } \, 0<\theta<1 \\
0 & \text{, otherwise,}
\end{cases}
\label{eq:betapdf-append}
\end{equation} where $\Gamma(\cdot) $ is a Gamma function and the ratio of Gamma functions in the expression above is a normalization constant that ensures that the total probability integrates to 1. In this sense,
\begin{equation*}
f_{i,t}\left(\theta\right) \propto \theta^{\alpha_{i,t}-1} (1-\theta)^{\beta_{i,t}-1}.
\end{equation*}

The idiosyncratic likelihood induced by the agent $i$'s interpretation of the public signal $s_{t+1}$ is
\begin{equation*}
\ell_{i}(s_{t+1}|\theta) = \theta^{s_{i,t+1}^{(1)}} (1-\theta)^{s_{i,t+1}^{(0)}}
\end{equation*} and, therefore, the standard Bayesian posterior is computed as
\begin{equation*}
f_{i,t+1}(\theta | s_{t+1}) = \dfrac{\ell_{i}(s_{t+1}|\theta) \, f_{i,t}\left(\theta\right)}{\displaystyle\int_{\Theta} \ell_{i}(s_{t+1}|\theta) \, f_{i,t}\left(\theta\right) d\theta}.
\end{equation*} 
Since the denominator of the expression above is just a normalizing constant, the posterior distribution is said to be proportional to the product of the prior distribution and the likelihood function as
\begin{align*}
f_{i,t+1}(\theta | s_{t+1}) &\propto \ell_{i}(s_{t+1}|\theta) \, f_{i,t}\left(\theta\right) \\
							&\propto \theta^{\alpha_{i,t}+s_{i,t+1}^{(1)}-1} \left(1-\theta\right)^{\beta_{i,t} + s_{i,t+1}^{(0)}-1}.
\end{align*}

Therefore, the posterior distribution is
\[
f_{i,t+1}\left(\theta\right) = 
\begin{cases}
\dfrac{\Gamma\left(\alpha_{i,t+1}+\beta_{i,t+1}\right)}{\Gamma\left(\alpha_{i,t+1}\right)\Gamma\left(\beta_{i,t+1}\right)} \, \theta^{\alpha_{i,t+1}-1} (1-\theta)^{\beta_{i,t+1}-1} & \text{, for } \, 0<\theta<1 \\
0 & \text{, otherwise}, 
\end{cases}
\] where $\alpha_{i,t+1} = \alpha_{i,t} + s_{i,t+1}^{(1)}$ and $\beta_{i,t+1} = \beta_{i,t} + s_{i,t+1}^{(0)}$.

% --------------------------------------------------------------------------
\section{Beta Distribution: Mode, Mean, Median}
\label{sec:betadist}

\subsection*{Mode} The mode of a random variable beta-distributed is the value that appears most often. It is the value $\theta$ at which its probability density function takes its maximum value. As per Equation~\eqref{eq:betapdf-append}, the mode $\theta^{mod}_{i,t}$, for any $i$ at any point in time $t$, is the $\argmax_{\theta} f_{i,t}(\theta)$. Computed as

\begin{equation*}
\dfrac{df_{i,t}}{d\theta} = \dfrac{\Gamma\left(\alpha_{i,t}+\beta_{i,t}\right)}{\Gamma\left(\alpha_{i,t}\right)\Gamma\left(\beta_{i,t}\right)} \left[(\alpha_{i,t}-1) \theta^{\alpha_{i,t}-2} (1-\theta)^{\beta_{i,t}-1} -  \theta^{\alpha_{i,t}-1}(\beta_{i,t}-1) (1-\theta)^{\beta_{i,t}-2} \right] = 0. 
\end{equation*}

Implying that 
\begin{equation*}
(\alpha_{i,t}-1) \theta^{\alpha_{i,t}-2} (1-\theta)^{\beta_{i,t}-1} -  \theta^{\alpha_{i,t}-1}(\beta_{i,t}-1) (1-\theta)^{\beta_{i,t}-2} = 0,
\end{equation*} and therefore 
\begin{equation}
\theta^{mod}_{i,t} = 
\begin{cases}
\dfrac{\alpha_{i,t} -1}{\alpha_{i,t} + \beta_{i,t} - 2} & \text{, for } \, \alpha_{i,t},\beta_{i,t} > 1 \\
0 & \text{, for } \, \alpha_{i,t} = 1, \beta_{i,t} > 1 \\
1 & \text{, for } \, \alpha_{i,t} > 1, \beta_{i,t} = 1 \\
\text{any value in} \, (0,1) & \text{, for } \, \alpha_{i,t}, \beta_{i,t} = 1
\end{cases}
\end{equation}

\subsection*{Mean} The mean of a random variable Beta-distributed, denoted by $\theta_{i,t}^{mean}$ for any $i$ and $t$, is computed as follows
\begin{align}
\theta_{i,t}^{mean} &= \int_{0}^{1} \theta \dfrac{\Gamma\left(\alpha_{i,t}+\beta_{i,t}\right)}{\Gamma\left(\alpha_{i,t}\right)\Gamma\left(\beta_{i,t}\right)} \theta^{\alpha_{i,t}-1} (1-\theta)^{\beta_{i,t}-1} d\theta \nonumber\\
 &= \dfrac{\Gamma\left(\alpha_{i,t}+\beta_{i,t}\right)}{\Gamma\left(\alpha_{i,t}\right)\Gamma\left(\beta_{i,t}\right)} \int_{0}^{1} \theta^{(\alpha_{i,t}+1)-1} (1-\theta)^{\beta_{i,t}-1} d\theta \nonumber\\
 &= \dfrac{\Gamma\left(\alpha_{i,t}+\beta_{i,t}\right)}{\Gamma\left(\alpha_{i,t}\right)\Gamma\left(\beta_{i,t}\right)}  \dfrac{\Gamma\left(\alpha_{i,t}+1\right)\Gamma\left(\beta_{i,t}\right)}{\Gamma\left(\alpha_{i,t}+\beta_{i,t}+1\right)}\nonumber\\ 
 &= \dfrac{\Gamma\left(\alpha_{i,t}+\beta_{i,t}\right)}{\Gamma\left(\alpha_{i,t}\right)\Gamma\left(\beta_{i,t}\right)}  \dfrac{\alpha_{i,t}\Gamma\left(\alpha_{i,t}\right)\Gamma\left(\beta_{i,t}\right)}{(\alpha_{i,t}+\beta_{i,t})\Gamma\left(\alpha_{i,t}+\beta_{i,t}\right)}= \dfrac{\alpha_{i,t}}{\alpha_{i,t}+\beta_{i,t}}.
\end{align}

\subsection*{Median} There is no general closed formula for the median of the beta
distribution for arbitrary values of the parameter $\alpha_{i,t}$ and $\beta_{i,t}$. The median, denoted by $\theta^{med}_{i,t}$, is the function that satisfies
\begin{equation*}
\dfrac{\Gamma\left(\alpha_{i,t}+\beta_{i,t}\right)}{\Gamma\left(\alpha_{i,t}\right)\Gamma\left(\beta_{i,t}\right)} \int_{0}^{\theta_{i,t}^{med}} \theta^{\alpha_{i,t}-1} (1-\theta)^{\beta_{i,t}-1} = \frac{1}{2}.
\end{equation*}

An accurate approximation of the value of the median of the beta distribution, for both $\alpha_{i,t}, \beta_{i,t} \geq 1$, is given by 
\begin{equation}
\theta_{i,t}^{med} = \frac{\alpha_{i,t} - \frac{1}{3}}{\alpha_{i,t} + \beta_{i,t} - \frac{2}{3}}.\footnote{With relative error of less than 4\%, rapidly decreasing to zero as both shape parameters increase.}
\end{equation}

Therefore, if $1 < \alpha_{i,t} < \beta_{i,t}$, then $\theta_{i,t}^{mod} < \theta_{i,t}^{med} < \theta_{i,t}^{mean}$. If $1 < \beta_{i,t} < \alpha_{i,t}$, then the order of the inequalities is reversed. Finally, it is trivial to see that those three statistical measures are asymptotically equal as $\alpha_{i,t},\beta_{i,t} \to \infty$.

% =================================================================
\pagebreak
\section{Auxiliary Definitions and Lemmas}
\label{append:aux-defs-lemmas}

\noindent \textbf{Proof of Lemma \ref{lem:ergochain}.} In order to see how $W^{t}$ behaves as $t$ grows large, I rewrite $W$ using its diagonal decomposition. In particular, let $v$ be the squared matrix of left-hand eigenvectors of $W$ and $D = (d_1,d_2, \dots, d_n)^{\top}$ the eigenvector of size $n$ associated to the unity eigenvalue $\lambda_1=1$. Without loss of generality, we assume the following normalization $\pmb{1}^{\top} D = 1$. Therefore,
$W = v^{-1}\Lambda v$, where $\Lambda = \text{diag}(\lambda_1, \lambda_2,\dots,\lambda_n)$ is the squared matrix with eigenvalues on its diagonal, ranked in terms of absolute values, i.e. $|\lambda_{1}| \geq |\lambda_{2}| \geq \dots \geq |\lambda_{n}|$. More generally, for any time $t$ we write
\begin{equation*}
W^{t} = v^{-1}\Lambda^{t} v.
\end{equation*}

Since $v^{-1}$ has ones in all entries of its first column, it follows that
\[
W^{t}_{ij} = d_{j} + \displaystyle \sum_{r}\lambda_{r}^{t}v_{ir}^{-1}v_{rj},
\]
for each $r$, where $\lambda_r$ is the $r$-th largest eigenvalue of $W$. Therefore, $\lim_{t \to \infty} W^{t}_{ij} = D\pmb{1}^{\top}$, i.e. each row of $W^{t}$ for all $t \geq \bar{t}$ converge to $D$, which coincides with the stationary distribution. Moreover, if the eigenvalues are ordered the way we have assumed, then $\lVert W^{t} - D\pmb{1}^{\top}\rVert = o(|\lambda_2|^{t})$, i.e. the convergence rate will be dictated by the second largest eigenvalue, as the others converge  to zero more quickly as $t$ grows. \hfill $\blacksquare$

%------------------------------------
\begin{lemma}
\label{lem:opinionint}
The opinion of every agent $i$ in any point in time $t$, $y_{i,t}$, can be written as
\[
y_{i,t} = \dfrac{\sum_{j=1}^{n}W_{ij}^{t}\alpha_{j,0} + b K(i,t)}{\sum_{j=1}^{n}W_{ij}^{t}\left(\alpha_{j,0}+\beta_{j,0}\right) + b L(i,t)},
\] where $K(i,t) = \displaystyle \sum_{k=0}^{t-1}\sum_{j=1}^{n}W_{ij}^{k}s_{j,t-k}^{(1)}$ and $L(i,t) = \displaystyle \sum_{k=0}^{t-1} \sum_{j=1}^{n} W_{ij}^{k} \left(s_{j,t-k}^{(0)} + s_{j,t-k}^{(1)}\right)$.
\end{lemma}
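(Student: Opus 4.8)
The plan is to unroll the recursions \eqref{eq:alphaup} and \eqref{eq:betaup} explicitly over $t$ periods and then form the ratio $\alpha_{i,t}/(\alpha_{i,t}+\beta_{i,t})$ that defines $y_{i,t}$. First I would iterate \eqref{eq:alphaup}: since $\alpha_{t+1} = W\alpha_t + B s_{t+1}^{(1)}$, a straightforward induction on $t$ gives the closed form
\[
\alpha_{t} = W^{t}\alpha_{0} + \sum_{k=0}^{t-1} W^{k} B\, s_{t-k}^{(1)},
\]
and the analogous expression for $\beta_t$ with $s^{(0)}$ in place of $s^{(1)}$. The base case $t=1$ is immediate from \eqref{eq:alphaup}, and the inductive step just applies $W$ to the assumed form and shifts the summation index.

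Next I would pass to the $i$-th coordinate. Because $B = b\,\mathbb{I}$, we have $W^{k} B = b\, W^{k}$, so the $i$-th entry of $\alpha_t$ is $\sum_{j} W_{ij}^{t}\alpha_{j,0} + b\sum_{k=0}^{t-1}\sum_{j} W_{ij}^{k} s_{j,t-k}^{(1)} = \sum_j W_{ij}^t \alpha_{j,0} + b\,K(i,t)$, which matches the definition of $K(i,t)$ in the statement. Likewise the $i$-th entry of $\beta_t$ is $\sum_j W_{ij}^t \beta_{j,0} + b\sum_{k=0}^{t-1}\sum_j W_{ij}^k s_{j,t-k}^{(0)}$. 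Adding the two coordinate expressions, the denominator becomes $\sum_j W_{ij}^t(\alpha_{j,0}+\beta_{j,0}) + b\sum_{k=0}^{t-1}\sum_j W_{ij}^k\big(s_{j,t-k}^{(0)}+s_{j,t-k}^{(1)}\big) = \sum_j W_{ij}^t(\alpha_{j,0}+\beta_{j,0}) + b\,L(i,t)$, which is exactly $L(i,t)$ as defined. Dividing numerator by denominator yields the claimed formula for $y_{i,t}$.

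This argument is essentially bookkeeping, so there is no substantial obstacle; the only place to be careful is the index alignment in the telescoped sum — making sure that the signal received at date $t-k$ is weighted by $W^{k}$ (the signal most recently received, at date $t$, carries weight $W^{0}=\mathbb{I}$, and the oldest signal, at date $1$, carries weight $W^{t-1}$) — and in noting that $B=b\mathbb{I}$ commutes with every power of $W$ so that the factor $b$ can be pulled outside the double sums. One should also remark that all quantities are well defined: the denominator is strictly positive since $\alpha_{j,0}+\beta_{j,0}\ge 2$ and $W^t$ is row-stochastic, so $\sum_j W_{ij}^t(\alpha_{j,0}+\beta_{j,0})\ge 2$ for every $t$, and $L(i,t)\ge 0$.
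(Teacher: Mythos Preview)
Your proposal is correct and matches the paper's proof essentially step for step: the paper also iterates \eqref{eq:alphaup}--\eqref{eq:betaup} to obtain $\alpha_t = W^t\alpha_0 + \sum_{k=0}^{t-1} W^k B s_{t-k}^{(1)}$ (and the analogous expression for $\beta_t$), passes to the $i$-th coordinate using $B=b\,\mathbb{I}$, and forms the ratio. Your additional remarks on index alignment and positivity of the denominator are not in the paper but are welcome clarifications.
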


\begin{proof}
The update process of both parameters described by the equations~\eqref{eq:alphaup} and~\eqref{eq:betaup} can be solved iteratively for any period $t$ as
\begin{align}
\alpha_{t} &= W^{t}\alpha_{0} + \sum_{k=0}^{t-1}W^{k} B s_{t-k}^{(1)} \label{eq:alphaupfwd} \\
\beta_{t} &= W^{t}\beta_{0} + \sum_{k=0}^{t-1}W^{k} B s_{t-k}^{(0)} \label{eq:betaupfwd}.
\end{align}

In agebraic formulation, we have that each entry of the vector in equation ~\eqref{eq:alphaupfwd} can be written as
\begin{align}
\alpha_{i,t} &= \sum_{j=1}^{n}W_{ij}^{t}\alpha_{j,0} + \sum_{k=0}^{t-1}\sum_{j=1}^{n} W_{ij}^{k} b s_{j,t-k}^{(1)} \nonumber\\
				&=  \sum_{j=1}^{n}W_{ij}^{t}\alpha_{j,0} + b \sum_{k=0}^{t-1}\sum_{j=1}^{n} W_{ij}^{k} s_{j,t-k}^{(1)} \nonumber\\
				&= \sum_{j=1}^{n}W_{ij}^{t}\alpha_{j,0} + b \, \sum_{k=0}^{t-1}\sum_{j=1}^{n} W_{ij}^{k} s_{j,t-k}^{(1)} \nonumber\\
				&= \sum_{j=1}^{n}W_{ij}^{t}\alpha_{j,0} + b \, K(i,t). \label{eq:alphaupalgfwd}
\end{align}

Similarly for the expression $\alpha_{i,t} + \beta_{i,t}$ using both equations \eqref{eq:alphaupfwd} and \eqref{eq:betaupfwd} as follows
\begin{align}
\alpha_{i,t} + \beta_{i,t} & = \sum_{j=1}^{n}W_{ij}^{t} \left(\alpha_{j,0} + \beta_{j,0}\right) + b \, \sum_{k=0}^{t-1}\sum_{j=1}^{n} W_{ij}^{k} \left(s_{j,t-k}^{(0)}+s_{j,t-k}^{(1)}\right) \nonumber \\
						&= \sum_{j=1}^{n}W_{ij}^{t} \left(\alpha_{j,0} + \beta_{j,0}\right) + b \, L(i,t). \label{eq:alphaplusbetaalgfwd}
\end{align}

Therefore, from the definition of opinion we have that $y_{i,t} = \frac{\alpha_{i,t}}{\alpha_{i,t}+\beta_{i,t}}$ and the statement is proven.
\end{proof}

%------------------------------------
\begin{lemma}
\label{lem:expect-indfun}
Let $k\in[0,1]$, $X_{1}, X_{2}, \dots, X_{t}$ be a convergent sequence of i.n.i.d. random variables such that $\mathbb{P}(X_{t} \geq x) = p_t$ with $\lim_{t \to \infty} p_t = p$, and $u_{1}, u_{2}, \dots, u_{t}$ be i.i.d.  $U[0,1]$ random variables with cumulative density function $F_u$. Moreover, assume the pair $(X_{t}$, $u_{t})$ is independent for any $t$. In this case, the expressions \; $\mathbb{E}\left[\mathbbm{1}\{u_{t} \leq \mathbbm{1}\{X_{t} \geq x\}k\}\right]$ \; and $\mathbb{E}\left[\mathbbm{1}\{u_{t} \leq \mathbb{E}\left[\mathbbm{1}\{X_{t} \geq x\}\right]k\}\right]$ are equal.
\end{lemma}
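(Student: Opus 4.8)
The plan is to show that both expressions reduce to the single constant $pk$, by evaluating each side directly and then comparing.

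First I would dispose of the right-hand side. Since $\mathbbm{1}\{X_{t}\geq x\}$ takes values in $\{0,1\}$, its expectation is $\mathbb{E}\left[\mathbbm{1}\{X_{t}\geq x\}\right]=\mathbb{P}(X_{t}\geq x)=p$, so the inner quantity is the deterministic number $pk$, which lies in $[0,1]$ because $p,k\in[0,1]$. As $u_{t}\sim U[0,1]$, its distribution function immediately gives $\mathbb{E}\left[\mathbbm{1}\{u_{t}\leq pk\}\right]=\mathbb{P}(u_{t}\leq pk)=pk$.

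Next I would treat the left-hand side by the law of total expectation, conditioning on $X_{t}$ and splitting according to the two possible values of $\mathbbm{1}\{X_{t}\geq x\}$. On the event $\{X_{t}\geq x\}$ (probability $p$) the argument of the outer indicator equals $k$, and by the assumed independence of the pair $(X_{t},u_{t})$ together with the uniform law of $u_{t}$ we get conditional probability $\mathbb{P}(u_{t}\leq k)=k$; on the complementary event $\{X_{t}<x\}$ (probability $1-p$) the argument equals $0$, and since $u_{t}$ is a continuous (atomless) random variable we have $\mathbb{P}(u_{t}\leq 0)=0$. Hence
\[
\mathbb{E}\left[\mathbbm{1}\{u_{t}\leq \mathbbm{1}\{X_{t}\geq x\}k\}\right] = \mathbb{P}(X_{t}\geq x)\,\mathbb{P}(u_{t}\leq k) + \mathbb{P}(X_{t}<x)\,\mathbb{P}(u_{t}\leq 0) = pk + 0 = pk ,
\]
which coincides with the right-hand side, proving the claim.

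I do not expect a genuine obstacle here; the only point that warrants care is the boundary event $\{X_{t}<x\}$, where one must use the continuity of $u_{t}$ to conclude $\mathbb{P}(u_{t}\leq 0)=0$ rather than a positive atom at $0$ — precisely the reason the tie-break variable is taken uniform on $[0,1]$ — and the per-$t$ independence of $X_{t}$ and $u_{t}$, which is what licenses factoring the conditional probabilities. The sequence structure and independence across $t$ in the hypotheses are not needed for this lemma and are included only for consistency with its later use.
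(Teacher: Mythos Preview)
Your proof is correct and follows essentially the same approach as the paper's own proof: both sides are reduced to $pk$ by conditioning on the value of $\mathbbm{1}\{X_{t}\geq x\}$ for the first expression and by substituting $\mathbb{E}[\mathbbm{1}\{X_{t}\geq x\}]=p$ directly into the second. The only cosmetic difference is the order in which you treat the two expressions, and you make explicit the use of continuity of $u_{t}$ at $0$ and the independence of $(X_{t},u_{t})$, which the paper leaves implicit.
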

\begin{proof}
The first expression can be written as
\[
\mathbb{E}\left[\mathbbm{1}\{u_{t} \leq \mathbbm{1}\{X_{t} \geq x\}k\}\right]=(1-p)\mathbb{E}\left[\mathbbm{1}\{u_t\le0\}\right]+p\mathbb{E}\left[\mathbbm{1}\{u_t\le k\}\right]=p F_{u}(k) = pk.
\]
The second expression simplifies to
\[
\mathbb{E}\left[\mathbbm{1}\{u_{t} \leq \mathbb{E}\left[\mathbbm{1}\{X_{t} \geq x\}\right]k\}\right]=\mathbb{E}\left[\mathbbm{1}\{u_{t} \leq (1-p)0+pk\}\right]=\mathbb{E}\left[\mathbbm{1}\{u_{t} \leq pk\}\right]=pk.
\]
\end{proof}

\begin{lemma}[Convergence]
\label{lem:convergence}
The sequences $\{\{y_{i,t}\}_{i=1}^{n}\}_{t=1}^{\infty}$ generated by the update rule converge almost surely as $t \to \infty$.
\end{lemma}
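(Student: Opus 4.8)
The plan is to reduce the convergence of the full opinion vector to the convergence of a single scalar — the $\pi$--weighted empirical frequency with which public signals are interpreted as $1$ — and then to analyse that frequency as a Robbins--Monro stochastic approximation driven by a bounded martingale.

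\emph{Step 1: reduction.} I would first feed Lemma~\ref{lem:opinionint} two elementary facts. Since $s^{(0)}_{j,m}+s^{(1)}_{j,m}=1$ for every $j$ and $m$ and $W$ is row--stochastic, $L(i,t)=\sum_{k=0}^{t-1}\sum_j W^{k}_{ij}=t$, so
\[
y_{i,t}=\frac{\sum_j W^{t}_{ij}\alpha_{j,0}+b\,K(i,t)}{\sum_j W^{t}_{ij}(\alpha_{j,0}+\beta_{j,0})+b\,t}.
\]
For $b\in(0,1)$, Lemma~\ref{lem:ergochain} gives $|W^{k}_{ij}-\pi_j|\le C_1\rho^{k}$ for some $\rho\in(0,1)$ uniformly in $i,j$ (the boundary cases $b=0$ and $b=1$ being covered by Proposition~\ref{prop:b-zero} and by the coordinatewise single--agent analysis of Proposition~\ref{prop:singleagent}). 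Writing $Z_m:=\sum_j\pi_j s^{(1)}_{j,m}\in[0,1]$ and $P_t:=\tfrac1t\sum_{m=1}^{t}Z_m$, a geometric--series bound on $|K(i,t)-\sum_{m\le t}Z_m|$ together with the boundedness of the $W^{t}\alpha_0$ and $W^{t}(\alpha_0+\beta_0)$ terms produces a deterministic constant $C$ with $|y_{i,t}-P_t|\le C/t$ for all $i$ and all large $t$. Hence it suffices to show that $P_t$ converges almost surely; this simultaneously yields a common limit, i.e.\ consensus.

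\emph{Step 2: removing the noise.} With $\mathcal F_m:=\sigma\big(\theta,(\gamma_j)_j,(s_\ell,u_\ell)_{\ell\le m}\big)$, the quantity $\psi_{j,m}$ is $\mathcal F_{m-1}$--measurable and, by a direct conditioning argument (cf.\ Lemma~\ref{lem:expect-indfun}), $\mathbb E[Z_m\mid\mathcal F_{m-1}]=g_m:=(1-\mu)\theta+\mu\sum_j\pi_j\psi_{j,m}$; the increments $Z_m-g_m$ are bounded martingale differences, so the martingale strong law gives $P_t=\tfrac1t\sum_{m\le t}g_m+o(1)$ a.s. Because $\psi_{j,m}\in\{\gamma_j,1-\gamma_j\}$ is determined by the sign of $y_{j,m-1}-\tfrac12$ and $|y_{j,m-1}-P_{m-1}|\le C/(m-1)$, one has $g_m=\bar y_r:=(1-\mu)\theta+\mu\bar\gamma$ when $P_{m-1}>\tfrac12+\tfrac{C}{m-1}$ and $g_m=\bar y_l:=(1-\mu)\theta+\mu(1-\bar\gamma)$ when $P_{m-1}<\tfrac12-\tfrac{C}{m-1}$, where $\bar\gamma:=\sum_j\pi_j\gamma_j\ge\tfrac12$, with $g_m\in[\bar y_l,\bar y_r]$ otherwise. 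So everything reduces to showing that $\mathbbm{1}\{P_t\ge\tfrac12\}$ is eventually constant almost surely.

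\emph{Step 3: stabilisation of the sign, and conclusion.} I would work from the recursion $P_t=P_{t-1}+\tfrac1t(g_t-P_{t-1})+\tfrac1t(Z_t-g_t)$, using that $M_t:=\sum_{m\le t}(Z_m-g_m)$ is an $o(t)$ martingale (so $\sum_m\tfrac1m(Z_m-g_m)$ converges and the noise's contribution $\tfrac1t(M_t-M_T)$ to $P_t$ is eventually small). Outside the degenerate case $\mu=0$ or $\bar\gamma=\tfrac12$ — where the limit is read off Corollary~\ref{cor:impartial} and Proposition~\ref{prop:b-zero} — we have $\bar y_l<\bar y_r$; fix $0<\delta<\min(\bar y_r-\tfrac12,\tfrac12-\bar y_l)$ (only one gap matters in regions $R$, $L$). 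Two estimates then close the proof: (i) \emph{absorption} — if $P_T>\tfrac12+\delta$ at some large $T$ then $g_t\equiv\bar y_r$ thereafter, and writing $P_t$ (via $\prod_{r=s+1}^{t}(1-\tfrac1r)=s/t$) as a convex combination of $P_T$ and $\bar y_r$ plus the small noise tail shows $P_t$ stays above $\tfrac12$ forever, and symmetrically below $\tfrac12-\delta$; (ii) \emph{no hovering} — $P_t$ cannot remain in $(\tfrac12-\delta,\tfrac12+\delta)$ for all large $t$, because there the conditional drift $g_t-P_{t-1}$ has the sign of $-(P_{t-1}-\tfrac12)$ and magnitude bounded below by a constant, so $\tfrac12$ is an unstable equilibrium of the mean dynamics, which a stochastic approximation with non--degenerate noise almost surely avoids. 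Combining (i) and (ii), $\mathbbm{1}\{P_t\ge\tfrac12\}$ — hence all $\psi_{j,m}$ and all $g_m$ — eventually equals a constant $g^\ast\in\{\bar y_l,\bar y_r\}$; then $\tfrac1t\sum_{m\le t}g_m\to g^\ast$, so $P_t\to g^\ast$ a.s., and finally $y_{i,t}\to g^\ast$ for every $i$ by Step 1. The main obstacle is estimate (ii): ruling out that $P_t$ oscillates about, or converges to, $\tfrac12$ — one must make precise the competition between the $1/t$--scale drift away from $\tfrac12$ and the almost surely convergent martingale noise, which is exactly the ``avoidance of unstable equilibria'' phenomenon in stochastic approximation.
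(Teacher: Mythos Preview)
Your approach is correct and genuinely different from the paper's. The paper's proof of Lemma~\ref{lem:convergence} is essentially a two--line deferral: for a single agent it invokes the Beta--Bernoulli/P\'olya--urn equivalence (exchangeability of the interpreted signals, hence the running proportion is a bounded martingale), and for the network it cites Lemmas~1--2 of \cite{jmst2012}, asserting that $\sum_i v_i f_{i,t}(\theta^\ast)$ is a submartingale under strong connectivity. You instead build a self--contained stochastic--approximation argument: reduce every $y_{i,t}$ to the scalar $P_t=\tfrac1t\sum_{m\le t}\sum_j\pi_j s^{(1)}_{j,m}$ via Lemma~\ref{lem:opinionint} and geometric mixing of $W^k$, strip the martingale noise with the strong law, and then show that the sign of $P_t-\tfrac12$ stabilises because $\tfrac12$ is an unstable fixed point of the limiting dynamics.

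What your route buys is twofold. First, it is constructive: you do not merely obtain convergence but simultaneously pin down the limit as $\bar y_l$ or $\bar y_r$, so Lemma~\ref{lem:convergence} and Proposition~\ref{prop:ntw-society} fall out together. Second, it is robust to the model's specific feature: the interpretation rule~\eqref{eq:randomization} makes the probability of recording a ``1'' depend \emph{discontinuously} on $\mathbbm{1}\{y_{i,t-1}\ge\tfrac12\}$ rather than linearly on $y_{i,t-1}$, so the interpreted--signal sequence is not the exchangeable P\'olya draw the paper appeals to, and the private--signal submartingale of \cite{jmst2012} is not literally the present object either; your direct argument sidesteps both issues. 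One small slip: in Step~3(ii) the drift $g_t-P_{t-1}$ near $\tfrac12$ has the \emph{same} sign as $P_{t-1}-\tfrac12$, not $-(P_{t-1}-\tfrac12)$ --- that is exactly what makes $\tfrac12$ unstable, as you then correctly say. The residual burden you identify --- ruling out convergence to the unstable point when the drift is discontinuous --- is real; it calls for a Pemantle--type non--convergence result or a hands--on argument, and the non--degeneracy of the noise you need is supplied here by $\mu\in(0,1)$ together with $\theta\in(0,1)$.
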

\begin{proof}

For the individual case, \cite{siegrist2021} (Section 12.8.5) shows that there is an equivalence between the Beta-Bernoulli process and the Pólya's urn process. In the Pólya's urn proccess the sequence of random variables (drawn balls' colors) is not independent, but is exchangeable. Thus, the joint distribution of the interpreted signals (colors) is invariant under a permutation. Thus, the sequence of the proportion of signals interpreted as 1 is a martingale, and standard martingale convergence theorems ensure the convergence of this process. 

For the networked case, Lemmas (1) and (2) in \cite{jmst2012} prove convergence of this process for a general case based on the same assumption that the social interaction matrix $W$ is strongly connected and, for that, it always has at least one eigenvalue equal to 1 and that there exists a non-negative left eigenvector $v$ corresponding to this eigenvalue. As a result, they show that $\sum_{i=1}^{n}v_i f_{i,t}(\theta^{*})$ is a submartingale with respect to the filtration $\mathcal{F}_t$ (interpreted signals).

\end{proof}

%-------------------------------------------------------
\section{Proofs of main propositions and corollaries}
\label{append:statregs}

\noindent \textbf{Proof of Proposition~\ref{prop:singleagent}.}
\begin{align}
\lim_{t\to\infty} y_{i,t} & = \lim_{t\to\infty} \, \dfrac{\alpha_{i,0} + \sum_{k=1}^{t}s_{i,k}^{(1)}}{\alpha_{i,0} + \sum_{k=1}^{t}s_{i,k}^{(1)} + \beta_{i,0} + \sum_{k=1}^{t}s_{i,k}^{(0)}} \nonumber\\
 &=  \lim_{t\to\infty} \, \dfrac{\alpha_{i,0} + \sum_{k=1}^{t} \left(\mathbbm{1}\{s_{k}=1\} +  \mathbbm{1}\{s_{k}=a\}\mathbbm{1}\{u_{k} \leq \psi_{i,k}\}\right)}{\alpha_{i,0} + \beta_{i,0} + \sum_{k=1}^{t} \left(\mathbbm{1}\{s_{k}=1\} + \mathbbm{1}\{s_{k}=0\} +  \mathbbm{1}\{s_{k}=a\}\right)} \nonumber \\
 &= \lim_{t\to\infty} \, \dfrac{\frac{\alpha_{i,0}}{t} + \frac{1}{t} \sum_{k=1}^{t} \left(\mathbbm{1}\{s_{k}=1\} +  \mathbbm{1}\{s_{k}=a\}\mathbbm{1}\{u_{k} \leq \psi_{i,k}\}\right)}{\frac{\alpha_{i,0} + \beta_{i,0}}{t} + \frac{1}{t} \sum_{k=1}^{t} \left(\mathbbm{1}\{s_{k}=1\} + \mathbbm{1}\{s_{k}=0\} +  \mathbbm{1}\{s_{k}=a\}\right)} \nonumber \\
 &= \dfrac{\mathbb{E}_{t}\left[\mathbbm{1}\{s_{t}=1\}\right]  +  \mathbb{E}_{t}\left[\mathbbm{1}\{s_{t}=a\}\right] \lim_{t\to\infty} \frac{1}{t} \sum_{k=1}^{t}\left(\mathbbm{1}\{u_{k}\leq \psi_{i,k}\}\right)}{\mathbb{E}_{t}\left[\left(\mathbbm{1}\{s_{t}=1\}\right] + \mathbb{E}_{t}\left[\mathbbm{1}\{s_{t}=0\}\right] +  \mathbb{E}_{t}\left[\mathbbm{1}\{s_{t}=a\}\right]\right)} \nonumber \\
 &= (1-\mu) \theta + \mu \lim_{t\to\infty} \frac{1}{t} \sum_{k=1}^{t}\left(\mathbbm{1}\{u_{k}\leq \psi_{i,k}\}\right) \nonumber \\
 &= (1-\mu) \theta + \mu \lim_{t\to\infty} \frac{1}{t} \sum_{k=1}^{t}\left(\mathbbm{1}\{u_{k}\leq \gamma_{i} \, \mathbbm{1}\{y_{i,k-1}\geq0.5\} + (1-\gamma_{i}) \, \mathbbm{1}\{y_{i,k-1}<0.5\}\}\right) \nonumber \\
 &=(1-\mu) \theta + \mu \mathbb{E}_{t}\left[\mathbbm{1}\{u_{t}\leq \mathbb{E}_{t}\left[\gamma_{i} \, \mathbbm{1}\{y_{i,t-1}\geq0.5\} + (1-\gamma_{i}) \, \mathbbm{1}\{y_{i,t-1}<0.5\}\}\right]\right] \nonumber \\
 &= (1-\mu) \theta + 
 \mu \mathbb{E}_{t}\left[\mathbbm{1}\{u_{t}\leq\mathbb{E}_{t}\left[\mathbbm{1}\{y_{i,t-1}\geq0.5\}\right]\left(2\gamma_{i}-1\right) + 1 - \gamma_{i}\}\right] \nonumber
\end{align}

According to Lemma~\ref{lem:convergence}, convergence ensures that $\mathbb{E}_{t}\left[\mathbbm{1}\{y_{i,t-1}\geq0.5\}=\mathbb{P}\left(y_{i,\infty}\geq0.5\right)\right]$ either takes on value 1 or 0. For simplicity, say the first case is denoted by A, and the second by B. Moreover, Lemma 4 also implies that the sequence
\begin{equation}
\label{eq:indicu}
\left\{\mathbbm{1}\{u_{k}\leq \gamma_{i} \, \mathbbm{1}\{y_{i,k-1}\geq0.5\} + (1-\gamma_{i}) \, \mathbbm{1}\{y_{i,k-1}<0.5\}\}\right\}_{k=1}^{t}    
\end{equation}
is composed by the terms
\begin{equation}
\label{eq:indicu-geq0.5}
\left\{\mathbbm{1}\{u_{k}\leq \gamma_{i}\}\right\}_{k \in \bar{k}}    
\end{equation} and by the terms
\begin{equation}
\label{eq:indicu-l0.5}
\left\{\mathbbm{1}\{u_{k}\leq 1-\gamma_{i}\}\right\}_{k \in \underline{k}},   
\end{equation}
         
where $\bar{k} = \{t \in \mathbbm{N}_{+} \; | \; y_{i,t} \geq 0.5\}$ and $\underline{k} = \{t \in \mathbbm{N}_{+} \; | \; y_{i,t} < 0.5\}$, for any $i$.

Intuitively, this means that before converging, opinions fluctuate between 0 and 1, and this binary sequence takes on a value of 1 when the opinion is above or equal to 0.5 and 0 otherwise. Since we know that opinions will either converge to $y \geq 0.5$ or to $y < 0.5$, we can say that the sequence in equation \eqref{eq:indicu}, at some point, will turn into a sequence composed only of the terms in equations \eqref{eq:indicu-geq0.5} or \eqref{eq:indicu-l0.5}. Thus, we have an i.n.i.d. sequence of Bernoulli random variables $X_k \sim \text{Bern}(p_k)$ with $\lim_{k\to\infty} p_k = p = \gamma$ or $1 - \gamma$, and standard (weak or strong) law of large numbers can be used to show that
         \[
          \lim_{t\to\infty} \frac{1}{t} \sum_{k=1}^{t} X_k \to \mathbbm{E}_t\left(X_t\right) = p.
         \] 

Therefore, 
\begin{align}
 \lim_{t\to\infty} y_{i,t} &= \begin{cases} 
      (1-\mu) \theta + \mu \mathbb{E}_{t}\left[\mathbbm{1}\{u_{t}\leq \gamma_{i}\}\right] & , \text{if} \,\, A \nonumber \\
      (1-\mu) \theta + \mu \mathbb{E}_{t}\left[\mathbbm{1}\{u_{t}\leq 1 - \gamma_{i}\}\right] & , \text{if} \,\, B 
  \end{cases} \\
  &= \begin{cases} 
      (1-\mu) \theta + \mu F_{u}\left(\gamma_{i}\right) & , \text{if} \,\, A \nonumber \\
      (1-\mu) \theta + \mu F_{u}\left(1-\gamma_{i}\right)& , \text{if} \,\, B 
  \end{cases} \\
  	    &= \begin{cases} 
      (1-\mu) \theta + \mu \gamma_{i} & , \text{if} \,\, A  \label{eq:singleagentproof} \\
      (1-\mu) \theta + \mu \left(1-\gamma_{i}\right) & , \text{if} \,\, B 
  \end{cases}
\end{align} \hfill $\blacksquare$
%---

% -------------
\noindent \textbf{Proof of Proposition~\ref{prop:t1ct2c_areas}.} The claim is supported by the solution of two systems of inequalities $S_{1}$ (for right-biased opinion) and $S_{2}$ (for left-biased opinion) below.
\begin{equation*}
\begin{minipage}[c]{0.5\linewidth}
\[
S_{1} = \begin{cases}
    (1-\mu)\theta + \mu \gamma_{i} > \frac{1}{2} \\
    (1-\mu)\theta + \mu (1-\gamma_{i}) > \frac{1}{2} \\
    0 < \mu \leq 1 \\
    0 \leq \theta \leq 1 \\
    \frac{1}{2}< \gamma_{i} \leq 1
  \end{cases}
\]  
\end{minipage}
\begin{minipage}[c]{0.5\linewidth}
\[
S_{2} = \begin{cases}
    (1-\mu)\theta + \mu \gamma_{i} < \frac{1}{2} \\
    (1-\mu)\theta + \mu (1-\gamma_{i}) < \frac{1}{2} \\
    0 < \mu \leq 1 \\
    0 \leq \theta \leq 1 \\
    \frac{1}{2}< \gamma_{i} \leq 1
  \end{cases}
\]  
\end{minipage}
\end{equation*}

The solution of those systems, together with the equation~\eqref{eq:singleagentproof} in Proof of proposition~\ref{prop:singleagent} ensure the uniqueness of opinion types in the parameter spaces defined in the statement. \hfill $\blacksquare$

\vspace{5mm}
%------------------------------------------------------------------
\noindent \textbf{Proof of Corollary~\ref{cor:efficiency}.} From Proposition \ref{prop:singleagent}, we can write both right-biased and left-biased opinions as $\theta + \mu (\gamma_{i}-\theta)$ and $\theta + \mu(1-\gamma_{i}-\theta)$, respectively, where the second term in each expression represents their respective biases. From those expressions, we can see that both sign and magnitude of those biases naturally depend on the relative size of $\theta$ and $\gamma_{i}$. For both biases to be positive, we need $\theta < \min\{\gamma_{i},1-\gamma_{i}\} = 1-\gamma_{i}$, since $\gamma_{i}>\frac{1}{2}$. For both biases to be negative, we need $\theta > \max\{\gamma_{i},1-\gamma_{i}\} = \gamma_{i}$, since $\gamma_{i}>\frac{1}{2}$. For the right-bias to be positive and the left-bias to be negative, we need $1-\gamma_{i}< \theta < \gamma_{i}$ to hold. The case in which the right bias is negative while the right-bias is positive never holds, since we assume $\gamma_{i}>\frac{1}{2}$. Therefore, we have the following summary.
\begin{enumerate}
\item if $\theta < 1- \gamma_{i}$, then both biases are strictly positive
\item if $1-\gamma_{i}<\theta<\gamma_{i}$, then right-bias is strictly positive and left-bias is strictly negative
\item if $\theta>\gamma_{i}$, then both biases are strictly negative.
\end{enumerate}
    
In the case (1) listed above, we say that the right-bias is less than the left bias whenever $\mu (\gamma_{i}-\theta) < \mu(1-\gamma_{i}-\theta)$, meaning that $\gamma_{i} < \frac{1}{2}$. However, this contradicts the assumption that individual is confirmatory and we can conclude that whenever $\theta < 1- \gamma_{i}$, the left-biased opinion is less biased than the right-biased one. In the case (3), we say that the right-bias is less than the left bias whenever $\mu (\theta-\gamma_{i}) < \mu(\gamma_{i}+\theta-1)$, meaning  that the statement is true if $\gamma_{i} > \frac{1}{2}$. Therefore, if $\theta>\gamma_{i}$, the right-biased opinion is less biased than the left-biased one. Finally, in the case (2), we say that the right-bias is less than the left bias whenever $\mu(\gamma_{i}-\theta)<\mu(\gamma_{i}+\theta-1)$, meaning that it can only be true when $\theta>\frac{1}{2}$. These three arguments together prove the statement and we conclude that the right-bias is less than the left bias whenever $\theta > \frac{1}{2}$ (and vice-versa).

Finally, when $\theta=\frac{1}{2}$, the biases are equal since $|\gamma_{i} - \frac{1}{2}|=|\frac{1}{2} - \gamma_{i}|$ for any $\gamma_{i}$. \hfill $\blacksquare$
%------------------------------------------------------------------

\vspace{5mm}
%------------------------------------------------------------------
\noindent \textbf{Proof of Corollary~\ref{cor:impartial}.} When an individual $j$ is always impartial, we have that
\begin{align}
\psi_{j,t} &= \frac{1}{2} \, \mathbbm{1}\{y_{j,t-1}\geq 0.5\} + \frac{1}{2} \, \mathbbm{1}\{y_{j,t-1} < 0.5\} \nonumber\\
			&= \frac{1}{2} \, \mathbbm{1}\{y_{j,t-1}\geq 0.5\} + \frac{1}{2} \, \left(1-\mathbbm{1}\{y_{j,t-1} \geq 0.5\}\right) \nonumber \\
			&= \frac{1}{2} \label{eq:psi-impartial},
\end{align} for all $t$. Since $u_{t}$ is a continuous $U\left[0,1\right]$ random variable in every period $t$, we have that
\begin{equation}
\label{eq:eta-impartial}
\mathbbm{E}_{t}\left[\mathbbm{1}\left\{u_{t} \leq \frac{1}{2}\right\}\right] = \mathbbm{P}\left(u_{t} \leq \frac{1}{2}\right) = F_{u}\left(\frac{1}{2}\right) = \frac{\frac{1}{2} - 0}{1-0} =  \frac{1}{2},
\end{equation} where $F_{u}(\cdot)$ is the cumulative distribution function of $U\left[0,1\right]$. Thus, equations \eqref{eq:singleagentproof} and \eqref{eq:eta-impartial} together prove the statement when agents are impartial (both always impartial and moderately impartial). \hfill $\blacksquare$
%------------------------------------------------------------------

\vspace{5mm}
%------------------------------------------------------------------
\noindent \textbf{Proof of Corollary~\ref{cor:extremism}.} Say extreme opinion 1 (i.e. $y_{i,\infty} =1$) is formed, then as per Propositions~\ref{prop:singleagent} and \ref{prop:t1ct2c_areas} we know this is the right-biased opinion and therefore it should be the case that $(1-\mu)\theta + \mu \gamma_{i} = 1$. Conversely, say extreme opinion 0 (i.e. $y_{i,\infty} = 0$) is formed. Then, we know this is the left-biased opinion and it should be that $(1-\mu)\theta + \mu (1-\gamma_{i}) = 0$. These two conditions together imply that $\mu (2\gamma_{i}-1)=1$. If we generally consider that $0\leq \mu \leq 1$ and $\frac{1}{2} \leq \gamma_{i} \leq 1$, then the relation $\mu (2\gamma_{i}-1)=1$ is only met when $\mu=\gamma_{i}=1$. \hfill $\blacksquare$
%------------------------------------------------------------------

\vspace{5mm}
%------------------------------------------------------------------
\noindent \textbf{Proof of Proposition~\ref{prop:ntw-society}.} As per Lemma \ref{lem:opinionint} in the Appendix C, the limiting opinion of any agent $i$ can be written as
\begin{align}
\lim_{t\to\infty} y_{i,t} & = \lim_{t\to\infty} \, \dfrac{\frac{1}{t} \sum_{j=1}^{n}W_{ij}^{t}\alpha_{j,0} + b \frac{1}{t} K(i,t)}{\frac{1}{t} \sum_{j=1}^{n}W_{ij}^{t}\left(\alpha_{j,0}+\beta_{j,0}\right) + b \frac{1}{t} L(i,t)} \nonumber \\
 		&= \lim_{t\to\infty} \, \dfrac{\dfrac{1}{t} \displaystyle \sum_{k=0}^{t-1}\sum_{j=1}^{n}W_{ij}^{k}s_{j,t-k}^{(1)}}{\dfrac{1}{t} \displaystyle \sum_{k=0}^{t-1} \sum_{j=1}^{n} W_{ij}^{k} \left(s_{j,t-k}^{(0)} + s_{j,t-k}^{(1)}\right)}. \nonumber
\end{align}

By Lemma \ref{lem:ergochain} we can split both series in the numerator and denominator in two parts 	 	 	
\begin{align}
 \lim_{t\to\infty} y_{i,t} &= \lim_{t\to\infty} \, \dfrac{\dfrac{1}{t} \left( \displaystyle \sum_{k=0}^{t_{\text{mix}}}\sum_{j=1}^{n}W_{ij}^{k}s_{j,t-k}^{(1)} + \sum_{k=t_{\text{mix}}+1}^{t-1}\sum_{j=1}^{n}W_{ij}^{k}s_{j,t-k}^{(1)} \right)}{\dfrac{1}{t} \displaystyle \left( \sum_{k=0}^{t_{\text{mix}}} \sum_{j=1}^{n} W_{ij}^{k} \left(s_{j,t-k}^{(0)} + s_{j,t-k}^{(1)}\right) + \sum_{k=t_{\text{mix}}+1}^{t-1} \sum_{j=1}^{n} W_{ij}^{k} \left(s_{j,t-k}^{(0)} + s_{j,t-k}^{(1)}\right)\right)} \nonumber \\
	&= \lim_{t\to\infty} \, \dfrac{\dfrac{1}{t}  \displaystyle \sum_{k=t_{\text{mix}}+1}^{t-1}\sum_{j=1}^{n}W_{ij}^{k}s_{j,t-k}^{(1)}}{\dfrac{1}{t} \displaystyle \sum_{k=t_{\text{mix}}+1}^{t-1} \sum_{j=1}^{n} W_{ij}^{k} \left(s_{j,t-k}^{(0)} + s_{j,t-k}^{(1)}\right)}. \nonumber
\end{align} 		

Since the subindex $k$ spans from $t_{mix}$ onwards (i.e. when the chain is already mixed), we can use the invariant distribution matrix in the previous expression. Therefore the limiting opinion becomes
\begin{align}
\lim_{t\to\infty} y_{i,t} 		&= \lim_{t\to\infty} \, \dfrac{\displaystyle \sum_{j=1}^{n}\Pi_{ij} \, \dfrac{1}{t} \sum_{k=t_{\text{mix}}+1}^{t-1} s_{j,t-k}^{(1)}}{\displaystyle  \sum_{j=1}^{n} \Pi_{ij} \, \dfrac{1}{t} \sum_{k=t_{\text{mix}}+1}^{t-1} \left(s_{j,t-k}^{(0)} + s_{j,t-k}^{(1)}\right)} \nonumber \\
 		&= \dfrac{\displaystyle \sum_{j=1}^{n}\Pi_{ij} \, \lim_{t\to\infty} \dfrac{t-1-t_{\text{mix}}}{t} \dfrac{1}{t-1-t_{\text{mix}}} \sum_{k=t_{\text{mix}}+1}^{t-1} s_{j,t-k}^{(1)}}{\displaystyle  \sum_{j=1}^{n} \Pi_{ij} \, \lim_{t\to\infty} \dfrac{t-1-t_{\text{mix}}}{t} \dfrac{1}{t-1-t_{\text{mix}}} \sum_{k=t_{\text{mix}}+1}^{t-1} \left(s_{j,t-k}^{(0)} + s_{j,t-k}^{(1)}\right)} \nonumber \\
 		&= \dfrac{\displaystyle \sum_{j=1}^{n}\Pi_{ij} \, \lim_{t\to\infty} \dfrac{1}{t-1-t_{\text{mix}}} \sum_{k=t_{\text{mix}}+1}^{t-1} \left(\mathbbm{1}\{s_{t-k}=1\} +  \mathbbm{1}\{s_{t-k}=a\}\mathbbm{1}\{u_{t-k} \leq \psi_{j,t-k}\}\right)}{\displaystyle  \sum_{j=1}^{n} \Pi_{ij} \, \lim_{t\to\infty} \dfrac{1}{t-1-t_{\text{mix}}} \sum_{k=t_{\text{mix}}+1}^{t-1} \left(\mathbbm{1}\{s_{t-k}=0\} + \mathbbm{1}\{s_{t-k}=1\} +  \mathbbm{1}\{s_{t-k}=a\}\right)} \nonumber\\
 		&= \dfrac{\sum_{j} \Pi_{ij} \mathbbm{E}_{t}\left[\mathbbm{1}\{s_{t}=1\} +  \mathbbm{1}\{s_{t}=a\}\mathbbm{1}\{u_{t} \leq \psi_{j,t}\}\right]}{\sum_{j} \Pi_{ij} \mathbbm{E}_{t}\left[\mathbbm{1}\{s_{t}=0\} + \mathbbm{1}\{s_{t}=1\} +  \mathbbm{1}\{s_{t}=a\}\right]} \nonumber\\
 		&= (1-\mu)\theta + \mu \sum_{j}\Pi_{ij} \mathbbm{E}_{t}\left[\mathbbm{1}\{u_{t} \leq \psi_{j,t}\}\right], \nonumber
\end{align} where the term $\mathbbm{E}_{t}\left[\mathbbm{1}\{u_{t} \leq \psi_{j,t}\}\right]$ is as in Proposition~\ref{prop:singleagent}, implying that the limiting consensus is
\[
 \lim_{t\to\infty} y_{i,t} = \begin{cases} 
      (1-\mu)\theta + \mu \sum_{j}\Pi_{ij}\gamma_{j} & , \text{if} \,\, A \nonumber \\
     (1-\mu)\theta + \mu \sum_{j}\Pi_{ij}(1-\gamma_{j}) & , \text{if} \,\, B 
  \end{cases} 
\] \hfill $\blacksquare$
%------------------------------------------------------------------

\vspace{5mm}
%------------------------------------------------------------------
\noindent \textbf{Proof of Proposition~\ref{prop:b-zero}.} From Equation \eqref{eq:alphaupalgfwd} in Appendix C, we know that $\alpha_{i,t}$, for any $i$, can be iterated forwardly as
\begin{equation*}
\alpha_{i,t} = \sum_{j=1}^{n}W_{ij}^{t}\alpha_{j,0} + b \, \sum_{k=0}^{t-1}\sum_{j=1}^{n} W_{ij}^{k} s_{j,t-k}^{(1)}.
\end{equation*}.

Similarly, the expression $\alpha_{i,t} + \beta_{i,t}$ in Equation \eqref{eq:alphaplusbetaalgfwd} can be written as
\begin{equation*}
\alpha_{i,t} + \beta_{i,t} = \sum_{j=1}^{n}W_{ij}^{t} \left(\alpha_{j,0} + \beta_{j,0}\right) + b \, \sum_{k=0}^{t-1}\sum_{j=1}^{n} W_{ij}^{k} \left(s_{j,t-k}^{(0)}+s_{j,t-k}^{(1)}\right).
\end{equation*}

Thus, if $b=0$, the opinion of any agent $i \in N$ at any time $t$ boils down to
\[ 
y_{i,t} = \frac{\sum_{j=1}^{n}W_{ij}^{t}\alpha_{j,0}}{\sum_{j=1}^{n}W_{ij}^{t} \left(\alpha_{j,0} + \beta_{j,0}\right)}
\] 
and therefore
\[
\lim_{t\to\infty} y_{i,t} = y = \frac{\sum_{j=1}^{n}\Pi_{ij}\alpha_{j,0}}{\sum_{j=1}^{n}\Pi_{ij} \left(\alpha_{j,0} + \beta_{j,0}\right)}
\] for any $i$. 
% Equivalently, if $\delta = 1$ (i.e. no signal enters into the network) and for any $b$, we have that $s_{i,t-k}^{(0)} = s_{i,t-k}^{(1)} = 0$ for any $i$ and $t$, since $\mathbbm{1}\{s_{t}=\emptyset\} = 1$ for all $t$ as per Equations \eqref{eq:signalinterpret0} and \eqref{eq:signalinterpret1}. 
In this case, the limiting opinion of any agent $i$ can be written as in the case when $b=0$ shown above. \hfill $\blacksquare$
%--------------------------------------------------------

\pagebreak
%-------------------------------------------------------
\section{Tests concerning differences among proportions}
\label{append:proptests}

\subsection{Definition}
\label{append-subsec:def}
To decide whether observed differences among sample proportions are significant or whether they can be attributed to chance we must use tests concerning differences among proportions. For that, suppose that $x_1, x_2, \dots , x_k$ are observed values of $k$ independent random variables $X_1,X_2, . . . ,X_k$
having binomial distributions with the parameters $n_1$ and $\theta_1$, $n_2$ and $\theta_2, \dots$, $n_k$ and $\theta_{k}$. If the sample sizes are sufficiently large, we can approximate the distributions of the independent
random variables
\[
Z_{i} = \dfrac{X_{i} - n_{i}\theta_{i}}{\sqrt{n_{i}\theta_{i}(1-\theta_{i})}} \hspace{5mm} \text{ for } i=1,2,\dots,k
\] with standard normal distributions. Therefore, we know that we can look upon the test-statistic
\[
\chi^{2} = \sum_{i=1}^{k} Z_{i}^{2}  = \sum_{i=1}^{k} \dfrac{(x_{i}-n_{i}\theta_{i})^2}{n_{i}\theta_{i}(1-\theta_{i})}
\] as a value of a random variable having chi-square distribution with $k$ degrees of freedom. When the null hypothesis $H_{0}$ is $\theta_{1} = \theta_{2} = \cdots = \theta_{k}$ and the alternative hypothesis is that at least one of the $\theta$'s is different, we can use the \textit{pooled estimate}
\[
\hat{\theta} = \dfrac{\sum_{i=1}^{k}x_{i}}{\sum_{i=1}^{k}n_{i}}
\] and the test statistic becomes
\[
\chi^{2} = \sum_{i=1}^{k} \dfrac{(x_{i}-n_{i}\hat{\theta})^2}{n_{i}\hat{\theta}(1-\hat{\theta})}
\] a random variable whose value has chi-square distribution with $k-1$ degrees of freedom because an estimate is substituted for the unknown parameter $\theta$.

%--------------------------------------
% latex table generated in R 4.0.2 by xtable 1.8-4 package
% Sun Jul 17 15:06:19 2022
\begin{table}[H]
\centering
\begingroup\small
\begin{tabular}{cccccccccc}
  \hline
$i$ & $j$ & $c_{i}$ & $c_{j}$ & $\hat{p}_{i}(c_{i})$ & $\hat{p}_{j}(c_{j})$ & $CI_{5\%}$ & $CI_{95\%}$ & $\chi^{2}$ & p-value \\ 
  \hline
(A) & (E) & $\tau=0$ & $\tau=0$ & 0.702 & 0.721 & -0.037 & -0.002 & 4.915 & 0.027 \\ 
  (A) & (G) & $\tau=0$ & $\tau=0$ & 0.702 & 0.727 & -0.042 & -0.007 & 7.871 & 0.005 \\ 
  (A) & (H) & $\tau=0$ & $\tau=0$ & 0.702 & 0.72 & -0.035 & -0.001 & 4.174 & 0.041 \\ 
  (E) & (G) & $\tau=0$ & $\tau=0$ & 0.721 & 0.727 & -0.022 & 0.012 & 0.347 & 0.556 \\ 
  (E) & (H) & $\tau=0$ & $\tau=0$ & 0.721 & 0.72 & -0.016 & 0.019 & 0.03 & 0.862 \\ 
  (G) & (H) & $\tau=0$ & $\tau=0$ & 0.727 & 0.72 & -0.01 & 0.024 & 0.582 & 0.446 \\ 
   \hline
\end{tabular}
\endgroup
\caption{Hypothesis Test for Proportions - Result 1} 
\end{table}

% latex table generated in R 3.5.1 by xtable 1.8-3 package
% Sat Oct 13 17:09:54 2018
\begin{table}[H]
\centering

\begingroup\small
\begin{tabular}{cccccccccc}
  \hline
$i$ & $j$ & $c_{i}$ & $c_{j}$ & $\hat{p}_{i}(c_{i})$ & $\hat{p}_{j}(c_{j})$ & $CI_{5\%}$ & $CI_{95\%}$ & $\chi^{2}$ & p-value \\ 
  \hline
(A) & (A) & $\tau=0$ & $\tau=30$ & 0.688 & 0.678 & -0.006 & 0.026 & 1.641 & 0.2 \\ 
  (B) & (B) & $\tau=0$ & $\tau=30$ & 0.688 & 0.59 & 0.082 & 0.115 & 140.542 & 0 \\ 
  (D) & (D) & $\tau=0$ & $\tau=30$ & 0.688 & 0.648 & 0.024 & 0.056 & 23.694 & 0 \\ 
  (B) & (D) & $\tau=1$ & $\tau=1$ & 0.707 & 0.766 & -0.077 & -0.041 & 41.405 & 0 \\ 
  (B) & (D) & $\tau=30$ & $\tau=30$ & 0.59 & 0.648 & -0.078 & -0.039 & 32.948 & 0 \\ 
   \hline
\end{tabular}
\endgroup
\caption{Two Population Proportions - Result~\ref{res:linenets}} 
\end{table}

%--------------------------------------

\pagebreak
%-------------------------------------------------------
\section{Probit regression model - Robustness}
\label{append:sims-stats}
\begin{table}[ht] \centering  
\footnotesize 
\begin{tabular}{@{\extracolsep{-20pt}}lD{.}{.}{-2} D{.}{.}{-2} D{.}{.}{-2} } 
\\[-1.8ex]\hline 
\hline \\[-1.8ex] 
 & \multicolumn{3}{c}{Dep. Variable: probability of emergence of less biased consensus ($\hat{p}_G$)} \\ 
\cline{2-4} 
 & \multicolumn{1}{c}{Pooled} & \multicolumn{1}{c}{Pooled ($\theta=0.2$)} & \multicolumn{1}{c}{Pooled ($\theta=0.8$)} \\ 
\hline \\[-1.8ex] 
 Partisan centrality advantage (PCA) & 1.05^{***}$ $(0.03) & 1.06^{***}$ $(0.03) & 1.23^{***}$ $(0.08) \\ 
  Open mind (OM) & -0.40^{***}$ $(0.01) & -0.25^{***}$ $(0.02) & -0.93^{***}$ $(0.04) \\ 
  First impression (FI) & 1.71^{***}$ $(0.02) & 1.63^{***}$ $(0.03) & 1.85^{***}$ $(0.04) \\ 
  PCA $\times$ FI & 0.37^{***}$ $(0.04) & 0.18^{**}$ $(0.06) & 0.37^{***}$ $(0.07) \\ 
  PCA $\times$ OM & 0.52^{***}$ $(0.03) & 0.41^{***}$ $(0.04) & 0.76^{***}$ $(0.08) \\ 
  OM $\times$ FI & -0.29^{***}$ $(0.02) & -0.27^{***}$ $(0.03) & 0.02$ $(0.05) \\ 
  $\mathbbm{1}\{\tau=1\}$ & 0.46^{***}$ $(0.01) & 0.84^{***}$ $(0.01) & -0.51^{***}$ $(0.02) \\ 
  $\mathbbm{1}\{\tau=10\}$ & 0.19^{***}$ $(0.01) & 0.75^{***}$ $(0.01) & -1.06^{***}$ $(0.02) \\ 
  $\mathbbm{1}\{\tau=30\}$ & 0.07^{***}$ $(0.01) & 0.68^{***}$ $(0.01) & -1.23^{***}$ $(0.02) \\ 
  $\mathbbm{1}\{n=2\}$ & 0.47^{***}$ $(0.02) & -0.02$ $(0.02) & 1.47^{***}$ $(0.05) \\ 
  $\mathbbm{1}\{n=3\}$ & 0.49^{***}$ $(0.02) & -0.09^{***}$ $(0.02) & 1.68^{***}$ $(0.05) \\ 
  $\mathbbm{1}\{n=4\}$ & 0.40^{***}$ $(0.02) & -0.25^{***}$ $(0.02) & 1.74^{***}$ $(0.05) \\ 
  $\mathbbm{1}\{G=(B)\}$ & -1.15^{***}$ $(0.02) & -0.94^{***}$ $(0.02) & -1.57^{***}$ $(0.03) \\ 
  $\mathbbm{1}\{G=(D)\}$ & -0.84^{***}$ $(0.02) & -0.52^{***}$ $(0.02) & -1.44^{***}$ $(0.03) \\ 
  $\mathbbm{1}\{G=(E)\}$ & -0.96^{***}$ $(0.02) & -0.73^{***}$ $(0.02) & -1.44^{***}$ $(0.03) \\ 
  $\mathbbm{1}\{G=(F)\}$ & 0.02$ $(0.02) & 0.13^{***}$ $(0.02) & -0.13^{***}$ $(0.03) \\ 
  $\mathbbm{1}\{G=(H)\}$ & -1.02^{***}$ $(0.02) & -0.76^{***}$ $(0.02) & -1.54^{***}$ $(0.03) \\ 
  $\mathbbm{1}\{\theta=0.8\}$ & -0.10^{***}$ $(0.01) &  &  \\ 
 \hline \\[-1.8ex] 
Observations & \multicolumn{1}{c}{168,320} & \multicolumn{1}{c}{84,160} & \multicolumn{1}{c}{84,160} \\ 
Log Likelihood & \multicolumn{1}{c}{-67,244.40} & \multicolumn{1}{c}{-41,815.30} & \multicolumn{1}{c}{-21,296.40} \\ 
Akaike Inf. Crit. & \multicolumn{1}{c}{134,525.00} & \multicolumn{1}{c}{83,664.70} & \multicolumn{1}{c}{42,626.90} \\ 
\hline 
\hline \\[-1.8ex] 
\textit{Note:}  & \multicolumn{3}{r}{$^{*}$p$<$0.05; $^{**}$p$<$0.01; $^{***}$p$<$0.001} \\ 
\end{tabular} 
  \caption{Probit regression with pooled data} 
  \label{reg:probit-pool}
\end{table}

%%%%%%%%%%%%%%%%%%%%%%%%%%%%%%%%%%%%%%%%

\end{document}